\documentclass[11pt,a4paper]{article}

\hfuzz 3pt  
\flushbottom 


\usepackage{hyperref}
\usepackage{cite}

\usepackage{amsmath}
\usepackage{amsfonts,amsthm,amssymb}

\numberwithin{equation}{section}

\newcommand{\beqa}{\begin{eqnarray}}
\newcommand{\eeqa}{\end{eqnarray}}
\newcommand{\rf}[1]{(\ref{#1})}

\newtheorem{theorem}{Theorem}[section]
\newtheorem{proposition}{Proposition}[section]
\newtheorem{lemma}{Lemma}[section]

\newtheorem{corollary}{Corollary}[section]
\newtheorem{definition}{Definition}[section]

\newtheorem{property}{Property}[section]
{\theoremstyle{remark}
}
\newtheorem{identity}{Identity}
\setcounter{identity}{0}
%


\newcommand{\End}{\operatorname{End}}

\newcommand{\bra}[1]{\langle\,#1\,|}
\newcommand{\ket}[1]{|\,#1\,\rangle}




\newcommand{\la}{\lambda}


\textheight 220mm
\textwidth 160mm
\oddsidemargin -1mm
\flushbottom
\numberwithin{equation}{section}

\begin{document}

\begin{flushright}
LPENSL-TH-08-18
\end{flushright}

\bigskip \bigskip

\begin{center}
\textbf{\LARGE On quantum separation of variables}

\vspace{50pt}
\end{center}

\begin{center}
{\large \textbf{J.~M.~Maillet}\footnote[1]{{Univ Lyon, Ens de Lyon,
Univ Claude Bernard Lyon 1, CNRS, Laboratoire de Physique, UMR 5672, F-69342
Lyon, France; maillet@ens-lyon.fr}} {\large \textbf{\, and}} ~~ \textbf{G. Niccoli}\footnote[2]{%
{Univ Lyon, Ens de Lyon, Univ Claude Bernard Lyon 1, CNRS,
Laboratoire de Physique, UMR 5672, F-69342 Lyon, France;
giuliano.niccoli@ens-lyon.fr}}}
\end{center}


\vspace{40pt}

\begin{itemize}
\item[ ] \textbf{Abstract.}\thinspace \thinspace We present a new approach to construct the separate variables basis leading to the full characterization of the transfer matrix spectrum of quantum integrable lattice models. The basis is generated by the repeated action of the transfer matrix itself on a generically chosen state of the Hilbert space. The fusion relations for the transfer matrix, stemming from the Yang-Baxter algebra properties, provide the necessary closure relations to define the action of the transfer matrix on such a basis in terms of elementary local shifts, leading to a separate transfer matrix spectral problem.  Hence our scheme extends to the quantum case a key feature of Liouville-Arnold classical integrability framework where the complete set of conserved charges defines both the level manifold and the flows on it leading to the construction of action-angle variables. We work in the framework of the quantum inverse scattering method. As a first example of our approach, we give the construction of such a basis for models associated to  $Y(gl_n)$ and argue how it extends to their trigonometric and elliptic versions.  Then we show how our general scheme applies concretely to fundamental models associated to the $Y(gl_2)$ and $Y(gl_3)$ $R$-matrices leading to the full characterization of their spectrum. For $Y(gl_2)$ and its trigonometric deformation a particular case of our method reproduces Sklyanin's construction of separate variables. For $Y(gl_3)$ it gives new results, in particular through the proper identification of the shifts acting on the separate basis.  We stress that our method also leads to the full characterization of the spectrum of other known quantum integrable lattice models, including  in particular trigonometric and elliptic spin chains, open chains with general integrable boundaries, and further higher rank cases that we will describe in forthcoming publications. 
\end{itemize}

\begin{center}
{\em Dedicated to the memory of L. D. Faddeev}
\end{center}

\newpage

\section{Introduction}
Despite a huge literature on quantum integrable systems, the very notion of quantum integrability did not reach yet a status that can be put on the same footing as its classical counterpart. In particular there is no quantum analogue of the Liouville-Arnold theorem \cite{Arn76L}. Due to the pioneering works of Sklyanin \cite{Skl85,Skl90,Skl92,Skl92a,Skl95,Skl96}, the situation is much more advanced for the notion of separation of variables (SoV). In the classical case, Hamilton-Jacobi theory and separation of variables, simplify drastically the quadratures one has to solve while following the Liouville-Arnold scheme leading to the construction of action-angle variables. In fact, in most cases, the full resolution of a system can be explicitly achieved only when effective separate variables are known.  It should be noted however that to construct such separate variables the sole knowledge of the complete set of independent commuting (under Poisson brackets) conserved quantities at the basis of the Liouville theorem is not enough in practice; as a matter of fact, to make their construction explicit, one has to rely in general on some additional (algebraic) structures like the Lax matrix, its associated $r$-matrix and the Yang-Baxter algebra they obey, see e.g. \cite{Skl92a,Skl95}. The real breakthrough achieved by Sklyanin was to realize that in paradigmatic examples, and in particular for the integrable models associated to $gl_2$ algebra, the classical and quantum inverse scattering methods \cite{FadS78,FadST79,FadT79,Skl79,Skl79a,FadT81,Skl82,Fad82,Fad96} generically provide the necessary ingredients to construct the separate variables. Moreover this scheme  leads rather straightforwardly to the spectral curve equation and hence, in the quantum case, to the complete spectrum characterization of the transfer matrix and of the associated Hamiltonian. It should be stressed at that point that such a method provides not only the eigenvalues of the transfer matrix and of the Hamiltonian but also the construction of the complete set of associated eigenstates. This is to be compared to other methods also using the quantum inverse scattering framework and the related Yang-Baxter algebra that in general are not easily shown to reach such a level of spectrum completeness. 

Having now the quantum case in mind, the key idea of the Sklyanin approach is to identify the separate variables, say $Y_n$, as the operator zeros of some diagonalizable commuting family of operators having simple common spectrum. Within this approach it is usually given by a distinguished operator in the Yang-Baxter algebra depending on the continuous spectral parameter $\lambda$, let us call it  $B(\lambda)$, commuting for different values of $\lambda$, and such that $B(Y_n) = 0$. Then the use of the Yang-Baxter algebra permits to construct the local (conjugated) shifts acting on the coordinates $Y_n$ as the properly defined evaluation of another distinguished operator of the Yang-Baxter algebra, say $A(\lambda)$, at $\lambda = Y_n$.  It is then possible to  prove, using the Yang-Baxter algebra, that the transfer matrix acts by simple shifts on the $B$-spectrum and to determine at the same time the quantum spectral separate equations (quantum spectral curve) determining the full set of eigenvalues of the transfer matrix and of the associated Hamiltonian. In particular for quantum lattice integrable models the separate basis is identified with the eigenstates basis of $B(\la)$. This method works extremely well in numerous examples mainly associated to $gl_2$, see e.g., \cite{Skl85,Skl90,Skl92,Skl92a,Skl95,Skl96,BabBS96,Smi98a,DerKM01,DerKM03,DerKM03b,BytT06,vonGIPS06,FraSW08,AmiFOW10,NicT10,Nic10a,Nic11,FraGSW11,GroMN12,GroN12,Nic12,Nic13,Nic13a,Nic13b,GroMN14,FalN14,KitMN14,NicT15,NicT16,KitMNT17,MaiNP17,MaiNP18}, in particular in cases where the Algebraic Bethe ansatz fails. It appears however that for higher rank cases some difficulties could arise, see e.g., \cite{Skl96,MarS16}  in particular due to the fact that the identification of the needed operators $B(\la)$ and $A(\la)$ becomes more involved, making the construction of the quantum spectral curve rather non-trivial (see comments for the $gl_3$ case studied in \cite{Skl96} in the appendix A). These issues reveal that the problem of the identification of a pair of such $A(\la)$ and $B(\la)$ operators having all required properties is a cornerstone of this approach, questioning the effective applicability of the method for an arbitrary given integrable system. 

This situation motivated us to look for the construction of separate basis for generic quantum integrable lattice models that would not rely on the determination of such $B(\la)$ and $A(\la)$ operators. Moreover, our wish was to construct  a basis having the built in property that the action of the transfer matrix on it should  be given by simple local shifts, making its wave function and spectral problem separated {\it per se}. Although this idea could look a priori too ambitious, it appears that such a construction is in fact possible for all cases we have been exploring so far, and in particular for models out of reach of the standards SoV or Algebraic Bethe ansatz methods. Moreover, it turns out that it takes a rather simple and universal form as it involves the sole knowledge of the transfer matrix itself and of its fusion properties stemming from the underlying Yang-Baxter algebra. \\

The aim of the present article is to explain this construction and to show how it works concretely in some paradigmatic examples. \\

The main idea is that a separate basis can be obtained by the multiple action of the transfer matrix $T(\lambda)$  itself, evaluated in distinguished points $\xi_n$, on a generically chosen co-vector of the Hilbert space. In most quantum integrable lattice models it may be given by the following set of co-vectors:
\begin{equation}
\bra{h_1,\dots,h_n} = \bra{L} \prod_{i=1}^N T(\xi_i)^{h_i}
\label{sb1}
\end{equation}
where $i = 1, \dots, N$ and $h_i \in \{0, 1, \dots, d_i -1 \}$, the dimension of the Hilbert space being $d = \prod_{i=1}^N d_i$, the $\xi_n$ are distinguished values characterizing the representation of the quantum lattice model (in most cases they will be related to the so-called inhomogeneity parameters) such that all of them are different pairwise (i.e., $\xi_i \neq \xi_j \pm n\eta$ if $i \neq j$ with $\eta$ a characteristic constant and $n$ an integer spanning some model dependent range of relative integers), and $\bra{L}$ is a generically chosen co-vector in $\cal{H}^*$ that obviously should not be an eigenstate of the transfer matrix. In standard Heisenberg spin chains, all $d_i$ are equal to some value $n$ and $d = n^N$, $N$ being the number of lattice sites. Eventually, a slightly more general definition could be necessary:
\begin{equation}
\bra{h_1,\dots,h_n} = \bra{L} \prod_{i=1}^N  \prod_{k_i = 1}^{h_i} T(\xi_i^{(k_i)}) \ ,
\label{sb2}
\end{equation}
where $i = 1, \dots, N$ and $h_i \in \{0, 1, \dots, d_i -1 \}$, the different points $\xi_i^{(k_i)}$, $k_i = 1, \dots,d_i-1$ can be seen as shifted from the first value $\xi_i^{(1)}$ and it is understood that the corresponding factor $T(\xi_i^{(k_i)})$ is absent whenever the corresponding $h_i = 0$. Moreover $\xi_i^{(k_i)} \neq \xi_j^{(k_j)}$ for any choices of $k_i$ and $k_j$ as soon as $i \neq j$.  Obviously \eqref{sb2} reduces to \eqref{sb1} if all points $\xi_j^{(k_j)}$ for any given $j$ are equal and identified to $\xi_j$. One could even imagine some more general formula, the key idea being that the basis is generated by the repeated action of a set of conserved charges of the model at hand on a generically chosen co-vector $\bra{L}$ that should not be, as it is obvious, an eigenstate of the conserved charges $T(\xi_i)$. Let us remark that these formulae are reminiscent of the Frobenius method for generating invariant factors of a matrix, see e.g. \cite{DumFL04}.

The main consequence of the existence of such a basis with (discrete) coordinates $h_{1},...,h_{N}$ is that the wave function $\Psi _{t}(h_{1},...,h_{N})$ in these coordinates of any common eigenvector $\ket{t}$ of the set of conserved charges $T(\xi_i)$ factorizes as a product of $N$ wave functions of one variable, namely:
\begin{equation}
\Psi _{t}(h_{1},...,h_{N})\equiv \langle h_{1},...,h_{N%
}|t\rangle = \langle L  |t\rangle\prod_{i=1}^{N}t(\xi_{i})^{h_{i}}\ ,
\label{precursor-SoV-form}
\end{equation}%
where $t(\xi_i)$ is the eigenvalue of the operator $T(\xi_i)$ associated to the eigenvector $\ket{t}$. In fact this remark is at the origin of the idea of considering \eqref{sb1} as a possible basis of the Hilbert space together with the fact (see bellow) that the transfer matrix evaluated in the points $\xi_i$ seems, as needed, to act naturally by local shifts on it.  At this point, let us give some general comments about such a simple expression for the separate basis:
\begin{itemize}
\item
Except if the dimension of the Hilbert space is one, it is obvious that the set \eqref{sb1} cannot be a basis if the chosen co-vector $\bra{L}$ is  an eigenstate of the transfer matrix. Hence $\bra{L}$ should be a generic state whose orbit under the action of the conserved charges of the system span indeed a basis of $\cal{H}^*$. In particular it should be that $\langle L  |t\rangle \neq 0$ for any non zero transfer matrix eigenstate $|t\rangle$. 
\item
We will show in the next section that if \eqref{sb1} defines a basis it implies that the common spectrum of the set of conserved charges $T(\xi_i)$ is $w$-simple,  i.e. there is only one common eigenvector $\ket{t}$ of the set $T(\xi_i)$ corresponding to a set of given eigenvalues $t(\xi_i)$. It does not mean that $T(\lambda)$ is necessarily diagonalizable as there could be non trivial Jordan blocks. But these different Jordan blocks are all associated to different eigenvalues.
\item 
Although it could be rather astonishing at first sight that the set given \eqref{sb1} defines a basis of the space of states, in most known cases of quantum lattice integrable models it can be proven rather easily that it is indeed the case for generically chosen left state $\bra{L}$. Exceptions concern in fact some peculiar situations where the transfer matrix does not span a complete set of conserved charges, like in the periodic $XXZ$ Heisenberg model for which the third component of the spin $S_z$ is a conserved quantity that is not generated by the transfer matrix; other exceptions are some special choices of the co-vector $\bra{L}$, like an eigenstate of the transfer matrix or a specially constructed co-vector such that the orbit generated by the transfer matrix action stays in a subspace of $\cal{H}^*$ of strictly positive co-dimension. This property  relies in fact on rather mild properties of the representation of the quantum space of states carried out by the quantum Lax operator. In section 2 we will give an elementary proof of this fact in the example of fundamental $gl_n$ based models. 
\item 
The main advantage of a basis such as  \eqref{sb1} is that the action of the transfer matrix on it is obviously given by elementary shifts  as soon as the transfer matrix $T(\lambda)$  can be reconstructed by means of some interpolation formula in terms of its value in the points $\xi_i$ (or $\xi_j^{(k_j)}$) eventually supplemented, by the knowledge of some central element  describing the asymptotic behavior of the transfer matrix. This is in particular the case as soon as the transfer matrix is a polynomial (or a trigonometric or elliptic polynomial) in $\lambda$. In the most simple rational case this is realized if $T(\lambda)$ is a polynomial of degree $N$ in $\lambda$, its asymptotic behavior being given by some central element $T_{\infty}$ such that
\begin{equation}
T(\lambda) = T_{\infty} \prod_{i = 1}^N (\lambda - \xi_i) + \sum_{i = 1}^N T(\xi_i) \prod_{j \neq i, j=1}^N \frac{\lambda - \xi_j}{\xi_i - \xi_j} \ ,
\label{tml1}
\end{equation}
where by hypothesis in \eqref{sb1} all $\xi_j$ are different pairwise. Then for most elements of the set \eqref{sb1} the action of $T(\lambda)$ is given by elementary shifts within the same set. However, it appears immediately that one has to take care of what happens at the boundaries of the set \eqref{sb1}, namely typically when acting with $T(\xi_j)$ on a co-vector of the set \eqref{sb1} for which $h_j = d_j - 1$ already. This is precisely the place where the information coming from the Yang-Baxter algebra about the transfer matrix fusion properties enters by providing the necessary closure relations enabling us to compute this action back in terms of the co-vectors of the set \eqref{sb1}. In other words, we need to know about the structure constants of the associative and commutative algebra of conserved charges. In sections 3, 4 and 5 we will show explicitly how this works for $gl_2$ and $gl_3$ based models. In particular for $gl_2$ case we will show that a particular choice of the co-vector $\bra{L}$ just reproduces Sklyanin's separate basis. For $gl_3$ it leads to a new separate basis and to the full characterization of the spectrum (eigenvalues and eigenvectors). In particular we will show how these closure relations combined with \eqref{sb1} lead to the determination of the quantum spectral curve for these cases. 
\item
Separate basis construction  \eqref{sb1} is very reminiscent of a key property of the classical Liouville-Arnold theorem for classical integrable systems. Indeed in the classical case the complete set of conserved charges defines a level manifold, and also the tangent vectors associated with any point on it. These tangent vectors might be used to define flows going from a given point to another point on this level manifold.. In the definition \eqref{sb1} the separate basis is indeed generated by the (here discrete) flows of the conserved quantities $T(\xi_i)$. The construction \eqref{sb1} shed some new light on the classical case itself that will be considered in a separate article. 
\item
The separate basis for the transfer matrix spectral problem is generated by the transfer matrix itself, i.e., from the sole knowledge of a complete set of conserved quantities, with the additional necessary input of the closure relations stemming from the Yang-Baxter algebra and $R$-matrix representations. Those additional information determine in fact the transfer matrix spectrum. The construction \eqref{sb1} opens the way for a new definition of quantum integrability and of completeness of a given set of conserved charges: it corresponds to cases where the set \eqref{sb1} forms a basis of the Hilbert space.
\end{itemize}

In this article our aim is to give the general principles of our method and to show how it works concretely for some simple interesting models such as the quasi-periodic $XXX$ and $XXZ$ spin-1/2 chains associated to the 6-vertex $R$-matrix and then the quasi-periodic model associated to the fundamental representation of the $Y(gl_3)$ $R$-matrix. 

We would like to stress that we have already developed the same SoV program, going from the construction of the SoV basis up to the characterization of the transfer matrix spectrum as solutions to quantum spectral curves (functional equations of difference type), for some other important classes of integrable quantum models. These are the models associated to fundamental representations of the Yang-Baxter and reflection algebra for the Yangian $Y(gl_{n})$, the quantum group $U_{q}(gl_{n})$ and the t-J model. In order to show how our new SoV method works for non-fundamental models, we have applied it also to the models associated to cyclic and higher spin representations. All these new results will be soon presented in forthcoming articles. 

We are also confident that our approach can be applicable for larger classes of integrable quantum models in the framework of the quantum inverse scattering framework. This is certainly the case for models like the Izergin-Korepin model  and the Hubbard model, for which we have already implemented our basis construction. Models associated to other representations of the Yang-Baxter algebra, like non-compact or infinite dimensional ones, can be also considered using the concepts and ideas developed in the present article. These more general situations are currently under study.

This article is organized as follows. In section 2 we give the general properties of basis of the Hilbert space given by sets \eqref{sb1} or \eqref{sb2}. We also investigate on general ground the implications of \eqref{sb1} being an Hilbert space basis for the properties of the transfer matrix spectrum. Then considering the example of  fundamental model associated to an $Y(gl_n)$ rational $R$-matrix we show that the set \eqref{sb1} indeed determines a basis of $\cal{H}^*$. We also give simple arguments showing that our proof can be extended straightforwardly to the trigonometric and elliptic cases. In section 3 we consider in detail the $Y(gl_2)$ based models and make contact with the Sklyanin's construction of the separate basis in this case. In section 4 we show that these features extend to the trigonometric case, so providing new SoV complete characterization of the transfer matrix spectrum. Then in section 5 we apply our method to the  $Y(gl_3)$ case. There we give the proper identification of the shifts acting on the separate basis and show how to determine the quantum spectral curve. The full characterization of the spectrum is also given. In the last section we give some conclusions and perspectives. In the appendix, we discuss similarities and differences with respect to Sklyanin approach for the  $Y(gl_3)$ case.

\section{The quantum SoV basis from a complete set of commuting charges}
Before going to the proof that the proposal \eqref{sb1} provides an SoV basis for models of interest, we need first to show that it indeed defines a basis of the space of states for such models. We will first prove this for fundamental models associated to the $Y(gl_n)$ rational $R$-matrix. Then we will give arguments showing that \eqref{sb1} defines also a basis of the Hilbert space for fundamental  trigonometric models. The second purpose of this section is  to describe the main consequences of  \eqref{sb1} being a basis of the space of states with regards to the properties of the transfer matrix spectrum. In particular we will show that as soon as \eqref{sb1} is a basis of the space of states, the common spectrum of the charges $T(\xi_i)$ is $w$-simple. Here by w-simplicity we mean that for a given eigenvalue there exists only one eigenvector (up to trivial scalar multiplication). 
However for $Y(gl_n)$ based fundamental models with quasi-periodic boundary conditions described by some matrix $K$ we will also show that the corresponding transfer matrix is diagonalizable with simple spectrum as soon as $K$ is diagonalizable with simple spectrum. Let us start by recalling the basic definitions and properties of a separate basis for quantum integrable models that we will use in this article and by reviewing the key features of the Sklyanin approach of this problem in the framework of the quantum inverse scattering method.

\subsection{Quantum separation of variables}
Here we introduce a definition of quantum separation of variables, directly
in the framework of integrable quantum models on a finite dimensional
quantum space ${\mathcal{H}}$. Let us consider a quantum system with Hamiltonian $H \in \End ({\cal{H}})$
exhibiting a one parameter family of commuting conserved charge operators  $%
T(\lambda ) \in \End ({\cal{H}})$, hence having the two properties:
\begin{align}
i)&\,[T(\lambda ),T(\lambda ^{\prime })]{=0}\text{ \ \ }\forall \lambda
,\lambda ^{\prime }\in \mathbb{C},\text{ \ \ \ } \notag \\
ii)&\text{ \ }[T(\lambda ),H]{%
=0}\text{ \ }\forall \lambda \in \mathbb{C},  
\label{Q-integrable-1}
\end{align}%
where in fact one asks that the Hamiltonian $H$ can be generated by $T(\lambda )$. Here for simplicity we consider conserved charge operators parametrized by one spectral parameter $\lambda$. However, the framework we develop in this article can be easily extended to more general cases where the spectral parameter is not just a complex number. For quantum integrable lattice models, $T(\lambda )$ will be given in genral by the transfer matrix. Moreover let us consider the case of a finite dimensional Hilbert space ${\mathcal{H}}$ realized as tensor product of $N$ local Hilbert spaces ${\mathcal{H}}_n$ associated to each lattice site $n$, namely, ${\mathcal{H}} = \otimes_{n = 1}^N {\mathcal{H}}_n$, $N$ being the number of lattice sites. We denote by $dim ({\mathcal{H}}_n) = d_n$ and $dim{\mathcal{H}} = d$ the finite dimensions of these Hilbert spaces with $d = \prod_{n = 1}^N  d_n$. Let us introduce a covector basis of $\cal{H}^*$ of the
form:%
\begin{equation}
\mathcal{S}_{L}\equiv \{\langle y_{1}^{(h_{1})},...,y{}_{N}^{(h_{%
N})}|\text{ }\forall h_{i}\in \{1,...,d_{i}\},\text{ }i\in \{1,...,%
N\}\text{ with }\prod_{n=1}^{N}d_{n}=d\},
\end{equation}%
where the $y_{i}^{(h_{i})}\in \Sigma _{i},$ some $d_{i}$ dimensional set of
complex number. In this covector basis we define the following set of 
$N$ commuting operators $Y_{n}\in \End(\mathcal{H})$ by:%
\begin{equation}
\langle y_{1}^{(h_{1})},...,y{}_{N}^{(h_{N})}|Y_{n}\equiv
y_{n}^{(h_{n})}\langle y_{1}^{(h_{1})},...,y{}_{N}^{(h_{N%
})}|
\end{equation}%
and the associated $N$ commuting shift operators $\Delta _{n}\in 
\End (\mathcal{H})$:%
\begin{equation}
\langle y_{1}^{(h_{1})},...,y_{n}^{(h_{n})},...,y{}_{N}^{(h_{%
N})}|\Delta _{n}\equiv \langle
y_{1}^{(h_{1})},...,y_{n}^{(h_{n}+1-d_{n}\delta _{h_{n},d_{n}})},...,y{}_{%
N}^{(h_{N})}|,
\end{equation}%
for cyclic type representations and 2$N$ shift operators $\Delta
_{n}^{(\pm )}\in \End(\mathcal{H})$: 
\begin{eqnarray}
\langle y_{1}^{(h_{1})},...,y_{n}^{(h_{n})},...,y{}_{N}^{(h_{%
N})}|\Delta _{n}^{(+)} &\equiv &\left( 1-\delta
_{h_{n},d_{n}}\right) \langle
y_{1}^{(h_{1})},...,y_{n}^{(h_{n}+1-d_{n}\delta _{h_{n},d_{n}})},...,y{}_{%
N}^{(h_{N})}|, \\
\langle y_{1}^{(h_{1})},...,y_{n}^{(h_{n})},...,y{}_{N}^{(h_{%
N})}|\Delta _{n}^{(-)} &\equiv &\left( 1-\delta _{h_{n},1}\right)
\langle y_{1}^{(h_{1})},...,y_{n}^{(h_{n}-1+d_{n}\delta
_{h_{n},1})},...,y{}_{N}^{(h_{N})}|,
\end{eqnarray}%
for highest weight type representations. Moreover, let us denote by $D_{n}$
the coordinate representation of the cyclic\ shift operator $\Delta _{n}$:%
\begin{equation}
D_{n}g(y_{n}^{(h)})\equiv g(y_{n}^{(h+1-d_{n}\delta _{h,d_{n}})}),
\end{equation}%
and by $D_{n}^{\left( \pm \right) }$ the coordinate representations of the
highest weight shift operators $\Delta _{n}^{\left( \pm \right) }$:%
\begin{eqnarray}
D_{n}^{\left( +\right) }g(y_{n}^{(h)}) &\equiv &\left( 1-\delta
_{h,d_{n}}\right) g(y_{n}^{(h+1-d_{n}\delta _{h,d_{n}})}), \\
D_{n}^{\left( -\right) }g(y_{n}^{(h)}) &\equiv &\left( 1-\delta
_{h,1}\right) g(y_{n}^{(h-1+d_{n}\delta _{h,1})}).
\end{eqnarray}%
Then we can rephrase Sklyanin definition as it follows:

\begin{definition} We say that $\mathcal{S}_{L}$ is a
separate variables basis (or equivalently $Y_{n}$ are a system of quantum
separate variables) for the family of commuting conserved charges $T(\lambda
)$ if and only if for any $T$-eigenvalue $t(\lambda )$ and $T$-eigenstate $%
|t\rangle $ we have:%
\begin{equation}
\Psi _{t}(h_{1},...,h_{N})\equiv \langle y_{1}^{(h_{1})},...,y{}_{%
N}^{(h_{N})}|t\rangle =\prod_{n=1}^{N%
}Q_{t}(y_{n}^{(h_{n})}),
\end{equation}
where, for all $n\in \{1,...,N\}$, $t(\lambda )$ and $Q_{t}(\lambda
)$ are solutions of separate equations in the spectrum of the separate
variables $y_{n}^{(h)}\in \Sigma _{n}$ of the type:%
\begin{equation}
F_{n}(D_{n}, t(y_{n}^{(h)}),y_{n}^{(h)})\, Q_{t}(y_{n}^{(h)})=0,
\label{SoV-general-1}
\end{equation}%
for the cyclic\ type representations or%
\begin{equation}
F_{n}(D_{n}^{\left( +\right) },D_{n}^{\left(
-\right) }, t(y_{n}^{(h)}),y_{n}^{(h)})\, Q_{t}(y_{n}^{(h)})=0,
\label{SoV-general-2}
\end{equation}%
for the highest weight type representations. Note that here the ordering of objects means that the shifts $D_{n}$ or $D_{n}^{\left( +\right) }$ and $D_{n}^{\left(-\right)}$  can act on $Q_{t}(y_{n}^{(h)})$ and also on the $t(y_{n}^{(h)})$. 
\end{definition}

These $N$ quantum separate relations are a natural quantum analogue
of the classical ones in the Hamilton-Jacobi's approach. As already pointed out by Sklyanin, a possible quantum analog definition of 
\textit{degrees of freedom} for a quantum integrable model is just the
number $N$ of quantum separate variables. In the classical
case the separate relations are used to solve the equations of motion, mainly constructing the change  to the action-angles
variables by quadrature in an additive separate form w.r.t. the separate
variables. In the quantum case, instead, these separate relations are
used to solve the spectral problem of the family of commuting conserved
charges $T(\lambda )$ by determining its eigenvalues and eigenfunctions in
the SoV basis. In particular, these separate relations are $N$
systems of discrete difference equations of maximal order $d_{n}$ for all $%
n\in \{1,...,N\}$, on the spectrum of the separate variables to be
solved for $t(\lambda )$ and $Q_{t}(\lambda )$ within a given class of
functions.

In the quantum case as in the classical case one important problem to solve
given an integrable system is to define its separate variables basis. In the
framework of the quantum inverse scattering, Sklyanin has given a procedure
to define the separate variables for integrable quantum models associated to
Yang-Baxter algebra representations for the rank $1$ and $2$ cases, for the higher rank case see e.g. \cite{Smi01}. Let us give a short review of it. The key point of the Sklyanin approach is to
exhibit a couple of commuting operator families, say $B(\lambda )\in \End (%
\mathcal{H})$ and $A(\lambda )\in \End (\mathcal{H})$, written in terms of
the generators of the Yang-Baxter algebra (the monodromy matrix elements),
such that:
\begin{itemize}
\item The commutation relations between $A(\lambda )$ and $B(\lambda )$ imply 
that $A(\lambda )$, in the operator zero $Y_{n}$ of $B(\lambda )$, is
proportional to the shift operator, $D_{n}$ or $D_{n}^{\left( +\right) }$
according to the type of representation. 
\item $A(\lambda )$, $B(\lambda )$ and
the (higher) transfer matrices of the integrable models $T_{i}(\lambda )$ (the
quantum spectral invariants) satisfy a closed difference equation of the generic form:%
\begin{equation}
\sum_{j=0}^{r+1}\prod_{a=0}^{j-1}A(\lambda -a\eta )T_{r+1-j}(\lambda
)=B(\lambda )\Xi (\lambda ),
\label{pre-spectral-curve-1}
\end{equation}%
where $r$ is the rank of the difference equation, $T_{0}(\lambda )=1$, and $\Xi (\lambda )$ is some operator constructed from the associated Yang-Baxter algebra generators, suggesting that a quantum
analog of the spectral curve equations is satisfied in the operator zero $%
Y_{n}$ of $B(\lambda )$.
\end{itemize}

Then the operator zeros $Y_{n}$ generate the quantum separate variables. Note that if one can prove that for a given integrable quantum model the number of separate variables $N$ is constant, this integer define a
natural quantum analog of the number of degree of freedom, as pointed out by
Sklyanin. Then, the separate relations take a form \eqref{SoV-general-1} or \eqref{SoV-general-2} in the spectrum of the
separate variables  under the following conditions:\\

A) $B(\lambda )$ is diagonalizable with simple spectrum. \\

B) $A(\lambda )$ effectively acts as a shift operator on the full $B(\lambda )$-spectrum, namely, it
generates the complete SoV basis starting from one of its vectors:%
\begin{equation}
\langle y_{1}^{(h_{1})},...,y{}_{N}^{(h_{N})}|=\langle
y_{1}^{(1)},...,y{}_{N}^{(1)}|\prod_{a=1}^{N%
}\prod_{k_{a}=1}^{h_{a}-1}A(y_{a}^{(k_{a})}).
\end{equation}

C) $\Xi (\lambda )$ is finite in the spectrum of the separate variables $y_{n}^{(h)}\in \Sigma _{n}$ for all $n\in \{1,...,N\}$.\\

In the rank $1$ case (essentially associated to $gl_2$ and its quantum deformations), for some Yang-Baxter algebra representations the operator families $A(\lambda )$ and $B(\lambda )$ just coincide with two elements of
the monodromy matrix and the quantum SoV basis can be defined as soon as the property
A) is satisfied, i.e. one can prove that B) and C) are satisfied in the
given representation. Some modifications or generalizations of Sklyanin's definitions of the
operator families $A(\lambda )$ and $B(\lambda )$ have eventually to be introduced to
describe more general Yang-Baxter algebra representations, as for the $8$-vertex or for the reflection algebra cases. In all these generalizations however
the existence of the SoV basis is reduced to the property A) for the new $%
B(\lambda )$. Nevertheless, one has to remark that for the rank $1$ case there are
still some integrable quantum models for which the SoV approach does not apply
even by modification of the Sklyanin's definitions. Simple examples are the cases of XXZ spin chains associated to a diagonal twist or to some very special integrable boundary conditions.

In the higher rank case
the situation is even more involved, the expressions given by Sklyanin for
the family $A(\lambda )$ and $B(\lambda )$ being eventually not polynomials in the
monodromy matrix elements. Hence, while the properties A) can be satisfied the properties  B) and C) look non-trivial and need to be proven in the given representations. Until now it is not known to us  if there exists a representation of the Yang-Baxter algebra such
that the properties B) and C) are satisfied within the Sklyanin approach. More in detail, while we can found simple fundamental representations satisfying the property A) for the $gl_3$ case, the properties B) and C) are clearly not satisfied for the family $A(\lambda )$ given by Sklyanin in this 
representation (see appendix A). This is essentially due to the fact that the operator $\Xi (\lambda )$, which for higher rank cases is no longer a polynomial of the entries of the monodromy matrix, containing inverses of these entries,  is not finite over the full spectrum of $B(\lambda)$ in the fundamental representations.  The main consequence is that the family $A(\lambda )$ doesn't realize the shift over the full $B(\lambda)$-spectrum and hence is not able to generate the $B(\lambda)$ eigenvector basis. Moreover, in such a situation, the equation \eqref{pre-spectral-curve-1} cannot lead to the quantum spectral curve equation as its right hand side does not vanish on the full $B(\lambda)$-spectrum.  

These observations provided us with a strong motivation to look for  a different and more universal   definition of the separation of variables basis to overcome these problems so far encountered in particular for the higher rank case.

\subsection{Towards an SoV basis from transfer matrices: general properties}
Although our approach can be formulated in more general terms, for simplicity, in the following, we restrict our analysis to integrable quantum models possessing a one parameter family of commuting conserved charges generated by a transfer matrix $T(\lambda )$, $\lambda \in \mathbb{C}$. Moreover, we will assume that this transfer matrix is a polynomial of finite degree in the variable $\lambda$ or a polynomial of some simple function of $\lambda$ (trigonometric or elliptic cases). Our aim in this section is to explore the consequences of \eqref{sb1} or \eqref{sb2} being a basis have on the transfer matrix spectrum. This will indeed provide us with necessary requirements to ask for the transfer matrix in order to be able to define SoV basis of the type \eqref{sb1} or \eqref{sb2}. For this purpose let us first define the notion of "basis generating" or "independence property" for the family of conserved charges as follows.
\begin{definition}
A one parameter family of commuting conserved charges $T(\lambda )$, $\lambda \in \mathbb{C}$,  acting on the Hilbert space $\cal{H}$ of finite dimension $d$ will be said to be "basis generating" or to have the "independence property"  if it satisfies the following condition: \\

$iii)$  There exist an integer decomposition of the dimension $d$ of $\cal H$ as $\prod_{n=1}^{%
N}d_{n}=d$, a covector $\langle L|$ in $\cal{H}^{*}$  and $N$ sets of complex numbers $%
\{y_{n}^{(1)},...,y_{n}^{(d_{n}-1)}\}$, $n=1, \dots, N$,  such that the set of $d$ covectors $\langle h_{1},...,h_{N}|$ defined by%
\begin{equation}
\langle h_{1},...,h_{N}|\equiv \langle L|\prod_{a=1}^{N%
}\prod_{k_{a}=1}^{h_{a}}T(y_{a}^{(k_{a})})\text{ \ for any }\{h_{1},...,h_{%
N}\}\in \otimes _{n=1}^{N}\{0,...,d_{n}-1\}
\label{precursor-SoV}
\end{equation}%
is a covector basis of ${\cal{H}}^{*}$, with the convention that if $h_a = 0$, then the corresponding product $\prod_{k_{a}=1}^{h_{a}}T(y_{a}^{(k_{a})})$ is absent and set equal to one in the above formula. 
\label{basis-generatingT}
\end{definition}
Note, that the properties i) and ii) in equation $(\ref{Q-integrable-1})$
represent a quantum analog of the classical definition of integrals of
motion in involution. We refer to the property presented in this definition as "independence
condition" as it can be seen as a quantum analog of the independence of the
maximal set of integral of motions in involution for a Liouville's
completely integrable classical system. This last property states that for a classical Hamiltonian 
system with $N$ degrees of freedom the $N$ vector fields
associated to these integral of  motion that define a level manifold are independent and that they
define almost everywhere a tangent basis on this level manifold. 
Let us remark that whenever a covector $\langle L|$ exists and satisfies this property for a given transfer matrix, it is in general not unique. Indeed, as soon as  the operator $T(\lambda )$ is invertible for some value of $\lambda$ then, using the commutativity of the family $T(\lambda )$,  the covector $\langle L| T(\lambda )$ also satisfies the same property. As already noted in the introduction, it is also obvious that the co-vector  $\langle L|$ cannot be an eigencovector of the family $T(\lambda )$ as in that case the set given by \eqref{precursor-SoV} reduces to the one-dimensional vector space generated by $\langle L|$. 
For the next considerations about the transfer matrix spectrum, we give now the detailed definition of weak simplicity (or weakly non-degenerate
spectrum).

\begin{definition} 
We say that an operator $X\in \End(\cal{H})$ is $w$-simple or has a weakly non-degenerate spectrum if and only if for any $X$-eigenvalue $k$ there exists one and only one (up to trivial multiplication by a scalar) $X$-eigenstate $|k\rangle $. For $\cal H$ finite dimensional, a matrix representing this operator in a basis of $\cal H$ is called a nonderogatory matrix. A matrix is nonderogatory if and only if its characteristic polynomial is equal to its minimal polynomial. Going to its Jordan form, it means that each eigenvalue is associated to a unique Jordan block. Hence two different Jordan blocks have different eigenvalues \cite{HorJL85}.
\label{w-simplicity}
\end{definition}

Let us remark that an operator which is diagonalizable with simple spectrum is $w$-simple, however, the $w$-simplicity does not imply that the operator is diagonalizable. Indeed, an operator which has non-trivial Jordan blocks can still be $w$-simple if any two different Jordan blocks have different eigenvalues \cite{HorJL85}. Nonderogatory matrices have a very nice property with regards to their characteristic polynomial. Let $P_X (t) = a_0 + t a_1 + t^2 a_2 + \dots + t^{d-1} a_{d-1} + t^d$ be the characteristic polynomial of a nonderogatory matrix $X$, with $d=dim (\cal H)$. Then the matrix $X$ can be transformed by a similarity transformation into the so-called companion matrix $C$ of its characteristic polynomial, namely, there exists an invertible matrix $V_X$ such that:
\begin{equation}
V_X X V_X ^{-1} = C = \left( 
\begin{array}{ccccc}
0 & 1 & 0 &  \cdots & 0 \\ 
0 & 0 & 1 & \ddots & \vdots\\ 
\vdots & \vdots & \ddots &  \ddots & \vdots \\ 
0 & 0 & \cdots & 0 & 1 \\ 
-a_0 & -a_1 &  \cdots & -a_{d-2} & - a_{d-1}
\end{array}%
\right).
\label{companion}
\end{equation}
Let us consider the canonical covector basis in $\cal H^*$ denoted by $\langle e_j|$ with $j = 1, \dots , d$ with $d=dim (\cal H)$. We have that $\langle e_j| C = \langle e_{j+1}|$ for any $j = 1, \dots, d-1$. We denote by $\langle f_j| = \langle e_j| V_X$ the transformed covector basis. Then we have the following property:
\begin{proposition}
\label{K-basis}
Let us consider any $w$-simple operator $X$ acting on $\cal H$ represented by the matrix $X$ in the canonical basis. Then there exists a covector $\langle S|$ such that the set $\langle S| X^{n-1}$, $n=1, \dots , d$ is a covector basis of $\cal H^*$. Moreover let us suppose that the dimension $d$ of $\cal H$ has the following integer decomposition, $d = \prod_{n=1}^{N}d_{n}$ in terms of $N$  integers $d_n$, then there exist $N$ commuting matrices $X_i$, $i=1, \dots ,N$ such that the set
 \begin{equation}
\bra{h_1,\dots,h_n} = \bra{S} \prod_{i=1}^N X_i^{h_i}
\label{sb1-New}
\end{equation}
where $i = 1, \dots, N$ and $h_i \in \{0, 1, \dots, d_i -1 \}$ is a covector basis of $\cal H^*$.
\end{proposition}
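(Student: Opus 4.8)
The plan is to reduce both statements to the companion-matrix normal form already recorded in the excerpt and then to reindex the powers of $X$ by a mixed-radix (positional) numbering. The conceptual content is entirely carried by the chain of equivalences $w$-simple $\Leftrightarrow$ nonderogatory $\Leftrightarrow$ similar to a companion matrix; once that is in hand, everything else is bookkeeping.

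For the first statement I would start from the similarity $V_X X V_X^{-1} = C$ guaranteed by $w$-simplicity. The companion matrix $C$ admits $\bra{e_1}$ as a cyclic covector: since $\bra{e_j}C = \bra{e_{j+1}}$ for $j = 1,\dots,d-1$, iteration gives $\bra{e_1}C^{n-1} = \bra{e_n}$ for all $n = 1,\dots,d$, so the $d$ covectors $\bra{e_1}C^{n-1}$ are exactly the canonical basis. I then set $\bra{S} := \bra{e_1}V_X = \bra{f_1}$ and compute
\begin{equation}
\bra{S} X^{n-1} = \bra{e_1} V_X X^{n-1} V_X^{-1} V_X = \bra{e_1} C^{n-1} V_X = \bra{e_n} V_X = \bra{f_n} \ .
\end{equation}
Since $V_X$ is invertible, $\{\bra{f_n}\}_{n=1}^d$ is a covector basis of $\mathcal{H}^*$, which proves the first claim.

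For the second statement, given the decomposition $d = \prod_{n=1}^N d_n$, I would simply take powers of the single operator $X$. Setting $D_0 := 1$ and $D_i := d_1 d_2 \cdots d_i$, define $X_i := X^{D_{i-1}}$ for $i = 1,\dots,N$; these commute automatically, being powers of $X$. Then
\begin{equation}
\prod_{i=1}^N X_i^{h_i} = X^{m}, \qquad m = \sum_{i=1}^N h_i\, D_{i-1} \ ,
\end{equation}
and the map $(h_1,\dots,h_N) \mapsto m$ with $0 \le h_i \le d_i - 1$ is the standard mixed-radix encoding: it is injective, and its maximal value $\sum_{i=1}^N (d_i-1) D_{i-1}$ telescopes to $D_N - 1 = d-1$, so it is a bijection onto $\{0,1,\dots,d-1\}$. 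Hence the set $\{\bra{S}\prod_i X_i^{h_i}\}$ coincides, as a set, with $\{\bra{S} X^{m}\}_{m=0}^{d-1}$, which is the basis produced in the first part (under $n = m+1$). This yields the desired covector basis \eqref{sb1-New}.

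The argument is mechanical once the companion form is available, so I do not expect a substantial obstacle; the only points requiring genuine care are verifying that $\bra{e_1}$ is cyclic for $C$ together with the pull-back identity of the first part, and checking that the mixed-radix map is a bijection onto $\{0,\dots,d-1\}$ via the telescoping count of the maximal exponent. It is worth emphasizing that the proposition is an \emph{existence} statement: the explicit choice $X_i = X^{D_{i-1}}$ suffices, and no independent or model-specific structure on the $X_i$ is needed beyond their being commuting powers of a single cyclic operator.
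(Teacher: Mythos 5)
Your proposal is correct and follows essentially the same route as the paper: the first part uses the companion-matrix similarity to set $\bra{S}=\bra{f_1}$ so that $\bra{S}X^{n-1}=\bra{f_n}$, and the second part uses the identical mixed-radix reindexing with $X_i = X^{d_1\cdots d_{i-1}}$ (the paper's $X_j = X^{\delta_j}$ with $\delta_j = \prod_{n=1}^{j-1}d_n$). Your explicit verification that $\bra{e_1}$ is cyclic for $C$ and the telescoping count showing the mixed-radix map is a bijection onto $\{0,\dots,d-1\}$ merely spell out details the paper leaves as "straightforward."
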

\begin{proof}
With the above notations it is enough to set $\langle S| = \langle f_1|$. Then the set $\langle S| X^{n-1}$, $n=1, \dots , d$  coincide with the set $\langle f_n|$ which is a covector basis by construction. To prove that there exists at least one $N$-tuple of commuting matrices satisfying the second part of the Proposition, we can proceed as it follows. 
Let us first remark that as soon as $d = \prod_{n=1}^{N}d_{n}$ any multiple index $(h_1, \dots , h_N)$ with $h_i \in \{0, 1, \dots, d_i -1 \}$ is uniquely associated to the integer $n_{ h_1, \dots , h_N} \in \{0, \dots , d-1\}$ defined as:
\begin{equation}
n_{ h_1, \dots , h_N}=\sum_{k=1}^{N} h_k  \delta_k \ ,
\end{equation}
where  $\delta_k = \prod_{n=1}^{k-1} d_n $ (with the convention that $\delta_1 = 1$) is an increasing sequence of integers determined from the dimensions $d_n$. The proof of this statement is elementary as it is sufficient to prove that the map is injective, which is rather straightforward. Then, defining the obviously commuting matrices:
\begin{equation}
 X_j = X^{\delta_j} \ ,
 \label{Kj}
 \end{equation}
 we immediately get that
 \begin{equation}
X^{n_{ h_1, \dots , h_N}}= X^{\sum_{k=1}^{N} h_k \prod_{n=1}^{k-1} d_n} =  \prod_{i=1}^N X_i^{h_i} \ ,
\end{equation}
and it just remain to apply this on the chosen covector $\langle S| = \langle f_1|$ to get the result.
\end{proof}
Let us remark at this point that the basis we just constructed enables one to obtain an explicit expression of the eigenvectors of the $w$-simple matrix $X$ from the sole knowledge of its eigenvalues, namely using the $X$-spectrum characterization from its characteristic polynomial. In fact there is the following lemma.
\begin{lemma}
\label{pre-sov-X-spectrum}Let $X$ be a $w$-simple operator in $\mathcal{H}$,
with the above notations and definitions, the set of covectors $\langle
S|X^{n-1}=\langle f_{n}|$ , $n=1,\dots ,d$, $\langle S|=\langle f_{1}|$, is
a covector basis of $\mathcal{H^{\ast }}$. Then, if $\lambda $ is an
eigenvalue of $X$, the vector $|\Lambda \rangle $ characterized by its
components $\langle f_{n}|\Lambda \rangle =\alpha \lambda ^{n-1}$, with $%
\alpha =\langle S|\Lambda \rangle \neq 0$, is the unique nonzero eigenvector
associated to $\lambda $ up to trivial multiplication by a scalar.
\end{lemma}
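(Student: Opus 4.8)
The plan is to build the candidate eigenvector directly in the covector basis $\langle f_n|=\langle S|X^{n-1}$, $n=1,\dots,d$, supplied by Proposition \ref{K-basis}, and to close the computation at the top of the tower of shifts using the Cayley--Hamilton theorem. Since the $\langle f_n|$ form a basis of $\mathcal{H}^{\ast}$, prescribing all components $\langle f_n|\Lambda\rangle=\alpha\lambda^{n-1}$ determines a unique vector $|\Lambda\rangle\in\mathcal{H}$ for each scalar $\alpha$, and this vector is nonzero whenever $\alpha\neq0$ because its first component is $\langle f_1|\Lambda\rangle=\langle S|\Lambda\rangle=\alpha$.

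To verify that $|\Lambda\rangle$ is an $X$-eigenvector for the eigenvalue $\lambda$, it suffices, again by the basis property, to check $\langle f_n|X|\Lambda\rangle=\lambda\langle f_n|\Lambda\rangle$ for each $n=1,\dots,d$. For $n=1,\dots,d-1$ this is immediate from the shift relation $\langle f_n|X=\langle S|X^{n}=\langle f_{n+1}|$, which gives $\langle f_n|X|\Lambda\rangle=\langle f_{n+1}|\Lambda\rangle=\alpha\lambda^{n}=\lambda\langle f_n|\Lambda\rangle$. The single delicate point is the boundary case $n=d$, where the shift no longer lands inside the basis and must be folded back using the characteristic polynomial $P_X(t)=a_0+ta_1+\dots+t^{d-1}a_{d-1}+t^d$. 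By Cayley--Hamilton, $X^d=-\sum_{k=0}^{d-1}a_kX^{k}$, so $\langle f_d|X=\langle S|X^d=-\sum_{k=0}^{d-1}a_k\langle f_{k+1}|$, and evaluating on $|\Lambda\rangle$ yields $-\alpha\sum_{k=0}^{d-1}a_k\lambda^{k}$. This matches the required value $\lambda\langle f_d|\Lambda\rangle=\alpha\lambda^{d}$ exactly when $P_X(\lambda)=\lambda^d+\sum_{k=0}^{d-1}a_k\lambda^k=0$, which holds by the hypothesis that $\lambda$ is an eigenvalue of $X$. I expect this closure step to be the only real obstacle, since it is precisely the place where the eigenvalue condition enters and renders the whole system of $d$ equations consistent; the other $d-1$ equations hold identically by the shift alone.

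It then remains only to settle uniqueness, which is nothing more than the $w$-simplicity of $X$ recorded in Definition \ref{w-simplicity}: by assumption each eigenvalue carries one and only one eigenvector up to a scalar, so the explicitly constructed $|\Lambda\rangle$ must be that eigenvector. Finally, the normalization $\alpha=\langle S|\Lambda\rangle\neq0$ is automatic rather than an extra assumption: were $\langle S|\Lambda\rangle=0$ for a genuine eigenvector, then $\langle f_n|\Lambda\rangle=\lambda^{n-1}\langle S|\Lambda\rangle=0$ for all $n$, forcing $|\Lambda\rangle=0$ against the basis property of the $\langle f_n|$. This closes the argument.
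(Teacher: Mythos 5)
Your proof is correct, but it runs in the opposite direction from the paper's own proof of the lemma. The paper's proof is a two-line forward argument: it starts from the eigenvector $|\Lambda\rangle$ whose existence and uniqueness are guaranteed by $w$-simplicity, computes $\langle f_{n}|\Lambda\rangle=\langle S|X^{n-1}|\Lambda\rangle=\lambda^{n-1}\langle S|\Lambda\rangle$ directly from the definition of the basis, and deduces $\langle S|\Lambda\rangle\neq 0$ from the basis property (this forward computation appears in your proof only as the closing remark settling $\alpha\neq 0$). Your main argument --- prescribing the components $\alpha\lambda^{n-1}$, verifying the eigenvector equation via the shift relation for $n\leq d-1$, and closing the system at $n=d$ with Cayley--Hamilton, the eigenvalue condition $P_{X}(\lambda)=0$ rendering the boundary equation consistent --- is precisely the ``instructive direct proof'' that the authors append \emph{after} the lemma, which they flag as the precursor of the SoV mechanism (the secular equation holding simultaneously for the operator and for its eigenvalue, in analogy with the fusion relations used later). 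So you have in effect reproduced the paper's supplementary constructive argument rather than its official proof, and combined the two into one self-contained whole. One small observation: your final paragraph actually makes the appeal to $w$-simplicity for uniqueness redundant, since it shows that \emph{any} $\lambda$-eigenvector has components $\lambda^{n-1}\langle S|\Lambda\rangle$ on the basis and is therefore automatically proportional to your constructed vector; uniqueness up to scalar follows from the basis property alone, once the basis $\langle S|X^{n-1}$ exists. What each route buys: the paper's forward proof is shorter and suffices for the lemma as stated, while your constructive version is the one that generalizes --- it is the mechanism actually deployed in the later SoV theorems, where eigenvectors are built from eigenvalue data and the closure at the boundary of the basis is supplied by fusion identities instead of the characteristic polynomial.
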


\begin{proof}
From the $w$-simple character of the operator $X$ it follows by definition
that for any of its eigenvalue $\lambda $ there exists one and only one
eigenvector $|\Lambda \rangle $, then by the definition of the basis it is
immediate to get the following chain of identities:%
\begin{equation}
\langle f_{n}|\Lambda \rangle =\langle S|X^{n-1}|\Lambda \rangle =\lambda
^{n-1}\langle S|\Lambda \rangle 
\end{equation}%
which being $\langle f_{n}|$ a covector basis implies that it must holds:%
\begin{equation}
\langle S|\Lambda \rangle \neq 0.
\end{equation}
\end{proof}

It is also instructive to present a direct proof of the fact that a vector $|\Lambda \rangle $ having components $\langle f_{n}|\Lambda \rangle = \langle S|\Lambda \rangle \lambda ^{n-1}$  on the covector basis is indeed an eigenvector of $X$ as soon as $\lambda$ is an eigenvalue.  In fact, this allows to present an interesting mechanism similar to the one that will
appear in all the following when applying the SoV method. Namely we want to
prove that for any $n=1,\dots ,d$, it holds $\langle f_{n}|X|\Lambda \rangle =\lambda
\langle f_{n}|\Lambda \rangle $. This is trivial for all values of $n$
except for $n=d$. Indeed, if $n\leq d-1$, $\langle f_{n}|X|\Lambda \rangle
=\langle f_{n+1}|\Lambda \rangle =\lambda ^{n}\langle S|\Lambda \rangle
=\lambda \langle f_{n}|\Lambda \rangle $. For $n=d$ however, the covector $\langle f_{d}|X = \langle S|X^d$  is not a member of the basis. Hence this action has to be decomposed back onto the above basis. To achieve this, one has just to use the fact that the characteristic
polynomial of $X$, $P_{X}(t)$, evaluated on the matrix $X$ vanishes,
namely, $P_{X}(X)=0$. Hence, $X^{d}=-a_{0}-a_{1}X\dots -a_{d-1}X^{d-1}$.
Therefore, one can compute $\langle f_{d}|X|\Lambda \rangle =\langle
S|X^{d}|\Lambda \rangle$ in terms of the known values $\langle
S|X^{n}|\Lambda \rangle =\lambda^{n}\langle S|\Lambda \rangle$ for 
$n=0,\dots ,d-1$. Then, $\lambda $ itself being an eigenvalue, it verifies
also $P_{X}(\lambda )=0$. Hence $\lambda ^{d}=-a_{0}-a_{1}\lambda \dots
-a_{d-1}\lambda ^{d-1}$. As a consequence, we have the following chain of
equalities: 
\begin{align*}
\langle f_{d}|X|\Lambda \rangle & =\langle S|X^{d}|\Lambda \rangle =-\langle
S|\sum_{n=0}^{d-1}a_{n}X^{n}|\Lambda \rangle \ , \\
& =-\langle S|\Lambda \rangle \sum_{n=0}^{d-1}a_{n}\lambda ^{n}=\langle S|\Lambda \rangle \lambda
^{d} =\lambda \langle f_{d}|\Lambda \rangle \ ,
\end{align*}%
hence proving that $|\Lambda \rangle $ is an eigenvector of $X$ with
eigenvalue $\lambda $. 

Let us comment that the proof just uses the fact that the secular
equation given by the vanishing of the characteristic polynomial holds true
both for the operator $X$ and for its eigenvalues. In the SoV framework, the
analogous situation follows from the fact that the fusion identities for the
transfer matrices holds true both for the transfer matrix as an operator and
for its eigenvalues. It makes the above property a precursor of the SoV
mechanism that will be used in the next sections. Let us further notice that
if $X$ is $w$-simple then any matrix commuting with $X$ is a polynomial in $X
$ of at most degree $d-1$.

Let us finally mention another interesting proof of the previous proposition which uses the Jordan block canonical form of $X$ and the computation of the determinant of confluent Vandermonde matrices instead of using as above its companion matrix $C$. 
\begin{proposition}
\label{confluent-vandermonde}
Let $X\in \End (\cal{H})$ and let us denote by $X_{J}$ an upper-triangular Jordan
form of $X$ obtained through a change of basis induced by the invertible matrix $W_{X}$:%
\begin{equation}
X=W_{X}X_{J}W_{X}^{-1}
\end{equation}%
with the following block form:%
\begin{equation}
X_{J}=\left( 
\begin{array}{cccc}
X_{J}^{\left( 1\right) } & 0 & \cdots & 0 \\ 
0 & X_{J}^{\left( 2\right) } & \ddots & 0 \\ 
0 & \ddots & \ddots & 0 \\ 
0 & 0 & \cdots & X_{J}^{\left( M\right) }%
\end{array}%
\right)
\end{equation}%
where any $X_{J}^{\left( a\right) }$ is a $n_{a}\times n_{a}$ upper-triangular Jordan block
with eigenvalue $k_{a}$, where $\sum_{a=1}^{M}n_{a}= d$. Then let us 
denote by $\langle S|$ the generic covector in $\cal H^*$ such that:%
\begin{equation}
\langle S|W_{X}=(x_{1}^{\left( 1\right) },...,x_{n_{1}}^{\left(
1\right) },x_{1}^{\left( 2\right) },...,x_{n_{2}}^{\left( 2\right)
},...,x_{1}^{\left( M\right) },...,x_{n_{M}}^{\left( M\right) })\ .
\end{equation}%
Then, let  $|s_{j} \rangle = W_{X}^{-1} |e_{j} \rangle$ be the canonical vector basis in $\cal H$ after the change of basis induced by $W_{X}$, we have: 
\begin{equation}
\det_{d}||\left( \langle S|X^{i-1} |s_{j} \rangle \right)
_{i,j\in \{1,...,d\}}|| = \prod_{a=1}^{M}\left(
x_{1}^{\left( a\right) }\right) ^{n_{a}}\prod_{1\leq a<b\leq
M}(k_{b}-k_{a})^{n_{a}n_{b}} \ .
\label{c-vandermonde}
\end{equation}%
\end{proposition}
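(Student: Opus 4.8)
The plan is to follow the route announced just before the statement, namely to pass to the Jordan basis and reduce the determinant to a confluent Vandermonde determinant. Writing $X^{i-1}=W_{X}X_{J}^{i-1}W_{X}^{-1}$ and using the definitions of $\langle S|$ and of the $|s_{j}\rangle$, each matrix entry collapses to $\langle S|X^{i-1}|s_{j}\rangle=\langle\tilde S|X_{J}^{i-1}|e_{j}\rangle$, where $\langle\tilde S|=\langle S|W_{X}=(x_{1}^{(1)},\dots,x_{n_{M}}^{(M)})$ and the $|e_{j}\rangle$ are the canonical basis vectors. Since $X_{J}$ is block diagonal, $X_{J}^{i-1}|e_{j}\rangle$ stays inside the single Jordan block containing the index $j$, so the columns of the resulting $d\times d$ matrix $M$ split into $M$ groups, one per Jordan block, and I can index a column by the pair $(a,q)$ with $a\in\{1,\dots,M\}$ the block and $q\in\{1,\dots,n_{a}\}$ the local position.

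Next I would compute the entries explicitly. Writing $X_{J}^{(a)}=k_{a}I+N$ with $N$ the nilpotent shift gives $\big((X_{J}^{(a)})^{i-1}\big)_{p,q}=\binom{i-1}{q-p}k_{a}^{\,i-1-(q-p)}$, hence the $(i,(a,q))$ entry of $M$ is $\sum_{p=1}^{q}x_{p}^{(a)}\binom{i-1}{q-p}k_{a}^{\,i-1-(q-p)}$. The key observation is that $\binom{i-1}{m}k^{\,i-1-m}=\tfrac{1}{m!}\tfrac{d^{m}}{dk^{m}}k^{\,i-1}$, so each such entry is a combination of the derivative columns $V_{m}(k_{a})$ whose $i$-th component is $\tfrac{d^{m}}{dk^{m}}k^{\,i-1}\big|_{k=k_{a}}$, for $m=0,\dots,q-1$. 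Concretely column $(a,q)$ equals $\sum_{m=0}^{q-1}\tfrac{x_{q-m}^{(a)}}{m!}\,V_{m}(k_{a})$, the coefficient of the top derivative $V_{q-1}(k_{a})$ being $x_{1}^{(a)}/(q-1)!$. Therefore $M=V\,T$, where $V$ is the confluent Vandermonde matrix with nodes $k_{a}$ of multiplicity $n_{a}$, and $T$ is block diagonal with upper-triangular blocks $T^{(a)}$ having diagonal entries $x_{1}^{(a)}/(q-1)!$. Hence $\det T=\prod_{a=1}^{M}\frac{(x_{1}^{(a)})^{n_{a}}}{\prod_{l=0}^{n_{a}-1}l!}$.

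It then remains to insert the confluent Vandermonde determinant
\begin{equation}
\det V=\prod_{a=1}^{M}\Big(\prod_{l=0}^{n_{a}-1}l!\Big)\prod_{1\le a<b\le M}(k_{b}-k_{a})^{n_{a}n_{b}},
\end{equation}
and to multiply $\det M=\det V\cdot\det T$; the factorials $\prod_{l=0}^{n_{a}-1}l!$ cancel exactly, leaving $\prod_{a=1}^{M}(x_{1}^{(a)})^{n_{a}}\prod_{1\le a<b\le M}(k_{b}-k_{a})^{n_{a}n_{b}}$, which is the claimed formula \eqref{c-vandermonde}. I expect the main obstacle to be establishing the confluent Vandermonde determinant itself, as the rest is bookkeeping: this can be done by viewing the determinant as a polynomial in the nodes $k_{1},\dots,k_{M}$, counting the order of vanishing when two nodes coalesce to fix the exponents $n_{a}n_{b}$, and pinning down the overall normalization $\prod_{l=0}^{n_{a}-1}l!$ either from the single-node case (where $V$ is triangular with diagonal $0!,1!,\dots$) or by a confluence limit from the ordinary Vandermonde determinant. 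One should also keep track of the sign convention in $(k_{b}-k_{a})$ versus $(k_{a}-k_{b})$ when ordering blocks, and note that the triangularity of $T$ in the second step is precisely what produces the clean cancellation.
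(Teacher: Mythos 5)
Your argument is correct and follows essentially the same route as the paper's (sketched) proof: pass to the Jordan basis, eliminate the variables $x_{j}^{(a)}$ with $j\neq 1$ by triangular column operations within each Jordan block --- which your factorization $M=VT$ simply packages explicitly, with $\det T=\prod_{a}(x_{1}^{(a)})^{n_{a}}/\prod_{l}l!$ --- and then invoke the confluent Vandermonde determinant, which the paper likewise defers to the literature (its reference \cite{Vav97}) rather than proving. The only cosmetic difference is normalization: the paper's reduction leaves binomially weighted columns $\binom{i-1}{q-1}k_{a}^{i-q}$ whose determinant is exactly $\prod_{a<b}(k_{b}-k_{a})^{n_{a}n_{b}}$, whereas your derivative columns introduce factorials in $\det V$ and $\det T$ that cancel precisely as you observe.
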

\begin{proof}
The proof uses standard techniques of matrix algebra \cite{HorJL85} and we give just a sketch of it. It consists first in computing the action of the powers of the matrix $X$ for each Jordan block. To compute the determinant one first show that by addition and subtractions of lines, each multiplied by adequate coefficients, the result does not depend on the variables $x_{j}^{\left( a\right)}$ for $j \neq 1$ for any $a$. Then one can perform the computation setting $x_{j}^{\left( a\right)} = \delta_{1,j} x_{1}^{\left( a\right)}$. It is then easy to extract the product of the variables $x_{1}^{\left( a\right)}$ from the determinant and to reduce to the case where for any $a$, $x_{j}^{\left( a \right)} = \delta_{1,j}$. In such a case the matrix $\left( \langle S|X^{i-1} |s_{j} \rangle \right)_{i,j\in \{1,...,d\}}$ is a confluent Vandermonde matrix which has determinant given by the products of all the ordered differences of the Jordan blocks eigenvalues raised to the power given by the product of the dimensions of the Jordan blocks, see e.g. \cite{Vav97}.
\end{proof}
Then we have the following obvious corollary:
\begin{corollary}
Let $X\in \End (\cal{H})$ be $w$-simple, hence its Jordan form is such that its eigenvalues $k_j$ are pairwise distinct. With the notations and definitions of Proposition \ref{confluent-vandermonde}, as soon as we take:%
\begin{equation}
\prod_{a=1}^{M}x_{1}^{\left( a\right) }\neq 0 \ ,
\end{equation}%
the determinant \eqref{c-vandermonde} is non zero, meaning that the set  $\langle S|X^{i-1}$ for $i\in \{1,...,d\}$ is a covector basis of $\cal H^*$.
\label{pre-sov}
\end{corollary}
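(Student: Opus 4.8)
The plan is to read the corollary straight off the explicit factorized determinant formula \eqref{c-vandermonde} of Proposition \ref{confluent-vandermonde}: it suffices to check that, under the two hypotheses, neither of the two factors on its right-hand side can vanish, and then to interpret the resulting non-vanishing determinant as the linear independence of the covectors $\langle S|X^{i-1}$.

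First I would exploit the $w$-simplicity of $X$. By Definition \ref{w-simplicity} each eigenvalue is carried by a single Jordan block, so the block eigenvalues $k_1,\dots,k_M$ of $X_J$ are pairwise distinct; hence every difference $k_b-k_a$ with $a<b$ is nonzero and the factor $\prod_{1\le a<b\le M}(k_b-k_a)^{n_a n_b}$ is nonzero. I would then handle the other factor using the hypothesis $\prod_{a=1}^M x_1^{(a)}\neq 0$, which forces each individual $x_1^{(a)}\neq 0$ and therefore makes each power $(x_1^{(a)})^{n_a}$, and thus the product $\prod_{a=1}^M (x_1^{(a)})^{n_a}$, nonzero. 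Multiplying the two factors, formula \eqref{c-vandermonde} shows that the determinant itself is nonzero.

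The final step is to convert this into the basis statement. The matrix entry $\langle S|X^{i-1}|s_j\rangle$ is the $j$-th coordinate of the covector $\langle S|X^{i-1}$ in the dual basis attached to the basis $\{|s_j\rangle\}$ of $\mathcal{H}$ introduced in Proposition \ref{confluent-vandermonde}; a nonzero determinant of the $d\times d$ matrix of these coordinates means the $d$ covectors $\langle S|X^{i-1}$, $i=1,\dots,d$, are linearly independent, so, since $\dim\mathcal{H}^\ast=d$, they constitute a covector basis of $\mathcal{H}^\ast$.

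I do not expect any genuine obstacle here, since once Proposition \ref{confluent-vandermonde} is granted the argument is purely a matter of observing that both factors are nonzero. The only point deserving a line of care is the last one: being explicit that the array in \eqref{c-vandermonde} is exactly the coordinate (change-of-basis) matrix of the family $\{\langle S|X^{i-1}\}$ expressed through $\{|s_j\rangle\}$, so that its invertibility is equivalent to the sought linear independence.
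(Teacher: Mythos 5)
Your proposal is correct and is exactly the argument the paper intends: the corollary is stated as an immediate consequence of Proposition \ref{confluent-vandermonde}, since $w$-simplicity makes the Jordan-block eigenvalues pairwise distinct (so the Vandermonde factor $\prod_{1\leq a<b\leq M}(k_{b}-k_{a})^{n_{a}n_{b}}$ is nonzero) while the hypothesis $\prod_{a=1}^{M}x_{1}^{(a)}\neq 0$ makes the other factor nonzero, and a nonvanishing determinant of the coordinate matrix $\left(\langle S|X^{i-1}|s_{j}\rangle\right)$ is equivalent to the linear independence, hence basis property, of the $d$ covectors $\langle S|X^{i-1}$. Your extra care in identifying that matrix as the coordinate matrix of the family $\{\langle S|X^{i-1}\}$ in the basis $\{|s_{j}\rangle\}$ is a useful explicit step that the paper leaves implicit.
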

Now we can state the main result concerning the $w$-simplicity of the transfer matrix $T(\lambda)$:
\begin{proposition}
Let us consider an integrable quantum model  defined on an Hilbert space of states $\cal H$ of finite dimension $d$ that admits the integer decomposition $d = \prod_{n=1}^{N}d_{n}$ in terms of $N$  integers $d_n$ with a transfer matrix operator $T(\lambda)$ satisfying the properties i) and
ii) in $(\ref{Q-integrable-1})$. \\
Then if $T(\lambda)$ satisfies the independence property iii) in Definition \ref{precursor-SoV}, the one
parameter family of commuting conserved charges $T(\lambda )$ is
$w$-simple, i.e. for any $T$-eigenvalue $t(\lambda )$ it exists
one and only one $T$-eigenstate $|t\rangle $, moreover this $T$-eigenstate
is characterized (uniquely up to a normalization) by the following separated 
wave-function:%
\begin{equation}
\Psi _{t}(h_{1},...,h_{N})\equiv \langle h_{1},...,h_{N}%
|t\rangle = \langle L|t\rangle\prod_{a=1}^{N%
}\prod_{k_{a}=1}^{h_{a}}t(y_{a}^{(k_{a})}),
\label{precursor-SoV-form2}
\end{equation}%
in the basis \rf{precursor-SoV}.\\
Conversely, let us assume that the commuting family of conserved charges $T(\lambda )$ satisfying  the properties i), ii) is $w$-simple, then there exists a
rearrangement of the conserved charges, that is there exists a family $\hat{T}(\lambda )$ function of $T(\lambda )$ and conversely $T(\lambda )$ can be reconstructed from $\hat{T}(\lambda )$, and $\hat{T}(\lambda )$ is such that $[\hat{T}%
(\lambda ),T(\mu )]=0$ and $[\hat{T}(\lambda ), \hat{T}(\mu )]=0$ for all $\lambda ,\mu \in \mathbb{C}$, satisfying the
properties i), ii) and iii) for some covector $\langle L|$ in $\cal{H}^{*}$.
\end{proposition}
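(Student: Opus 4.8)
For the forward direction, I would argue as follows. Assume $T(\lambda)$ satisfies the independence property iii). This means there exist a covector $\langle L|$ and points $y_a^{(k_a)}$ such that the set $\{\langle h_1,\dots,h_N|\}$ defined by \eqref{precursor-SoV} is a covector basis of $\mathcal{H}^*$. First I would establish $w$-simplicity. Let $t(\lambda)$ be any eigenvalue with eigenstate $|t\rangle$. Pairing the basis covectors against $|t\rangle$ and using $\langle L|\prod_a\prod_{k_a}T(y_a^{(k_a)})\,|t\rangle = \langle L|t\rangle\prod_a\prod_{k_a}t(y_a^{(k_a)})$ — which holds since each $T(y_a^{(k_a)})$ acts on the eigenvector by its eigenvalue — gives precisely the factorized wave-function \eqref{precursor-SoV-form2}. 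Since $\{\langle h_1,\dots,h_N|\}$ is a basis, an eigenvector is completely determined by its components $\langle h_1,\dots,h_N|t\rangle$; these components are fixed (up to the overall scalar $\langle L|t\rangle$) by the \emph{eigenvalue} data alone. Moreover $\langle L|t\rangle\neq 0$, for otherwise all components vanish and $|t\rangle=0$. Two eigenstates sharing the same eigenvalue function $t(\lambda)$ would then have proportional component vectors, hence be proportional: this is exactly $w$-simplicity, and the same computation exhibits the claimed separated wave-function.

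For the converse, assume the commuting family $T(\lambda)$ is $w$-simple. The goal is to produce a rearranged family $\hat T(\lambda)$ — a function of $T(\lambda)$, with $T(\lambda)$ recoverable from it — that commutes with everything and satisfies iii). Here is where I would invoke the earlier structural results. Since the family is $w$-simple, I would first reduce to a single nonderogatory operator: pick a value $\mu_0$ so that $X \equiv T(\mu_0)$ (or a generic finite linear combination of the $T(\mu_j)$, chosen to separate the distinct Jordan blocks) is itself nonderogatory, i.e.\ $w$-simple as a matrix. By Proposition \ref{K-basis} (together with Corollary \ref{pre-sov}), there is a covector $\langle S|$ such that $\langle S|X^{n-1}$, $n=1,\dots,d$, is a basis, and using the factorization $d=\prod_n d_n$ the commuting matrices $X_i = X^{\delta_i}$ yield the product basis $\bra{S}\prod_i X_i^{h_i}$ of \eqref{sb1-New}.

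The remaining task — and I expect this to be the main obstacle — is to realize this product structure through evaluations of a transfer-matrix-type family rather than through abstract powers $X^{\delta_i}$. The idea is to set $\hat T(\lambda)$ to be a polynomial family in $\lambda$ whose evaluations at suitable points $y_a^{(k_a)}$ reproduce the powers $X^{\delta_a}$, so that $\prod_a\prod_{k_a=1}^{h_a}\hat T(y_a^{(k_a)}) = \prod_a X_a^{h_a}$; then \eqref{sb1-New} becomes exactly \eqref{precursor-SoV} with $\langle L| = \langle S|$, giving iii). Concretely, since by the final remark every operator commuting with the $w$-simple $X$ is a polynomial in $X$ of degree at most $d-1$, one has enough freedom: one designs $\hat T(\lambda)$ as such a polynomial in $X$ with $\lambda$-dependent scalar coefficients, arranged so that $\hat T(y_a^{(k_a)}) = X^{\delta_a}$ at the chosen interpolation nodes while $\hat T$ retains $T(\lambda)$ in its data (so that $T$ is reconstructible). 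Commutativity $[\hat T(\lambda),T(\mu)]=0$ and $[\hat T(\lambda),\hat T(\mu)]=0$ is automatic since everything is a polynomial in the single operator $X$. The delicate points are ensuring the interpolation nodes $y_a^{(k_a)}$ can be chosen pairwise distinct and compatible with the range of exponents $h_a\in\{0,\dots,d_a-1\}$, and verifying that $T(\lambda)$ genuinely sits inside the algebra generated by $\hat T$ so the reconstruction claim holds; I would handle these by a counting/dimension argument using that $X$ generates a $d$-dimensional commutative algebra.
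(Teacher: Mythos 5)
Your proposal is correct and follows essentially the same route as the paper: the forward direction by noting that the eigenvalue data fixes every component of $|t\rangle$ on the basis (with $\langle L|t\rangle\neq 0$), and the converse by choosing $\lambda_0$ with $X=T(\lambda_0)$ nonderogatory, invoking Proposition \ref{K-basis} with the commuting powers $X_j=X^{\delta_j}$ of \eqref{Kj}, and defining $\hat T(\lambda)$ as the degree-$(N-1)$ interpolation polynomial taking the values $X^{\delta_j}$ at $N$ distinct points $\xi_j$. The ``delicate points'' you flag at the end dissolve exactly as in the paper's construction: Definition \ref{basis-generatingT} allows the nodes to coincide within each $a$, so one simply takes $y_a^{(k_a)}=\xi_a$ for all $k_a$, giving $\prod_{k_a=1}^{h_a}\hat T(y_a^{(k_a)})=X_a^{h_a}$, and since $\delta_1=1$ one has $\hat T(\xi_1)=X$, whence $T(\lambda)$, being a polynomial in $X$ by the nonderogatory property, is reconstructible from $\hat T$.
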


\begin{proof}
The first part of the Proposition is trivial, namely the knowledge of the eigenvalue $t(\lambda)$ determines completely the components of the associated eigenvector $\ket t$ on the covector basis $\langle h_{1},...,h_{N}|$, hence its unicity. The converse uses the fact that we can fix some value, say $\lambda_0$ such that the transfer matrix $T(\lambda_0) = X$ is $w$-simple. In that case any operator commuting with $X$ is a polynomial of $X$ (property of nonderogatory matrices, see \cite{HorJL85}). Hence the complete knowledge of conserved charges is contained in $X$ and its successive powers. Let us define the operator $\hat{T}(\lambda )$ as the polynomial of degree $N-1$ having the value $X_j$ defined as in \eqref{Kj} in given points $\xi_j$, $j=1, \dots , N$. It is then enough to apply Proposition (\ref{K-basis}) to  obtain the result. Moreover the family $T(\lambda )$ being a polynomial in $X = T(\lambda_0)$ can be reconstructed from the family $\hat{T}(\lambda )$.
\end{proof}
Note that the optimal decomposition of $d$ is given by its prime decomposition but here we consider any integer decomposition of $d$. However, for integrable lattice models of interest that we will consider in the following sections, we will use the prime decomposition of $d$ as it leads to quantum spectral curve equation of minimal degree. 
To conclude these general considerations, the results so far obtained show that as soon as the transfer matrix is $w$-simple there exists a precursor of an SoV basis \rf{precursor-SoV}.
It should be stressed however that to get the full SoV scheme we also need to have a way to determine the quantum spectral curve associated to it. This in fact amounts to unravel the nature of the commutative algebra of conserved charges that enable to get a characterization of the transfer matrix spectrum in terms of secular equations of degrees $d_i$, in general much smaller than the dimension $d$ of the Hilbert space. This is somehow analogous to get effectively the canonical rational form \cite{DumFL04} of the transfer matrix written in terms of companion blocks having characteristic polynomials of smaller degree compared to the characteristic polynomial. This is provided as we shall see in concrete examples in the following sections by the fusion relations satisfied by the tower of fused transfer matrices. These fusion relations are in their turn direct consequences of the Yang-Baxter algebra and hence contain the integrability properties of the model at hand. 

Let us now give some elementary example of the above basis construction for the quasi-periodic $Y(gl_n)$ fundamental models.

\subsection{The example of the quasi-periodic $Y(gl_n)$ fundamental model}

Let us consider the Yangian $gl_{n}$ $R$-matrix%
\begin{equation}
R_{a,b}(\lambda _{a}-\lambda _{b})=(\lambda _{a}-\lambda _{b})I_{a,b}+%
\mathbb{P}_{a,b}\in \End(V_{a}\otimes V_{b}),\text{ \ with }V_{a}=\mathbb{C}%
^{n}\text{, }V_{b}=\mathbb{C}^{n}\text{, }n\in \mathbb{N}^{\ast },
\end{equation}%
where $\mathbb{P}_{a,b}$ is the permutation operator on the tensor product $%
V_{a}\otimes V_{b}$, which is solution of the Yang-Baxter
equation written in $\End(V_{a}\otimes V_{b}\otimes V_{c})$:%
\begin{equation}
R_{a,b}(\lambda _{a}-\lambda _{b})R_{a,c}(\lambda _{a}-\lambda
_{c})R_{b,c}(\lambda _{b}-\lambda _{c})=R_{b,c}(\lambda _{b}-\lambda
_{c})R_{a,c}(\lambda _{a}-\lambda _{c})R_{a,b}(\lambda _{a}-\lambda _{b}),
\end{equation}%
and any matrix $K\in \End(\mathbb{C}^{n})$ is a scalar solution of the
Yang-Baxter equation with respect to it:%
\begin{equation}
R_{a,b}(\lambda _{a},\lambda _{b}|\eta )K_{a}K_{b}=K_{b}K_{a}R_{a,b}(\lambda
_{a},\lambda _{b}|\eta )\in \End(V_{a}\in V_{b}\otimes V_{c}),
\end{equation}%
i.e. it is a symmetries of the considered $R$-matrix. Then we can define the
following monodromy matrix,%
\begin{equation}
M_{a}^{(K)}(\lambda ,\{\xi _{1},...,\xi _{N}\})\equiv K_{a}R_{a,N}(\lambda
_{a}-\xi _{N})\cdots R_{a,1}(\lambda _{a}-\xi _{1}),
\end{equation}%
which satisfies the Yang-Baxter equation,%
\begin{equation}
R_{a,b}(\lambda _{a}-\lambda _{b})M_{a}^{(K)}(\lambda _{a},\{\xi
\})M_{b}^{(K)}(\lambda _{b},\{\xi \})=M_{b}^{(K)}(\lambda _{b},\{\xi
\})M_{a}^{(K)}(\lambda _{b},\{\xi \})R_{a,b}(\lambda _{a}-\lambda _{b})
\end{equation}%
in $\End(V_{a}\otimes V_{b}\otimes \mathcal{H})$, with $\mathcal{H}\equiv
\otimes _{l=1}^{N}V_{l}$ and its dimension $d=n^{N}$. Hence it defines a
representation of the Yang-Baxter algebra associated to this $R$-matrix and
the following one parameter family of commuting transfer matrices:%
\begin{equation}
T^{(K)}(\lambda ,\{\xi \})\equiv tr_{V_{a}}M_{a}^{(K)}(\lambda ,\{\xi \}).
\end{equation}%
In the above formulae, the complex parameters $\{\xi_{1},...,\xi _{N}\}$ are called {\em inhomogeneity parameters}, and we will assume in the following that  they are in generic position such that the above Yang-Baxter algebra representation is irreducible. The following proposition holds:

\begin{proposition}
\label{gln-basis} The set of covectors%
\begin{equation}
\langle h_{1},...,h_{N}|\equiv \langle S|\prod_{a=1}^{N}(T^{(K)}(\xi
_{a},\{\xi \}))^{h_{a}}\text{\ }\forall \{h_{1},...,h_{N}\}\in
\{0,...,n-1\}^{\otimes N},  \label{SoV-basis-0-YB}
\end{equation}%
defines a covector basis of $\mathcal{H}^{\ast }$ for almost any choice of the covector $\langle S|$ and of the inhomogeneity parameters $\{\xi_{1},...,\xi _{N}\}$ under the only condition that the given $K\in \End(\mathbb{C}^{n})$ is $w$-simple on $%
\mathbb{C}^{n}$. In particular, for almost all values of the inhomogeneity parameters $\{\xi
_{1},...,\xi _{N}\}$, the covector in $%
\mathcal{H^{\ast }}$ of tensor product form:%
\begin{equation}
\langle S|\equiv \bigotimes_{a=1}^{N}\langle S,a|,  \label{Tensor-S}
\end{equation}%
can be chosen as soon as we take for $\langle S,a|$ a local covector in $%
V_{a}^{\ast }$ such that%
\begin{equation}
\langle S,a|K_{a}^{h}\text{ with }h\in \{0,...,n-1\},
\end{equation}%
form a covector basis for $V_{a}$ for any $a\in \{1,...,N\}$, the existence
of $\langle S,a|$ being implied by the fact that $K$ is $w$-simple.
\end{proposition}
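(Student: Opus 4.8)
The plan is to reduce the statement to the non-vanishing of a single determinant and to evaluate it in a controlled large-inhomogeneity limit in which the commuting transfer matrices decouple into purely local copies of $K$. Whether the family \eqref{SoV-basis-0-YB} is a covector basis is equivalent to the non-vanishing of the determinant $\Delta$ of the $n^N\times n^N$ matrix of components $\langle h_1,\dots,h_N|e_I\rangle$ in a fixed canonical basis $\{\langle e_I|\}$ of $\mathcal{H}^\ast$; this $\Delta$ is a polynomial in the $\xi_a$ and in the components of $\langle S|$. Since the complement of the zero locus of a non-identically-vanishing polynomial has full measure, it suffices to exhibit a single configuration $(\langle S|,\{\xi_a\})$ at which $\Delta\neq 0$, which will simultaneously yield both the "almost any $\langle S|$" statement and, on restriction to the tensor-product form, the "almost all inhomogeneities" statement. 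I would take $\langle S|=\otimes_{a=1}^N\langle S,a|$ with $\langle S,a|K_a^{h}$, $h=0,\dots,n-1$, a local basis of $V_a^\ast$ — such covectors exist by the $w$-simplicity of $K$ and Proposition \ref{K-basis} applied to $X=K$ on $\mathbb{C}^n$ — and test the basis property against the reference family $\langle v_{h_1,\dots,h_N}|\equiv\otimes_a\langle S,a|K_a^{h_a}$, which is itself a basis of $\mathcal{H}^\ast$ as a tensor product of local bases.

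The structural heart is the behaviour of $T^{(K)}(\xi_a,\{\xi\})$ at an inhomogeneity point. Using $R_{a,a}(0)=\mathbb{P}_{a,a}$ and the trace identity $\tr_a[K_a\,\mathbb{P}_{a,a}]=K_a$ (with the result acting on the physical space $a$), I would introduce the scaling $\xi_a=c\,\zeta_a$ with fixed pairwise distinct $\zeta_a$: then each of the $N-1$ factors $R_{a,j}(c(\zeta_a-\zeta_j))$ with $j\neq a$ is proportional to the identity at leading order, so
\[
T^{(K)}(\xi_a,\{\xi\})=c^{\,N-1}\Big(\textstyle\prod_{j\neq a}(\zeta_a-\zeta_j)\Big)K_a+O(c^{\,N-2}),
\]
with $K_a$ acting only on the $a$-th tensor factor. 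Since $T^{(K)}(\xi_a,\{\xi\})$ is polynomial of degree $N-1$ in $c$, each covector $\langle h_1,\dots,h_N|$ is a covector-valued polynomial in $c$ of degree at most $D(h)\equiv(N-1)\sum_a h_a$; using that the $T^{(K)}(\xi_a,\{\xi\})$ commute and that the $K_a$ act on distinct spaces, its top-degree part is exactly
\[
c^{\,D(h)}\Big(\textstyle\prod_{a}\prod_{j\neq a}(\zeta_a-\zeta_j)^{\,h_a}\Big)\,\langle v_{h_1,\dots,h_N}|.
\]

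It then remains to extract the leading behaviour of the determinant against the $v$-basis. Each entry has $c$-degree at most $D(h)$, and the displayed top-degree part shows that the degree-$D(h)$ coefficient of the row $(h)$ has a nonzero component only along $\langle v_{h}|$. Hence in the Leibniz expansion only the identity permutation contributes to the coefficient of the top power $c^{\sum_{(h)}D(h)}$, which therefore equals the nonzero product $\prod_{(h)}\prod_a\prod_{j\neq a}(\zeta_a-\zeta_j)^{h_a}$. The determinant against the $v$-basis is thus a nonzero polynomial in $c$, so for all large $c$ the $\langle h_1,\dots,h_N|$ form a basis; consequently $\Delta\neq 0$ there as well, proving $\Delta\not\equiv 0$ and hence the genericity statements. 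The main obstacle is precisely the decoupling identity for $T^{(K)}(\xi_a,\{\xi\})$: one must check that only the local factor $K_a$ survives at leading order and that the surviving leading terms commute and live on distinct tensor factors, so that the top-degree coefficient factorizes into local contributions and cannot vanish. The remaining steps are routine degree counting in the Leibniz formula.
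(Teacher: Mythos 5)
Your proof is correct and takes essentially the same route as the paper: both reduce the basis property to the non-vanishing of a determinant that is polynomial in all parameters, then evaluate it in a large-inhomogeneity scaling limit (the paper fixes $\xi_a=a\,\xi$, you take $\xi_a=c\,\zeta_a$) where the identity $R_{0,a}(0)=\mathbb{P}_{0,a}$ makes $T^{(K)}(\xi_a,\{\xi\})$ degenerate at leading order to the purely local operator $K_a$, so that the leading coefficient is controlled by the local bases $\langle S,a|K_a^{h}$ whose existence follows from $w$-simplicity. Your only cosmetic deviation is extracting the top $c$-coefficient via a row-degree Leibniz argument against the reference basis $\langle v_{h_1,\dots,h_N}|=\otimes_a\langle S,a|K_a^{h_a}$, where the paper instead identifies the leading coefficient of the determinant in the canonical basis and factorizes it into a product of local $n\times n$ determinants.
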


\begin{proof}
Let us define the $n^{N}\times n^{N}$ matrix $\mathcal{M}\left( \langle
S|,K,\{\xi \}\right) $ with elements:%
\begin{equation}
\mathcal{M}_{i,j}\equiv \langle h_{1}(i),...,h_{N}(i)|e_{j}\rangle ,\text{ \ 
}\forall i,j\in \{1,...,n^{N}\}
\end{equation}%
where we have defined uniquely the $N$-tuple $(h_{1}(i),...,h_{N}(i))\in
\{1,...,n\}^{\otimes N}$ by:%
\begin{equation}
1+\sum_{a=1}^{N}h_{a}(i)n^{a-1}=i\in \{1,...,n^{N}\},
\end{equation}%
and $|e_{j}\rangle \in \mathcal{H}$ is the element $j\in \{1,...,n^{N}\}$ of
the elementary basis in $\mathcal{H}$. Then the condition that the set $(\ref%
{SoV-basis-0-YB})$ form a basis of covector in $\mathcal{H^{\ast }}$ is equivalent
to the condition:%
\begin{equation}
\text{det}_{n^{N}}\mathcal{M}\left( \langle S|,K,\{\xi \}\right) \neq 0.
\label{Lin-indep-SoV}
\end{equation}%
The transfer matrix $T^{(K)}(\lambda ,\{\xi \})$ is a polynomial in the
parameters of the $K$ matrix and in the inhomogeneity parameters $\{\xi
_{1},...,\xi _{N}\}$. Then the determinant $\text{det}_{n^{N}}\mathcal{M}%
\left( \langle S|,K,\{\xi \}\right) $ is itself a polynomial of these
parameters and it is moreover a polynomial in the coefficients $\langle
S|e_{j}\rangle $ of the covector $\langle S|$. Taking into account this polynomial dependence in all the variables, it is enough to prove that the
condition $(\ref{Lin-indep-SoV})$ holds for some special limit on the
parameters to prove that it is true for almost any value of the parameters
except on the zeros of the
corresponding polynomials. The transfer matrix $T^{(K)}(\lambda ,\{\xi \})$
satisfies the following identities:%
\begin{equation}
T^{(K)}(\xi _{l},\{\xi \})=R_{l,l-1}(\xi _{l}-\xi _{l-1})\cdots R_{l,1}(\xi
_{l}-\xi _{1})K_{l}R_{l,N}(\xi _{l}-\xi _{N})\cdots R_{l,l+1}(\xi _{l}-\xi
_{l+1}),
\end{equation}%
then if we chose to impose:%
\begin{equation}
\xi _{a}=a\xi \text{ \ ,}\ \forall a\in \{1,...,N\}
\end{equation}%
it follows that the $T^{(K)}(\xi _{l},\{\xi \})$ are polynomials of degree $%
N-1$ in $\xi $ for all $l\in \{1,...,N\}$:%
\begin{equation}
T^{(K)}(\xi _{l},\{\xi \})=c_{l,N-1}\xi
^{N-1}K_{l}+\sum_{a=0}^{N-2}c_{l,a}\xi ^{a}T_{l,a},\text{ with }%
c_{l,N-1}=(-1)^{N-l}(l-1)!(N-l)!,
\end{equation}%
and so the same is true for the covectors:%
\begin{equation}
\langle h_{1},...,h_{N}|\equiv \xi
^{(N-1)\sum_{a=1}^{N}h_{a}}\prod_{a=1}^{N}c_{a,N-1}^{h_{a}}\langle
S|\prod_{a=1}^{N}K_{a}^{h_{a}}+\sum_{a=0}^{-1+(N-1)\sum_{a=1}^{N}h_{a}}\xi
^{a}\langle h_{1},...,h_{N},a|,
\end{equation}%
and similarly $\text{det}_{n^{N}}\mathcal{M}\left( \langle S|,K,\{\xi
\}\right) $ is a polynomial of degree $(N-1)\sum_{j=1}^{n^{N}}%
\sum_{a=1}^{N}h_{a}(j)$ with maximal degree coefficient given by:%
\begin{equation}
\prod_{j=1}^{n^{N}}\prod_{a=1}^{N}c_{a,N-1}^{h_{a}(j)}\text{det}%
_{n^{N}}||\left( \langle S|\prod_{a=1}^{N}K_{a}^{h_{a}(i)}|e_{j}\rangle
\right) _{i,j\in \{1,...,n^{N}\}}||.
\end{equation}%
If we take $\langle S|$ of the tensor product form $(\ref{Tensor-S})$\ then
it holds, with $|e_{j}\rangle =\otimes _{a}|e_{j}(a)\rangle $:%
\begin{equation}
\text{det}_{n^{N}}||\left( \langle
S|\prod_{a=1}^{N}K_{a}^{h_{a}(i)}|e_{j}\rangle \right) _{i,j\in
\{1,...,n^{N}\}}||=\prod_{a=1}^{N}\text{det}_{n}||\left( \langle
S,a|K_{a}^{i-1}|e_{j}(a)\rangle \right) _{i,j\in \{1,...,n\}}||
\end{equation}%
Now by hypothesis $K$ is $w$-simple and this implies the existence of $%
\langle S,a|$ such that these determinants are all non-zero from the
Proposition \ref{K-basis} or Corollary \ref{pre-sov}. So that we have proven
that the leading coefficient of $\text{det}_{n^{N}}\mathcal{M}\left( \langle
S|,K,\{\xi \}\right) $ is non-zero so it is non-zero for almost any choice
of the parameters.
\end{proof}

We have already proven that the possibility to introduce such type of basis
implies that the transfer matrix spectrum is $w$-simple, we want to show
that in general the transfer matrix is diagonalizable with simple spectrum
provided we impose some further requirements on the twist matrix.

\begin{proposition}
Let us assume that $K\in \End(\mathbb{C}^{n})$ is diagonalizable with simple
spectrum on $\mathbb{C}^{n}$, then, almost for any values of the
inhomogeneities, it holds:%
\begin{equation}
\langle t|t\rangle \neq 0,
\end{equation}%
where $|t\rangle $ and $\langle t|$ are the unique eigenvector and
eigencovector associated to $t(\lambda )$ a generic eigenvalue of $%
T^{(K)}(\lambda ,\{\xi \})$ so that $T^{(K)}(\lambda ,\{\xi \})$ is
diagonalizable with simple spectrum.
\end{proposition}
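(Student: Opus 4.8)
\emph{Proof strategy.} The plan is to deduce diagonalizability from the $w$-simplicity already established, together with a non-vanishing criterion for the pairing of the left and right eigenvectors, and then to verify this non-vanishing by a degeneration of the inhomogeneities in which everything factorizes over the lattice sites. First I would record the linear-algebra criterion. By the preceding Proposition the family $T^{(K)}(\lambda,\{\xi\})$ is $w$-simple, so to each eigenvalue $t(\lambda)$ there correspond a unique right eigenvector $\ket{t}$ and a unique left eigencovector $\bra{t}$, both attached to a single Jordan block. Writing $T^{(K)}(\lambda_0)=W X_J W^{-1}$ in Jordan form for a fixed generic $\lambda_0$, the right eigenvector of a block of size $m$ is the image under $W$ of the top basis vector of that block, while the left eigencovector is obtained from its bottom basis covector by right multiplication with $W^{-1}$; since $W^{-1}W=\Id$ their dual pairing equals the Kronecker symbol between the top and bottom indices. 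Hence $\braket{t}{t}\neq 0$ if and only if the Jordan block of $t$ has size one, i.e. $T^{(K)}$ is diagonalizable at $t$. Thus it suffices to prove $\braket{t}{t}\neq 0$ for every eigenvalue and almost every choice of inhomogeneities.

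Next I would exhibit a degeneration in which the overlap is manifestly non-zero. Using the values
\begin{equation}
T^{(K)}(\xi_l,\{\xi\})=R_{l,l-1}(\xi_l-\xi_{l-1})\cdots R_{l,1}(\xi_l-\xi_1)\,K_l\,R_{l,N}(\xi_l-\xi_N)\cdots R_{l,l+1}(\xi_l-\xi_{l+1}),
\end{equation}
I normalize $\hat T_l\equiv T^{(K)}(\xi_l,\{\xi\})/\prod_{b\neq l}(\xi_l-\xi_b)$ and let the pairwise differences $\xi_l-\xi_b\to\infty$ (for instance along $\xi_a=a\xi$, $\xi\to\infty$, exactly as in Proposition \ref{gln-basis}). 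Since $R_{l,b}(x)/x=I_{l,b}+\mathbb{P}_{l,b}/x\to I_{l,b}$, each $\hat T_l$ tends to the local operator $K^{(l)}$ acting as $K$ on the site $l$ and as the identity elsewhere. The commuting family $\{K^{(l)}\}_{l=1}^{N}$ is simultaneously diagonalized by the tensor products $\ket{v_{\sigma(1)}}\ot\cdots\ot\ket{v_{\sigma(N)}}$, where $\ket{v_1},\dots,\ket{v_n}$ are the eigenvectors of $K$ and $\sigma:\{1,\dots,N\}\to\{1,\dots,n\}$; as $K$ has simple spectrum the joint eigenvalues $(\kappa_{\sigma(1)},\dots,\kappa_{\sigma(N)})$ are pairwise distinct, so the limiting family has simple spectrum, its left eigencovectors being the tensor products of the left eigencovectors $\bra{w_i}$ of $K$. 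In this limit the overlap factorizes,
\begin{equation}
\braket{t}{t}\ \longrightarrow\ \prod_{l=1}^{N}\braket{w_{\sigma(l)}}{v_{\sigma(l)}}\neq 0,
\end{equation}
the local overlaps being non-zero precisely because $K$ has simple spectrum, so that for each of its simple eigenvalues the left and right eigenvectors are never orthogonal.

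Finally I would promote this limiting statement to a generic one, and here lies the main obstacle: the eigendata $\ket{t},\bra{t}$ depend on $\{\xi\}$ only algebraically, so $\braket{t}{t}$ is not itself a polynomial and the naive ``non-zero in the limit implies non-zero generically'' reasoning cannot be applied to it directly. I would circumvent this by passing to a genuinely polynomial certificate. For a fixed generic $\mu$ the discriminant $\Delta(\{\xi\})$ of the characteristic polynomial of $T^{(K)}(\mu,\{\xi\})$ is a polynomial in the inhomogeneities, and the degeneration above shows that the $n^{N}$ limiting eigenvalue functions $t_{\sigma}(\lambda)$ are pairwise distinct (they already differ at some point $\xi_l$), hence their values $t_{\sigma}(\mu)$ are distinct for generic $\mu$; consequently the leading coefficient of $\Delta$ along the ray $\xi_a=a\xi$ is non-zero, exactly in the spirit of the leading-coefficient computation of Proposition \ref{gln-basis}. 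Therefore $\Delta\not\equiv 0$, so for almost all $\{\xi\}$ the matrix $T^{(K)}(\mu,\{\xi\})$ has $d$ distinct eigenvalues; being moreover $w$-simple it is then diagonalizable, and by the criterion of the first paragraph $\braket{t}{t}\neq 0$ for every eigenvalue. This establishes that $T^{(K)}(\lambda,\{\xi\})$ is diagonalizable with simple spectrum for almost all inhomogeneities.
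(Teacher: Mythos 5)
Your first two paragraphs follow the paper's own route essentially verbatim: the same degeneration $\xi_a=a\xi$, the same identification of the normalized charges $T^{(K)}(\xi_l,\{\xi\})/\prod_{b\neq l}(\xi_l-\xi_b)$ with the local operators $K_l$ in the limit, the same tensor-product eigenbases with pairwise distinct joint eigenvalue tuples, the same factorization of the overlap into local $K$-overlaps, and the same Jordan-block orthogonality criterion to pass from $\langle t|t\rangle\neq 0$ plus $w$-simplicity to diagonalizability with simple spectrum. (The paper is slightly more careful about the divergent normalizations, introducing $n_{t,L}(\xi)$, $n_{t,R}(\xi)$ and proving $\lim \langle t|t\rangle/(n_{t,L}n_{t,R})=1$; you gloss over this, but it is cosmetic.) Your criticism of the naive ``non-zero in the limit, hence non-zero generically'' step is fair — the paper itself settles it with a continuity/algebraicity argument — and your instinct to replace it by a polynomial certificate in the spirit of Proposition \ref{gln-basis} is a good one.

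The genuine gap is in the execution of that certificate. You claim that the leading coefficient of the discriminant $\Delta(\{\xi\})$ of $T^{(K)}(\mu,\{\xi\})$ along the ray $\xi_a=a\xi$ is non-zero because the limiting eigenvalue tuples are pairwise distinct, ``hence their values $t_{\sigma}(\mu)$ are distinct for generic $\mu$.'' This inference fails: at fixed $\mu$, after removing the central term $\xi^{N}(-1)^{N}N!\,\mathrm{tr}\,K$, the difference of two eigenvalue branches along your ray behaves at leading order as
\begin{equation}
t_{\sigma}(\mu)-t_{\sigma'}(\mu)\ \sim\ (-1)^{N-1}\,\xi^{N-1}\,N!\,\sum_{a=1}^{N}\frac{\kappa_{\sigma(a)}-\kappa_{\sigma'(a)}}{a}\,,
\end{equation}
using $c_{a,N-1}\prod_{b\neq a}\frac{b}{a-b}=N!/a$; this is independent of $\mu$, so genericity in $\mu$ buys nothing at top order, and the sum can vanish for perfectly admissible $K$. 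Concretely, take $K=\mathrm{diag}(0,1)$ (diagonalizable with simple spectrum) and $N\geq 6$: the tuples with $\kappa=1$ on the site sets $\{1\}$ and $\{2,3,6\}$ collide, since $1=\tfrac{1}{2}+\tfrac{1}{3}+\tfrac{1}{6}$. For such $K$ the top coefficient of $\Delta$ along your ray vanishes identically and your certificate proves nothing (note also that, had it worked, it would have made your first two paragraphs redundant, since $d$ distinct eigenvalues of a single matrix $T^{(K)}(\mu)$ already give diagonalizability and non-degenerate pairings). The step is repairable: run the same computation along a generic ray $\xi_a=c_a\xi$, where the leading difference is proportional to $\sum_a(\kappa_{\sigma(a)}-\kappa_{\sigma'(a)})\prod_{b\neq a}c_b$ and a generic choice of the finitely many $c_a$ separates the finitely many non-zero difference vectors; alternatively, apply the discriminant to a generic linear combination $\sum_l u_l\,\hat T_l$ of the commuting charges. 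The paper sidesteps the issue entirely because it never collapses the $N$ commuting charges into a single matrix: the separation of eigenvectors is done tuple-wise, where simplicity of the spectrum of $K$ alone suffices.
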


\begin{proof}
As we have already proven, if we impose:%
\begin{equation}
\xi _{a}=a\xi \text{ \ }\forall a\in \{1,...,N\},
\end{equation}%
then it follows that the $T^{(K)}(\xi _{l},\{\xi \})$ are polynomials of
degree $N-1$ in $\xi $ for all $l\in \{1,...,N\}$: 
\begin{equation}
T^{(K)}(\xi _{l},\{\xi \})=\xi
^{N-1}T_{l,N-1}^{(K)}+\sum_{a=0}^{N-2}c_{l,a}\xi ^{a}T_{l,a}\ ,
\end{equation}%
with%
\begin{equation}
T_{l,N-1}^{(K)}\equiv c_{l,N-1}K_{l},\text{ with }c_{l,N-1}=(-1)^{N-l}(l-1)!(N-l)!\,.
\end{equation}

Under the condition that $K$ is diagonalizable with simple spectrum on $%
\mathbb{C}^{n}$, it follows that the $T_{l,N-1}^{(K)}$ for all $l\in
\{1,...,N\}$ form a system of $N$ operators simultaneously diagonalizable
and with simple spectrum. In particular, let us denote:%
\begin{equation}
\langle K_{a},j|K_{a}=k_{j}\langle K_{a},j|\text{ \ and }K_{a}|K_{a},j%
\rangle =|K_{a},j\rangle k_{j}\text{ }\forall (a,j)\in \{1,...,N\}\times
\{1,...,n\},
\end{equation}%
where we can fix their normalization by imposing:%
\begin{equation}
\langle K_{a},r|K_{a},s\rangle =\delta _{r,s}
\end{equation}%
we have that the common left and right eigenbasis of the $T_{l,N-1}^{(K)}$
read:%
\begin{equation}
\langle t_{h_{1},...,h_{N}}|=\bigotimes_{a=1}^{N}\langle K_{a},h_{a}|\text{,
\ }|t_{h_{1},...,h_{N}}\rangle =\bigotimes_{a=1}^{N}|K_{a},h_{a}\rangle 
\text{ \ }\forall (h_{1},...,h_{N})\in \{1,...,n\}^{N},
\end{equation}%
with:%
\begin{align}
\langle t_{h_{1},...,h_{N}}|T^{(K)}(\xi _{l},\{\xi \})& =\xi ^{N-1}c_{l,N-1}%
\mathsf{k}_{h_{l}}\langle t_{h_{1},...,h_{N}}|+O(\xi ^{N-2}), \\
T^{(K)}(\xi _{l},\{\xi \})|t_{h_{1},...,h_{N}}\rangle &
=|t_{h_{1},...,h_{N}}\rangle \xi ^{N-1}c_{l,N-1}\mathsf{k}_{h_{l}}+O(\xi
^{N-2}).
\end{align}%
By using now the following interpolation formula:%
\begin{equation}
T^{(K)}(\lambda ,\{\xi \})=\text{tr\thinspace }K\text{ }\prod_{a=1}^{N}(%
\lambda -\xi _{a})+\sum_{a=1}^{N}\prod_{b\neq a,b=1}^{N}\frac{\lambda -\xi
_{b}}{\xi _{a}-\xi _{b}}\ T^{(K)}(\xi _{a},\{\xi \}),
\end{equation}%
we get that for general values of the spectral parameter $\lambda $ the
transfer matrix $T^{(K)}(\lambda ,\{\xi \})$ is a polynomial of degree $N$
in $\xi $ with the following expansion:%
\begin{equation}
T^{(K)}(\lambda ,\{\xi \})=\xi ^{N}(-1)^{N}N!\text{tr\thinspace }%
K+(-1)^{N-1}\xi ^{N-1}\sum_{a=1}^{N}\prod_{b\neq a,b=1}^{N}\frac{b}{a-b}%
T_{a,N-1}^{(K)}+\hat{T}^{(K)}(\lambda ,\xi ),
\end{equation}%
where $\hat{T}^{(K)}(\lambda ,\xi )$ is a polynomial in $\xi $ of order $N-2$%
. So the left and right states $\langle t_{h_{1},...,h_{N}}|$\ and $%
|t_{h_{1},...,h_{N}}\rangle $ are left and right eigenstates of the leading
terms in $\xi $ of $T^{(K)}(\lambda ,\{\xi \})$. Note that this implies that
for any eigenvalue $t(\lambda )$ of $T^{(K)}(\lambda ,\{\xi \})$ denoted
with $\langle t|$\ and $|t\rangle $ the associated eigencovectors and
eigenvectors there exists a unique set $(h_{1},...,h_{N})\in
\{1,...,n\}^{N}$ such that:%
\begin{equation}
\lim_{\xi \rightarrow \infty }\text{ }\xi ^{1-N}\frac{\langle t|}{%
n_{t,L}(\xi )}\,T^{(K)}(\xi _{l},\{\xi \})=c_{l,N-1}k_{h_{l}}\langle
t_{h_{1},...,h_{N}}|,
\end{equation}%
and for almost any finite $\lambda $:%
\begin{align}
\lim_{\xi \rightarrow \infty }\text{ }\xi ^{1-N}\langle
t|\,(T^{(K)}(\lambda ,\{\xi \})-\xi ^{N}(-1)^{N}N!\text{tr\thinspace }%
K)& =  \notag \\
=(-1)^{N-1}(\sum_{a=1}^{N}c_{a,N-1}& k_{h_{a}}\prod_{b\neq a,b=1}^{N}\frac{b%
}{a-b})\langle t_{h_{1},...,h_{N}}|\,,
\end{align}%
once the nonzero normalization $n_{t,L}(\xi )$ of the eigencovector is
chosen properly and similar limits for the eigenvectors. So that it has to
hold:%
\begin{equation}
\lim_{\xi \rightarrow \infty }\frac{\langle t|}{n_{t,L}(\xi )}\,=\langle
t_{h_{1},...,h_{N}}|,\text{ \ \ }\lim_{\xi \rightarrow \infty }\frac{%
|t\rangle }{n_{t,R}(\xi )}\,=|t_{h_{1},...,h_{N}}\rangle ,
\end{equation}%
which in particular implies that:%
\begin{equation}
\lim_{\xi \rightarrow \infty }\frac{\langle t|t\rangle }{n_{t,L}(\xi
)n_{t,R}(\xi )}\,=\langle t_{h_{1},...,h_{N}}|t_{h_{1},...,h_{N}}\rangle =1,
\end{equation}%
and by the continuity argument it implies our statement
\begin{equation}
\langle t|t\rangle \neq 0
\end{equation}%
almost for any values of the inhomogeneities. This statement is true for the
left and right eigenstates associated to any eigenvalue of the transfer
matrix. Together with the already proven $w$-simplicity of the transfer
matrix spectrum this implies that there is no nontrivial Jordan block
associated to any transfer matrix eigenvalue. Indeed, in a non trivial Jordan block the right and left eigenvectors are orthogonal. So the transfer matrix is diagonalizable with simple spectrum.
\end{proof}

\subsection{Further examples as deformation of the $Y(gl_{n})$ case}

Let us consider an R-matrix,%
\begin{equation}
R_{a,b}(\lambda _{a},\lambda _{b}|\eta )\in \End(V_{a}\otimes V_{b}),\text{ \
with }V_{a}=\mathbb{C}^{n}\text{, }V_{b}=\mathbb{C}^{n}\text{, }n\in \mathbb{%
N}^{\ast }
\end{equation}%
that is regular and continuous in its parameters $\lambda_{a},\lambda _{b}\in \mathbb{C},\eta \in \mathbb{C}$ (to be more precise, we consider in particular cases where the $R$-matrix is indeed a trigonometric or elliptic polynomial of these parameters), solution of the Yang-Baxter equation written in $\End(V_{a}\otimes V_{b}\otimes V_{c})$,%
\begin{equation}
R_{a,b}(\lambda _{a},\lambda _{b}|\eta )R_{a,c}(\lambda _{a},\lambda
_{c}|\eta )R_{b,c}(\lambda _{b},\lambda _{c}|\eta )=R_{b,c}(\lambda
_{b},\lambda _{c}|\eta )R_{a,c}(\lambda _{a},\lambda _{c}|\eta
)R_{a,b}(\lambda _{a},\lambda _{b}|\eta )\ ,
\end{equation}%
and let us denote by SYB$_{\eta }\subset \End(\mathbb{C}^{n})$ the set of the
scalar solution of the Yang-Baxter equation:%
\begin{equation}
R_{a,b}(\lambda _{a},\lambda _{b}|\eta )K_{a}K_{b}=K_{b}K_{a}R_{a,b}(\lambda
_{a},\lambda _{b}|\eta )\in \End(V_{a}\in V_{b}\otimes V_{c}),
\end{equation}%
for any $K\in $SYB$_{\eta }$, i.e. the set of symmetries of the considered
R-matrix. Then we can define the following monodromy matrix,%
\begin{equation}
M_{a}^{(K)}(\lambda ,\{\xi _{1},...,\xi _{N}\}|\eta )\equiv
K_{a}R_{a,N}(\lambda _{a},\xi _{N}|\eta )\cdots R_{a,1}(\lambda _{a},\xi
_{1}|\eta ),
\end{equation}%
which satisfies the Yang-Baxter equation,%
\begin{equation}
R_{a,b}(\lambda _{a},\lambda _{b}|\eta )M_{a}^{(K)}(\lambda _{a},\{\xi
\}|\eta )M_{b}^{(K)}(\lambda _{b},\{\xi \}|\eta )=M_{b}^{(K)}(\lambda
_{b},\{\xi \}|\eta )M_{a}^{(K)}(\lambda _{b},\{\xi \}|\eta )R_{a,b}(\lambda
_{a},\lambda _{b}|\eta )
\end{equation}%
in $\End(V_{a}\otimes V_{b}\otimes \mathcal{H})$, with $\mathcal{H}\equiv
\otimes _{l=1}^{N}V_{l}$ and its dimension $d=n^{N}$. Hence it defines a
representation of the Yang-Baxter algebra associated to this $R$-matrix and
the following one parameter family of commuting transfer matrices:%
\begin{equation}
T^{(K)}(\lambda ,\{\xi \}|\eta )\equiv tr_{V_{a}}M_{a}^{(K)}(\lambda ,\{\xi
\}|\eta ).
\end{equation}%
In the following of this section, we use an upper index $Y$ in the R-matrix
and the monodromy matrix and down index in the transfer matrix to evidence
that these are those associated to the rational $Y(gl_{n})$ case studied in
the previous section. Then the following proposition holds:

\begin{proposition}
\label{gln-basis copy(1)} Let us assume that there exists $\eta _{0}\in 
\mathbb{C}$, and continuous functions $f(x,\eta )\in C^{0}(\mathbb{C}^{2}%
\mathbb{)}$ and $g(x,\eta )\in C^{0}(\mathbb{C}^{2}\mathbb{)}$ such that up
to a rescaling of the parameters $\lambda _{a}$ and $\lambda _{b}$ and
trivial overall normalization, the R-matrix satisfies the following Yangian
R-matrix limit,%
\begin{equation}
\lim_{\eta \rightarrow \eta _{0}}R_{a,b}(f(\lambda _{a},\eta ),g(\lambda
_{b},\eta )|\eta )=R_{a,b}^{Y}(\lambda _{a}-\lambda _{b}),
\end{equation}%
then,%
\begin{equation}
\langle h_{1},...,h_{N}|\equiv \langle S|\prod_{a=1}^{N}(T^{(K)}(\xi
_{a},\{\xi \}|\eta ))^{h_{a}}\text{\ }\forall \{h_{1},...,h_{N}\}\in
\{0,...,n-1\}^{\otimes N},
\end{equation}%
is a covector basis of $\mathcal{H}^{\ast }$ for almost any choice of the
covector $\langle S|$, of the value of $\eta \in \mathbb{C}$ \ and of the
inhomogeneities parameters under the only condition that the given $K\in $SYB%
$_{\eta }$ is $w$-simple on $\mathbb{C}^{n}$. In particular, for almost all
the value of $\eta \in \mathbb{C}$ \ and of the inhomogeneities parameters,
the covector in $\mathcal{H^{\ast }}$ of tensor product form:%
\begin{equation}
\langle S|\equiv \bigotimes_{a=1}^{N}\langle S,a|,
\end{equation}%
can be chosen as soon as we take for $\langle S,a|$ a local covector in $%
V_{a}^{\ast }$ such that%
\begin{equation}
\langle S,a|K_{a}^{h}\text{ with }h\in \{0,...,n-1\},
\end{equation}%
form a covector basis for $V_{a}$ for any $a\in \{1,...,N\}$, the existence
of $\langle S,a|$ being implied by the fact that $K$ is $w$-simple.
\end{proposition}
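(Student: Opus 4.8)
The plan is to reduce the statement to the rational case already settled in Proposition \ref{gln-basis} by a continuity-plus-analyticity argument in the deformation parameter $\eta$. Exactly as there, I fix the decomposition $d=n^N$ and associate to the covectors the $n^N\times n^N$ matrix $\mathcal{M}(\langle S|,K,\{\xi\}|\eta)$ with entries $\langle h_1(i),\dots,h_N(i)|e_j\rangle$, where $i\leftrightarrow(h_1(i),\dots,h_N(i))\in\{1,\dots,n\}^{\otimes N}$ and $|e_j\rangle$ is the elementary basis of $\mathcal H$. The set in the statement is a covector basis of $\mathcal H^{\ast}$ if and only if $\det_{n^N}\mathcal{M}(\langle S|,K,\{\xi\}|\eta)\neq 0$. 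Since $T^{(K)}(\xi_a,\{\xi\}|\eta)$ is a trace of a finite product of $R$-matrices and of $K$, and the $R$-matrix is by hypothesis a trigonometric (resp. elliptic) polynomial of its parameters, this determinant is a polynomial in the components $\langle S|e_j\rangle$ and a trigonometric (resp. elliptic) polynomial in $\eta$ and in the inhomogeneities; in particular it is analytic in all these variables jointly. The crucial elementary fact I would then invoke is that an analytic function which is not identically zero vanishes only on a set of measure zero, so it suffices to exhibit a single point where $\det_{n^N}\mathcal{M}\neq 0$ in order to conclude non-vanishing for almost all $\langle S|$, $\eta$, and inhomogeneities.

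I would produce such a point precisely at the Yangian limit. Using $\lim_{\eta\to\eta_0}R_{a,b}(f(\lambda_a,\eta),g(\lambda_b,\eta)|\eta)=R_{a,b}^{Y}(\lambda_a-\lambda_b)$, I rescale the auxiliary spectral parameter and the inhomogeneities through $f$ and $g$ so that every $R$-matrix entering $M_a^{(K)}$ converges, as $\eta\to\eta_0$, to the corresponding rational $Y(gl_n)$ $R$-matrix. As the transfer matrix is a trace of a finite product of these $R$-matrices times $K$ — all continuous operations — each $T^{(K)}(\xi_a,\{\xi\}|\eta)$ converges to the rational transfer matrix of the previous subsection, evaluated at the rescaled inhomogeneities. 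Keeping $\langle S|$ fixed and noting that the given $w$-simple $K$ stays $w$-simple and is automatically a scalar solution of the rational $R$-matrix — every matrix is, since $K_a$ and $K_b$ commute on $V_a\otimes V_b$, whence $R^{Y}K_aK_b=K_bK_aR^{Y}$ reduces to $K_aK_b=K_bK_a$ — the matrix $\mathcal{M}(\langle S|,K,\{\xi\}|\eta)$ converges entrywise to its rational counterpart, and therefore so does $\det_{n^N}\mathcal{M}(\langle S|,K,\{\xi\}|\eta)$.

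By Proposition \ref{gln-basis} the rational determinant is non-zero for almost all choices of $\langle S|$ (for instance of the tensor-product form \eqref{Tensor-S} built from $K$ via Corollary \ref{pre-sov}) and of the rational inhomogeneities. Fixing one such choice, the limiting value of $\det_{n^N}\mathcal{M}(\cdots|\eta)$ is non-zero, so by continuity the determinant is non-zero for all $\eta$ in a punctured neighborhood of $\eta_0$; in particular it is not identically zero as a function of $(\eta,\langle S|,\{\xi\})$. The analyticity recalled above then upgrades this to non-vanishing for almost all parameter values, which is the claim. The delicate step I would write out carefully is the rescaling by $f$ and $g$: one must check that the evaluation condition $\lambda=\xi_a$ is compatible with the two a priori different substitutions, so that the deformed transfer matrices genuinely limit to the rational ones. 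In the difference-form cases of interest $f$ and $g$ coincide and the evaluation point is preserved, but the general bookkeeping of the two rescalings — together with the minor point that the fixed $K$ must remain $w$-simple near $\eta_0$, which holds because $w$-simplicity is a generic condition — is where the argument most needs attention.
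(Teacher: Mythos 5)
Your proposal is correct and follows essentially the same route as the paper's own proof: both reduce the question to the non-vanishing of $\det_{n^N}\mathcal{M}\left(\langle S|,K,\{\xi\},\eta\right)$, exploit its analytic (polynomial/trigonometric/elliptic-polynomial) dependence on all parameters, and evaluate at the Yangian limit $\eta\to\eta_0$ with the inhomogeneities rescaled through $g$ so that Proposition \ref{gln-basis} for the rational case supplies a point of non-vanishing. Your added remarks — the compatibility of the $f,g$ rescalings with the evaluation points $\lambda=\xi_a$, the triviality of the scalar Yang--Baxter constraint in the rational limit, and the persistence of $w$-simplicity of $K$ — are sound elaborations of details the paper leaves implicit rather than a different argument.
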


\begin{proof}
Let us define the $n^{N}\times n^{N}$ matrix $\mathcal{M}\left( \langle
S|,K,\{\xi \},\eta \right) $ with elements:%
\begin{equation}
\mathcal{M}_{i,j}\equiv \langle h_{1}(i),...,h_{N}(i)|e_{j}\rangle ,\text{ \ 
}\forall i,j\in \{1,...,n^{N}\}
\end{equation}%
where we have defined uniquely the $N$-tuple $(h_{1}(i),...,h_{N}(i))\in
\{1,...,n\}^{\otimes N}$ by the isomorphism introduced in the previous
section. Then the condition that the set $(\ref{SoV-basis-0-YB})$ form a
basis of covector in $\mathcal{H^{\ast }}$ is equivalent to the condition:%
\begin{equation}
\text{det}_{n^{N}}\mathcal{M}\left( \langle S|,K,\{\xi \},\eta \right) \neq
0.
\end{equation}%
Here, we assume that the $R$-matrix and so the transfer matrix are smooth
functions of their parameters: the transfer matrix is a polynomial in the
parameters of the $K$ matrix and in general a polynomial or a trigonometric
or an elliptic polynomial in the parameters $\{\xi _{1},...,\xi _{N}\}$ and $%
\eta $. Then the determinant $\text{det}_{n^{N}}\mathcal{M}\left( \langle
S|,K,\{\xi \},\eta \right) $ is itself a smooth functions of its parameters
(of the same type of the transfer matrix) and it is moreover a polynomial in
the coefficients $\langle S|e_{j}\rangle $ of the covector $\langle S|$.
Taking that into account, it is
enough to prove that the condition $(\ref{Lin-indep-SoV})$ holds for some
special limit on the parameters to prove that it is true for almost any
value of the parameters. Hence, as the following rational limit
\begin{equation}
\mathcal{M}_{Y}\left( \langle S|,K,\{\xi \}\right) =\lim_{\eta \rightarrow
\eta _{0}}\mathcal{M}\left( \langle S|,K,\{g(\xi ,\eta )\},\eta \right) ,
\end{equation}%
is satisfied as a consequence of the fact that the transfer matrix reduces to $T_{Y}^{(K)}(\lambda ,\{\xi \})$,
the $K$-twisted rational $gl_n$ transfer matrix, our current proposition is proven as a consequence of the one shown in the previous section for the rational case.
\end{proof}

We can similarly prove a statement about the diagonalizability and the
simple spectrum for these more general transfer matrices once
we impose some further requirements on the twist matrix. In fact, the
following proposition holds 

\begin{proposition}
Let us assume that there exists $\eta _{0}\in \mathbb{C}$, $f(x,\eta )\in
C^{0}(\mathbb{C}^{2}\mathbb{)}$ and $g(x,\eta )\in C^{0}(\mathbb{C}^{2}%
\mathbb{)}$ such that the R-matrix satisfies the following Yangian limit:%
\begin{equation}
\lim_{\eta \rightarrow \eta _{0}}R_{a,b}(f(\lambda _{a},\eta ),g(\lambda
_{b},\eta )|\eta )=R_{a,b}^{Y}(\lambda _{a}-\lambda _{b})\,,
\end{equation}%
to the rational $gl_n$ R-matrix, and let us assume that $K\in $SYB$_{\eta }$ is
diagonalizable with simple spectrum on $\mathbb{C}^{n}$, then, almost for
any values of the inhomogeneities and $\eta $, it holds:%
\begin{equation}
\langle t|t\rangle \neq 0,
\end{equation}%
where $|t\rangle $ and $\langle t|$ are the unique eigenvector and
eigencovector associated to $t(\lambda )$ a generic eigenvalue of $%
T^{(K)}(\lambda ,\{\xi \}|\eta )$ so that $T^{(K)}(\lambda ,\{\xi \}|\eta )$
is diagonalizable with simple spectrum.
\end{proposition}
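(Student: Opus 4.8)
The plan is to reduce the statement to the rational $Y(gl_n)$ case established just above, propagating the non-degeneracy across the deformation by a continuity argument in $\eta$, exactly in the spirit of the proof of the preceding basis proposition for deformed models. Since $K$ is now assumed diagonalizable with simple spectrum, it is in particular $w$-simple, so that preceding proposition already guarantees that the covectors $\langle h_1,\dots,h_N|$ form a basis of $\mathcal{H}^*$ for almost all $\eta$ and inhomogeneities; by our general result this forces the commuting family $T^{(K)}(\lambda,\{\xi\}|\eta)$ to be $w$-simple, so each eigenvalue $t(\lambda)$ has a unique eigenvector $|t\rangle$ and eigencovector $\langle t|$. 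It then only remains to rule out nontrivial Jordan blocks; this would follow from $\langle t|t\rangle\neq0$, since in a nontrivial Jordan block the left and right eigenvectors attached to the same eigenvalue are orthogonal, but in fact I will establish the stronger simple-spectrum property directly, which implies both.

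To this end I would argue not on the eigenvectors themselves, whose dependence on $\eta$ is delicate, but on a genuine function of the parameters. First fix a generic auxiliary value $\lambda_0$ and regard $T^{(K)}(\lambda_0,\{\xi\}|\eta)$ as a single matrix depending on $\eta$ and $\{\xi\}$. By hypothesis its entries are trigonometric or elliptic polynomials in these parameters, and under the rescaling $f,g$ the Yangian limit $\eta\to\eta_0$ sends this matrix to the rational one $T_{Y}^{(K)}(\lambda_0,\{\xi\})$. The rational proposition proved above shows that, for $K$ diagonalizable with simple spectrum, the rational transfer matrix is diagonalizable with simple spectrum; hence for generic $\lambda_0$ the matrix $T_{Y}^{(K)}(\lambda_0,\{\xi\})$ has $d$ pairwise distinct eigenvalues. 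Consequently the discriminant $\Delta(\eta,\{\xi\})$ of the characteristic polynomial of $T^{(K)}(\lambda_0,\{\xi\}|\eta)$ --- itself a trigonometric or elliptic polynomial in $\eta$ and the inhomogeneities --- does not vanish in the limit $\eta\to\eta_0$, so it is not identically zero and can vanish only on a proper analytic subvariety.

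Off this subvariety, i.e.\ for almost all $\eta$ and inhomogeneities, $T^{(K)}(\lambda_0,\{\xi\}|\eta)$ has $d$ distinct eigenvalues and is therefore diagonalizable; since the entire family commutes with $T^{(K)}(\lambda_0,\{\xi\}|\eta)$, it is simultaneously diagonalized in this eigenbasis. This yields at once that $T^{(K)}(\lambda,\{\xi\}|\eta)$ is diagonalizable with simple spectrum and, in particular, that $\langle t|t\rangle\neq0$ for every eigenvalue, which is the assertion.

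The step I expect to be the main obstacle is ensuring that the invariant carrying the non-degeneracy across the deformation is robust: one must check that the discriminant is genuinely a trigonometric or elliptic polynomial in the parameters and that it reduces to its rational counterpart under $f,g$, so that its non-vanishing in the rational limit does propagate to almost all $\eta$, just as the determinant did in the preceding basis proposition. The alternative of tracking $\langle t|t\rangle$ directly would instead require analytic perturbation theory at $\eta_0$ --- where the simplicity of the rational spectrum keeps the eigenprojectors analytic --- which is workable but considerably less clean than arguing through the discriminant.
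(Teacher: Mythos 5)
Your proposal is correct, and its overall strategy --- reduce to the rational $Y(gl_n)$ case through the Yangian limit and propagate genericity in $(\eta,\{\xi\})$ --- is exactly the paper's. But the paper's proof is a single sentence: having established the rational statement (itself via the $\xi_a=a\xi$, $\xi\to\infty$ asymptotics and a continuity argument on the normalized overlap $\langle t|t\rangle$), it simply invokes ``the continuity argument'' to transfer both diagonalizability and $\langle t|t\rangle\neq 0$ to almost all $\eta$. You implement this continuity step through a different, and frankly more robust, invariant: the discriminant $\Delta(\eta,\{\xi\})$ of the characteristic polynomial of $T^{(K)}(\lambda_0,\{\xi\}|\eta)$ at a fixed generic $\lambda_0$. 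This sidesteps the delicate points the paper's sketch leaves implicit --- eigenvectors are only defined up to normalization and do not vary continuously through eigenvalue collisions --- by replacing them with the non-vanishing of a single analytic function of the parameters, in perfect analogy with the determinant $\det_{n^N}\mathcal{M}$ used in the deformed basis proposition just above. Your route also yields slightly more: simple spectrum of the single matrix $T^{(K)}(\lambda_0)$ for almost all parameters, from which simultaneous diagonalizability of the commuting family and $\langle t|t\rangle\neq0$ follow at once, since left and right eigenvectors of a matrix with $d$ distinct eigenvalues pair non-degenerately. One small step you assert without justification, though it is immediate: that the rational transfer matrix has $d$ pairwise distinct eigenvalues at generic $\lambda_0$. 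This needs the remark that simple spectrum of the family means the eigenvalue functions $t(\lambda)$ are pairwise distinct polynomials of degree $N$, so any two of them coincide at only finitely many $\lambda$, and there are finitely many pairs; you should spell this out, as it is the hinge connecting ``simple spectrum of the family'' (the rational proposition's conclusion) to ``distinct eigenvalues of one matrix'' (what the discriminant detects).
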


\begin{proof}
We have already proven these statements for $T_{Y}^{(K)}(\lambda ,\{\xi \})$%
, the $K$-twisted rational $gl_n$ transfer matrix, the continuity argument
implies then that these statements are true also for almost any value of $%
\eta $.
\end{proof}

Some comments are in order to conclude these general considerations
linking the existence of a basis such as \eqref{sb1} and the properties of
the transfer matrix spectrum. At first let us stress that the existence of a
basis such as \eqref{sb1} is not enough to have the full SoV features. One
needs in addition to provide the necessary closure relations enabling for
the computation of the action of the transfer matrix on it for elements of
the basis associated to boundary values for the coordinates $h_{j}$, namely
if for some $j$, $h_{j}=d_{j}-1$. As we will see in the following sections
this information is provided by the fusion relations satisfied by the
transfer matrices. They will lead to the spectrum characterization in the
form of a quantum spectral curve. This characterization of the spectrum is
given in terms of secular equations of much lower degree compared to the
characteristic polynomial. In fact one gains an exponential factor going
from the exact diagonalization to the quantum spectral curve. Thus the
essence of the integrability properties of a given model is coming from the
unraveling of the non trivial structure constants of the commutative (and
associative algebra) of conserved charges.

\section{The quasi-periodic $Y(gl_2)$ fundamental model}
The integrable quantum models associated to the fundamental representations
of the Yang-Baxter algebra for the rational and trigonometric
6-vertex R-matrix are the first natural models for which it is interesting
to make explicit our SoV basis construction. Indeed, on the one
hand, in several cases the SoV approach as proposed by Sklyanin (or some
natural generalization of it) applies also for these cases, and therefore we can make a
comparison with the SoV construction that we propose. On the other hand,
we can already present cases for which the Sklyanin SoV scheme does not work directly 
while our new scheme applies.

\subsection{The $Y(gl_2)$ rational Yang-Baxter algebra}

Here, we consider the rational 6-vertex R-matrix solution of the Yang-Baxter
equation:%
\begin{equation}
R_{a,b}(\lambda )\equiv \left( 
\begin{array}{cccc}
\lambda +\eta & 0 & 0 & 0 \\ 
0 & \lambda & \eta & 0 \\ 
0 & \eta & \lambda & 0 \\ 
0 & 0 & 0 & \lambda +\eta%
\end{array}%
\right) \in \End(V_{a}\otimes V_{b})
\end{equation}%
this is just $R_{a,b}^{\left( Y\right) }(\lambda )$ of the previous section in the $n=2$ case, where we have
reintroduced a $\eta $ parameter for convenience, and $V_{a}\,$,  $V_{b}$
are bidimensional linear spaces. Then any $K\in \End (\mathbb{C}^{2})$ is a
scalar solution of the Yang-Baxter equation:%
\begin{equation}
R_{a,b}(\lambda )K_{a}K_{b}=K_{b}K_{a}R_{a,b}(\lambda )\ ,
\end{equation}%
i.e. gives a symmetry of the rational 6-vertex R-matrix. Then we can define the
following monodromy matrix:%
\begin{equation}
M_{a}^{(K)}(\lambda ,\{\xi _{1},...,\xi _{N}\})\equiv K_{a}R_{a,%
N}(\lambda _{a}-\xi _{N})\cdots R_{a,1}(\lambda _{a}-\xi
_{1})=\left( 
\begin{array}{cc}
A^{(K)}(\lambda ) & B^{(K)}(\lambda ) \\ 
C^{(K)}(\lambda ) & D^{(K)}(\lambda )%
\end{array}%
\right) ,
\end{equation}%
which satisfies the Yang-Baxter equation with the rational 6-vertex
R-matrix, so defining a fundamental spin 1/2 representation of the rational
6-vertex Yang-Baxter algebra and the following one parameter family of
commuting transfer matrices:%
\begin{equation}
T^{(K)}(\lambda ,\{\xi \})\equiv tr_{V_{a}}M_{a}^{(K)}(\lambda ,\{\xi \}).
\end{equation}%
In the following we assume that the inhomogeneity condition 
\begin{equation}
\xi _{a}\neq \xi _{b}+r\eta \text{ \ }\forall a\neq b\in \{1,...,N%
\}\,\,\text{and\thinspace \thinspace }r\in \{-1,0,1\},  \label{Inhomog-cond}
\end{equation}%
is satisfied.

\subsection{Sklyanin's construction of the SoV basis}

Let us now observe that the Sklyanin's approach to SoV applies with the
separate variables generated by the operators zeros of $B^{(K)}(\lambda )$
if and only if the twist matrix satisfies the condition:%
\begin{equation}
K=\left( 
\begin{array}{cc}
a & b\neq 0 \\ 
c & d%
\end{array}%
\right) .  \label{SoV-0-cond}
\end{equation}%
However, let us remark that given a $K\in\, \End (\mathbb{C}^{2})$ such that $%
K\neq \alpha I$, for any $\alpha \in \mathbb{C}$, either it satisfies this
condition directly or it exists a $W^{(K)}\in\, \End (\mathbb{C}^{2})$ such
that:%
\begin{equation}
\bar{K}=\left( W^{(K)}\right) ^{1}KW^{(K)}=\left( 
\begin{array}{cc}
\bar{a} & \bar{b}\neq 0 \\ 
\bar{c} & \bar{d}%
\end{array}%
\right) ,
\end{equation}%
then we can use $B^{(\bar{K})}(\lambda )$ to generate the SoV variables for%
\begin{equation}
T^{(\bar{K})}(\lambda ,\{\xi \})\equiv tr_{V_{a}}M_{a}^{(\bar{K})}(\lambda
,\{\xi \}).
\end{equation}%
Now from the identity:%
\begin{equation}
T^{(K)}(\lambda ,\{\xi \})=\mathcal{W}_{K}T^{(\bar{K})}(\lambda ,\{\xi \})%
\mathcal{W}_{K}^{-1},\text{ \ with }\mathcal{W}_{K}=\otimes _{a=1}^{\mathsf{N%
}}W_{a}^{(K)},
\end{equation}%
then it follows that the separate variables for $T^{(K)}(\lambda ,\{\xi \})$
are generated by:%
\begin{equation}
\mathcal{W}_{K}B^{(\bar{K})}(\lambda )\mathcal{W}%
_{K}^{-1}=tr_{V_{a}}[W_{a}^{(K)}\left( 
\begin{array}{cc}
0 & 0 \\ 
1 & 0%
\end{array}%
\right) _{a}\left( W^{(K)}\right) _{a}^{1}M_{a}^{(K)}(\lambda ,\{\xi \})].
\end{equation}%
The previous discussion shows that for the fundamental spin 1/2 rational
representation of the 6-vertex Yang-Baxter algebra associated to the
symmetry matrix $K\in\, \End (\mathbb{C}^{2})$ such that $K\neq \alpha I$, for
any $\alpha \in \mathbb{C}$, either one can use directly the Sklyanin's
definition of SoV or one can easily redefine the SoV generators by using the
symmetries of the rational 6-vertex matrix.

\subsection{Our approach to the SoV basis}

The general property as described in Proposition \ref{gln-basis} for the SoV basis applies to this special case $n=2$. Let us
see in this framework how it works concretely. We define,
\begin{equation}
\langle h_{1},...,h_{N}|\equiv \langle S|\prod_{a=1}^{N}(%
\frac{T^{(K)}(\xi _{a},\{\xi \})}{a(\xi _{a})})^{h_{a}}\text{ \ for any }%
\{h_{1},...,h_{N}\}\in \{0,1\}^{\otimes N}\ ,
\end{equation}%
where we have set,
\begin{equation}
a(\lambda -\eta )=d(\lambda )=\prod_{a=1}^{N}(\lambda -\xi _{a}),
\end{equation}%
to introduce the above normalization for reasons to become clear later on. If for
simplicity we take the state $\langle S|$ of the following tensor product
form:%
\begin{equation}
\langle S|=\bigotimes_{a=1}^{N}(x,y)_{a},
\end{equation}%
then for any matrix $K\in \End (\mathbb{C}^{2})$ not proportional to the identity matrix, it holds that,
\begin{equation}
(x,y)K^{i-1}\text{ for }i=1,2
\end{equation}%
form a covector basis for almost any $x,y\in \mathbb{C}$. Indeed, denoting as usual the canonical basis of $\mathbb{C}^2$ by $|e_j \rangle$ for $j=1,2$, we have,
\begin{equation}
\text{det}||\left( (x,y)K^{i-1}|e_{j} \rangle\right) _{i,j\in \{1,2\}}||=\text{det}%
\left( 
\begin{array}{cc}
x & y \\ 
ax+cy & bx+dy%
\end{array}%
\right) =bx^{2}+(d-a)xy+cy^{2}
\end{equation}%
which under the condition $K\neq \alpha I$, for any $\alpha \in \mathbb{C}$,
is non-zero for almost all the values of $x,y\in \mathbb{C}$. This also
implies that the above set of covectors is a basis for almost any choice of $x,y\in \mathbb{C}$.

\subsection{Comparison of the two SoV constructions}

Here we want to show that under some special choice of the covector $\langle
S|$, when the twist matrix $K\in \End(\mathbb{C}^{2})$ is not proportional
to the identity, our SoV left basis reduces to the SoV basis associated to
the Sklyanin's construction for the $K$ matrix satisfying $(\ref{SoV-0-cond})$ or
otherwise to its generalization described above. Let us start assuming that the 
$K$ matrix satisfies $(\ref{SoV-0-cond})$, we can write down explicitly the left
eigenbasis of $B^{(K)}(\lambda )$, i.e. the Sklyanin's SoV basis. Let us
remark that it holds:%
\begin{eqnarray}
A^{(K)}(\lambda ) &=&aA(\lambda )+bC(\lambda ),\text{ \ }B^{(K)}(\lambda
)=aB(\lambda )+bD(\lambda ), \\
C^{(K)}(\lambda ) &=&cA(\lambda )+dC(\lambda ),\text{ \ }D^{(K)}(\lambda
)=cB(\lambda )+dD(\lambda ),
\end{eqnarray}%
in terms of the elements of the original untwisted monodromy matrix. It is
well known that it holds:%
\begin{eqnarray}
\langle 0|A(\lambda ) &=&a(\lambda )\langle 0|,\text{ \ \ }\langle
0|B(\lambda )=0, \\
\langle 0|D(\lambda ) &=&d(\lambda )\langle 0|,\text{ \ \ }\langle
0|C(\lambda )\neq 0,
\end{eqnarray}%
where we have defined:%
\begin{equation}
\langle 0|=\bigotimes_{a=1}^{N}(1,0)_{a}.
\end{equation}%
So that it holds:%
\begin{equation}
\langle 0|B^{(K)}(\lambda )=bd(\lambda )\langle 0|
\end{equation}%
and by the Yang-Baxter commutation relations it follows:%
\begin{equation}
\underline{\langle h_{1},...,h_{N}|}B^{(K)}(\lambda )\equiv
b\prod_{a=1}^{N}(\lambda -\xi _{a}+h_{a}\eta )\underline{\langle
h_{1},...,h_{N}|},
\end{equation}%
where we have defined:%
\begin{equation}
\underline{\langle h_{1},...,h_{N}|}\equiv \langle 0|\prod_{a=1}^{%
N}(\frac{A^{(K)}(\xi _{a},\{\xi \})}{a(\xi _{a})})^{h_{a}}\text{ \
for any }\{h_{1},...,h_{N}\}\in \{0,1\}^{\otimes N},
\end{equation}%
so that $B^{(K)}(\lambda )$ is diagonalizable with simple spectrum. Now, let us 
prove that in fact our SoV basis coincides with the above basis, i.e. it holds:%
\begin{equation}
\langle h_{1},...,h_{N}|=\underline{\langle h_{1},...,h_{N%
}|}\text{ \ for any }\{h_{1},...,h_{N}\}\in \{0,1\}^{\otimes 
N}
\end{equation}%
as soon as we take:%
\begin{equation}
\langle S|=\langle 0| \ .
\end{equation}%
The proof is done by induction just using the identity, %
\begin{equation}
\langle 0|D^{(K)}(\xi _{a})=0\text{ \ }\forall a\in \{1,...,N\} \ ,
\label{D-zeros}
\end{equation}%
and the Yang-Baxter commutation relations:%
\begin{equation}
A^{(K)}\left( \mu \right) D^{(K)}\left( \lambda \right) =D^{(K)}\left(
\lambda \right) A^{(K)}\left( \mu \right) +\frac{\eta }{\lambda -\mu }%
(B^{(K)}\left( \lambda \right) C^{(K)}\left( \mu \right) -B^{(K)}\left( \mu
\right) C^{(K)}\left( \lambda \right) ).  \label{YB-AD-BC}
\end{equation}%
Let us assume that our statement holds for any state:%
\begin{equation}
\langle h_{1},...,h_{N}|=\underline{\langle h_{1},...,h_{N%
}|}\text{\ \ with \ }l=\sum_{a=1}^{N}h_{a}\leq N-1,
\end{equation}%
and let us show it for any state with $l+1$. To this aim we fix a state in
the above set and we denote with $\pi $ a permutation on the set $\{1,...,%
N\}$ such that:%
\begin{equation}
h_{\pi (a)}=1\text{ for }a\leq l\text{ \ and }h_{\pi (a)}=0\text{ for }l<a%
\text{ \ }
\end{equation}%
and let us take $c\in \{\pi (l+1),...,\pi (N)\}$ and let us compute:%
\begin{equation}
\underline{\langle h_{1},...,h_{N}|}T^{(K)}(\xi _{c},\{\xi
\})=\langle 0|\frac{A^{(K)}(\xi _{\pi (1)},\{\xi \})}{d(\xi _{\pi (1)}-\eta )%
}\cdots \frac{A^{(K)}(\xi _{\pi (l)},\{\xi \})}{d(\xi _{\pi (l)}-\eta )}%
(A^{(K)}+D^{(K)})(\xi _{c},\{\xi \}),
\end{equation}%
so that we have just to prove that,%
\begin{equation}
\langle 0|\frac{A^{(K)}(\xi _{\pi (1)},\{\xi \})}{d(\xi _{\pi (1)}-\eta )}%
\cdots \frac{A^{(K)}(\xi _{\pi (l)},\{\xi \})}{d(\xi _{\pi (l)}-\eta )}%
D^{(K)}(\xi _{c},\{\xi \})=0.  \label{UnWant-SoV}
\end{equation}%
From the commutation relation $(\ref{YB-AD-BC})$, the above covector can be
rewritten as it follows:%
\begin{eqnarray}
&&\langle 0|\frac{A^{(K)}(\xi _{\pi (1)},\{\xi \})}{d(\xi _{\pi (1)}-\eta )}%
\cdots \frac{A^{(K)}(\xi _{\pi (l-1)},\{\xi \})}{d(\xi _{\pi (l-1)}-\eta )}%
d^{-1}(\xi _{\pi (l)}-\eta )(D^{(K)}(\xi _{c},\{\xi \})A^{(K)}\left( \xi
_{\pi (l)},\{\xi \}\right)  \notag \\
&&+\eta (B^{(K)}\left( \xi _{c},\{\xi \}\right) C\left( \xi _{\pi (l)},\{\xi
\}\right) -B^{(K)}\left( \xi _{\pi (l)},\{\xi \}\right) C\left( \xi
_{c},\{\xi \}\right) )/(\xi _{c}-\xi _{\pi (l)}))\ ,
\end{eqnarray}%
which reduces to:%
\begin{equation}
\langle 0|\frac{A^{(K)}(\xi _{\pi (1)},\{\xi \})}{d(\xi _{\pi (1)}-\eta )}%
\cdots \frac{A^{(K)}(\xi _{\pi (l-1)},\{\xi \})}{d(\xi _{\pi (l-1)}-\eta )}%
D^{(K)}(\xi _{c},\{\xi \})\frac{A^{(K)}( \xi _{\pi (l)},\{\xi \}) }{d(\xi
_{\pi (l)}-\eta )}
\end{equation}%
once we observe that the state on the left of $B^{(K)}\left( \xi _{c},\{\xi
\}\right) $ and $B^{(K)}\left( \xi _{\pi (l)},\{\xi \}\right) $ are left
eigenstates of $B^{(K)}\left( \lambda ,\{\xi \}\right) $ with eigenvalue zeros
at $\lambda =\xi _{\pi (l)}$, $\xi _{c}$. That is we can perform the commutation of $A^{(K)}\left( \xi _{\pi (l)},\{\xi \}\right) $ with 
$D^{(K)}(\xi_{c},\{\xi \})$ in the covector \ref{UnWant-SoV} and by the same argument of $A^{(K)}\left( \xi _{\pi (r)},\{\xi\}\right) $ 
with $D^{(K)}(\xi _{c},\{\xi \})$ for any $r\leq l-1$ up to bring $D^{(K)}(\xi _{c},\{\xi \})$ completely to the left acting on $\langle 0|$ which proves 
$(\ref{UnWant-SoV})$ as a consequence of $(\ref{D-zeros})$. Let us also mention that one can prove directly, using the Yang-Baxter commutation relations, that our SoV basis is an eigenstate basis of the operator $B^{(K)}(\lambda )$.

Finally, let us assume that $K$ does not satisfy $(\ref{SoV-0-cond})$ but it
isn't proportional to the identity. Then we can apply the generalization of
the Sklyanin's construction w.r.t. the $\bar{K}$ satisfying $(\ref%
{SoV-0-cond})$. Now the generalized Sklyanin's left SoV basis is the left
eigenbasis of the operator family $\mathcal{W}_{K}B^{(\bar{K})}(\lambda )%
\mathcal{W}_{K}^{-1}$ given by:%
\begin{equation}
\underline{\langle h_{1},...,h_{N}|}\equiv \langle 0|\prod_{a=1}^{%
N}(\frac{A^{(\bar{K})}(\xi _{a},\{\xi \})}{a(\xi _{a})})^{h_{a}}%
\mathcal{W}_{K}^{-1}\text{ \ for any }\{h_{1},...,h_{N}\}\in
\{0,1\}^{\otimes N},
\end{equation}%
and it holds:%
\begin{equation}
\underline{\langle h_{1},...,h_{N}|}\mathcal{W}_{K}B^{(\bar{K}%
)}(\lambda ,\{\xi \})\mathcal{W}_{K}^{-1}\equiv \bar{b}\prod_{a=1}^{\mathsf{N%
}}(\lambda -\xi _{a}+h_{a}\eta )\underline{\langle h_{1},...,h_{N}|}%
.
\end{equation}%
Repeating the argument proven above then it holds:%
\begin{equation}
\langle 0|\prod_{a=1}^{N}(\frac{A^{(\bar{K})}(\xi _{a},\{\xi \})}{%
a(\xi _{a})})^{h_{a}}=\langle 0|\prod_{a=1}^{N}(\frac{T^{(\bar{K}%
)}(\xi _{a},\{\xi \})}{a(\xi _{a})})^{h_{a}},
\end{equation}%
and so from the identity $T^{(K)}(\lambda ,\{\xi \})=\mathcal{W}_{K}T^{(\bar{%
K})}(\lambda ,\{\xi \})\mathcal{W}_{K}^{-1}$ it follows:%
\begin{equation}
\langle 0|\prod_{a=1}^{N}(\frac{A^{(\bar{K})}(\xi _{a},\{\xi \})}{%
a(\xi _{a})})^{h_{a}}\mathcal{W}_{K}^{-1}=\langle S|\prod_{a=1}^{N}(%
\frac{T^{(K)}(\xi _{a},\{\xi \})}{a(\xi _{a})})^{h_{a}}
\end{equation}%
once we fix:%
\begin{equation}
\langle S|=\langle 0|\mathcal{W}_{K}^{-1},
\end{equation}%
i.e. our SoV basis coincides with the generalized Sklyanin's one with a
special choice of the $\langle S|$ covector for any $K\neq \alpha I$, for
any $\alpha \in \mathbb{C}$, for which both do exist.

\subsection{Transfer matrix spectrum in our SoV scheme}

Let us show here how to characterize the transfer
matrix spectrum in our new SoV scheme. In our introductory sections, we have anticipated that in
our SoV basis the separate relations are given directly by the
particularization of the fusion relations in the spectrum of the separate
variables. In the case at hand these fusion relations just reduces to the
following identities:%
\begin{equation}
T^{(K)}(\xi _{a},\{\xi \})T^{(K)}(\xi _{a}-\eta ,\{\xi \})=q\text{-det}%
M^{(K)}(\xi _{a},\{\xi \}),\text{ }\forall a\in \{1,...,N\},
\label{Fusion-1-eq}
\end{equation}%
where:%
\begin{equation}
q\text{-det}M^{(K)}(\lambda ,\{\xi \})=a(\lambda )d(\lambda -\eta )\text{det}%
K
\end{equation}%
is the quantum determinant given as the quadratic expression,%
\begin{equation}
q\text{-det}M^{(K)}(\lambda ,\{\xi \})=A^{(K)}(\lambda ,\{\xi
\})D^{(K)}(\lambda -\eta ,\{\xi \})-B^{(K)}(\lambda ,\{\xi
\})C^{(K)}(\lambda -\eta ,\{\xi \})\ ,
\end{equation}%
and three other equivalent ones. One has to add the knowledge of
the analytic properties of the transfer matrix that we can easily derive. In
fact, $T^{(K)}(\lambda ,\{\xi \})$ is a polynomial of degree $1$ in all the $%
\xi _{a}$ and of degree $N$ in $\lambda $ with the following
leading central coefficient:%
\begin{equation}
\lim_{\lambda \rightarrow +\infty }\lambda ^{-N}T^{(K)}(\lambda
,\{\xi \})=\text{tr\thinspace }K\, .
\end{equation}%
Introducing the notation
\begin{equation}
g_{a}(\lambda )=\prod_{b\neq a,b=1}^{N}\frac{\lambda -\xi _{b}}{\xi
_{a}-\xi _{b}}\ ,
\end{equation}%
the following theorem holds:
\begin{theorem}
Let us assume that $K\neq \alpha I$, for any $\alpha \in \mathbb{C}$, and
that the inhomogeneities $\{\xi _{1},...,\xi _{N}\}\in \mathbb{C}^{%
N}$ satisfy the condition $(\ref{Inhomog-cond})$, then the spectrum
of $T^{(K)}(\lambda ,\{\xi \})$ is characterized by:%
\begin{equation}
\Sigma _{T^{(K)}}=\left\{ t(\lambda ):t(\lambda )=\text{tr\thinspace }K\text{
}\prod_{a=1}^{N}(\lambda -\xi _{a})+\sum_{a=1}^{N%
}g_{a}(\lambda )x_{a},\text{ \ \ }\forall \{x_{1},...,x_{N}\}\in
\Sigma _{T}\right\} ,  \label{SET-T}
\end{equation}%
$\Sigma _{T}$ is the set of solutions to the following inhomogeneous system
of $N$ quadratic equations:%
\begin{equation}
x_{n}[\text{tr\thinspace }K\text{ }\prod_{a=1}^{N}(\xi _{n}-\xi
_{a}-\eta )+\sum_{a=1}^{N}g_{a}(\xi _{n}-\eta )x_{a}]=a(\xi
_{n})d(\xi _{n}-\eta )\text{det}K,\text{ }\forall n\in \{1,...,N\},
\label{Quadratic System}
\end{equation}%
in $N$ unknown $\{x_{1},...,x_{N}\}$. Moreover, $%
T^{(K)}(\lambda ,\{\xi \})$ has $w$-simple spectrum and for any $t(\lambda
)\in \Sigma _{T^{(K)}}$ the associated unique (up-to normalization that we take to be given by $\langle S | t \rangle=1$)
eigenvector $|t\rangle $ has the following factorized wave-function in the left SoV
basis:%
\begin{equation}
\langle h_{1},...,h_{N}|t\rangle =\prod_{n=1}^{N}\left( 
\frac{t(\xi _{n})}{a(\xi _{n})}\right) ^{h_{n}}.  \label{SoV-Ch-T-eigenV-gl2}
\end{equation}
\end{theorem}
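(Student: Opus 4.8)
The plan is to prove the two inclusions defining $\Sigma_{T^{(K)}}$ separately, while reading off $w$-simplicity and the factorized wave function almost for free from the general theory of Section~2. The guiding observation is that $T^{(K)}(\lambda,\{\xi\})$ is a degree-$N$ polynomial in $\lambda$ whose leading coefficient $\mathrm{tr}\,K$ is central; hence any eigenvalue $t(\lambda)$ is a degree-$N$ polynomial with the \emph{same} leading coefficient and is therefore completely fixed by its $N$ values $x_a:=t(\xi_a)$ through the interpolation formula $t(\lambda)=\mathrm{tr}\,K\prod_{a=1}^N(\lambda-\xi_a)+\sum_{a=1}^N g_a(\lambda)\,x_a$. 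This explains the very shape of \eqref{SET-T}: parametrising eigenvalues by the vector $(x_1,\dots,x_N)$ is legitimate, and the whole problem reduces to identifying which such vectors actually occur. Moreover, since $K\neq\alpha I$ on $\mathbb{C}^2$ is exactly the statement that $K$ is $w$-simple, Proposition~\ref{gln-basis} specialised to $n=2$ ensures that the covectors $\langle h_1,\dots,h_N|$ form a basis of $\mathcal H^*$; the general Proposition of Section~2 relating the independence property to $w$-simplicity then yields at once that $T^{(K)}$ is $w$-simple and that the eigenvector attached to any eigenvalue has the factorized components \eqref{precursor-SoV-form2}, which here, after absorbing the normalization $a(\xi_a)$ and fixing $\langle S|t\rangle=1$, is precisely \eqref{SoV-Ch-T-eigenV-gl2}. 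In particular the eigenvalue function $t(\lambda)$ is well defined, each eigenvalue being carried by a single common eigenvector of the whole commuting family.

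For the necessity direction I would simply evaluate the operator fusion identity \eqref{Fusion-1-eq} on an eigenvector, which gives $t(\xi_n)\,t(\xi_n-\eta)=a(\xi_n)d(\xi_n-\eta)\det K$ for every $n$. Replacing the factor $t(\xi_n-\eta)$ by its interpolated expression turns this into exactly the quadratic relation \eqref{Quadratic System} in the unknowns $x_a=t(\xi_a)$. Hence $\Sigma_{T^{(K)}}$ is contained in the set described by \eqref{SET-T}.

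The substantial part is the converse: every solution $(x_1,\dots,x_N)$ of \eqref{Quadratic System} arises from an eigenvalue. Given such a solution I would define $|t\rangle$ by prescribing its components $\langle h_1,\dots,h_N|t\rangle=\prod_{n=1}^N(x_n/a(\xi_n))^{h_n}$ in the SoV basis; this is well defined and nonzero (its $(0,\dots,0)$ component equals $1$) precisely because Proposition~\ref{gln-basis} makes $\langle h_1,\dots,h_N|$ a basis. One then checks that $|t\rangle$ is an eigenvector with eigenvalue $t(\lambda)$. By the interpolation formula together with the matching of the central leading coefficients, it suffices to verify $T^{(K)}(\xi_c)|t\rangle=x_c|t\rangle$ for each $c$, tested against every basis covector $\langle h|$. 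When $h_c=0$ the pure shift $\langle h|\,T^{(K)}(\xi_c)=a(\xi_c)\langle h'|$, with $\langle h'|$ the basis covector having $h_c$ raised to $1$, gives the identity immediately from the factorized components. The genuine difficulty is the boundary case $h_c=1$: acting once more with $T^{(K)}(\xi_c)$ produces $(T^{(K)}(\xi_c))^2$, which leaves the basis and must be re-expressed through the closure relation obtained by combining \eqref{Fusion-1-eq} with the interpolation formula for $T^{(K)}(\xi_c-\eta)$,
\[
g_c(\xi_c-\eta)\,(T^{(K)}(\xi_c))^2 = a(\xi_c)\,d(\xi_c-\eta)\,\det K - T^{(K)}(\xi_c)\,( \mathrm{tr}\,K \prod_{a=1}^{N}(\xi_c-\eta-\xi_a) + \sum_{a\neq c} g_a(\xi_c-\eta)\,T^{(K)}(\xi_a) ).
\]

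I expect this boundary computation to be the main obstacle. The delicate point is that the cross terms $T^{(K)}(\xi_c)T^{(K)}(\xi_a)$ appearing on the right-hand side can themselves saturate another coordinate $h_a=1$, so the reduction has to be organised as an induction on the number of saturated coordinates, the lower-order pieces being already known to act by the expected shifts. What makes everything collapse to the single scalar $x_c\,\langle h|t\rangle$ is that the fusion identity holds simultaneously as an operator relation and, through \eqref{Quadratic System}, as a numerical relation among the chosen $x_a$; this is exactly the ``precursor of the SoV mechanism'' emphasised in Section~2, with the fusion relations now playing the role that the Cayley--Hamilton identity played there. Once $|t\rangle$ is shown to be an eigenvector, the map $(x_1,\dots,x_N)\mapsto t(\lambda)$ is the inverse of the necessity map, the two inclusions combine into the equality \eqref{SET-T}, and the uniqueness of $|t\rangle$ with components \eqref{SoV-Ch-T-eigenV-gl2} is guaranteed by the already established $w$-simplicity.
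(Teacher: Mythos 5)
Your overall architecture coincides with the paper's: the necessity direction (evaluating the operator fusion identity \eqref{Fusion-1-eq} on an eigenvector and rewriting it via interpolation as the system \eqref{Quadratic System}), the appeal to Proposition \ref{gln-basis} (with $K\neq \alpha I$ equivalent to $w$-simplicity for $2\times 2$ matrices) and to the general results of Section 2 for $w$-simplicity and the factorized components, and the definition of the candidate eigenvector $|t\rangle$ through its components \eqref{SoV-Ch-T-eigenV-gl2} are all as in the paper. The divergence --- and the genuine gap --- is in the verification that $|t\rangle$ is an eigenvector. The paper never meets the boundary difficulty you flag, because it interpolates $T^{(K)}(\lambda)$ not at the fixed points $\xi_a$ but at the \emph{state-dependent} shifted points $\xi_a^{(h_a)}=\xi_a-h_a\eta$, pairwise distinct by \eqref{Inhomog-cond}. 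With these nodes, each evaluation $T^{(K)}(\xi_a^{(h_a)})$ acting on $\langle h_1,\dots,h_N|$ is either a pure raising (when $h_a=0$) or, through the fusion identity $T^{(K)}(\xi_a)T^{(K)}(\xi_a-\eta)=q\text{-det}M^{(K)}(\xi_a,\{\xi\})$ with the quantum determinant central, a pure scalar lowering (when $h_a=1$). No square $(T^{(K)}(\xi_c))^2$ ever appears, the scalar system \eqref{Quadratic System} converts each such action into $t(\xi_a^{(h_a)})\langle h_1,\dots,h_N|t\rangle$, and re-interpolation (valid simultaneously for the operator and for $t(\lambda)$, both degree-$N$ polynomials with leading coefficient $\mathrm{tr}\,K$) finishes the proof in one step, with no induction at all.

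Your fixed-node route is not completed as described. In your closure relation the problematic terms are $\langle h_1,\dots,h_N|T^{(K)}(\xi_a)|t\rangle$ for the \emph{other} saturated coordinates $a\neq c$ with $h_a=1$: these involve the very same covector $\langle h_1,\dots,h_N|$, with the same number of saturated coordinates, so your induction parameter does not decrease and the recursion does not close. What you actually obtain, for each fixed $h$ with saturated set $S=\{a:h_a=1\}$, is a coupled linear system for the unknowns $F_a=\langle h_1,\dots,h_N|T^{(K)}(\xi_a)|t\rangle$, $a\in S$, with coefficient matrix $\left(g_a(\xi_c-\eta)\right)_{c,a\in S}$. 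The relations \eqref{Quadratic System} show that $F_a=x_a\langle h_1,\dots,h_N|t\rangle$ \emph{is} a solution, but to conclude it is \emph{the} solution you need this Cauchy-type submatrix to be invertible --- in small cases one finds $\det\left(g_a(\xi_c-\eta)\right)_{c,a\in S}=\prod_{c\in S,\,b\notin S}\frac{\xi_c-\xi_b-\eta}{\xi_c-\xi_b}$, nonzero under \eqref{Inhomog-cond} --- a lemma you neither state nor prove. Alternatively one can try to repair the induction by commuting $T^{(K)}(\xi_a)$ onto $\langle h^{-c}|$ (which does have one fewer saturated coordinate), but then the inductive statement needed concerns the expansion of boundary actions of the transfer matrix on basis covectors, not their action on $|t\rangle$, a strictly stronger claim you have not formulated. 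Either repair looks feasible but is substantive; as written, the crux of the theorem is left unproved, and the paper's choice of $h$-dependent interpolation nodes is exactly the device that renders all of it unnecessary.
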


\begin{proof}
Let us start observing that the inhomogeneous system of $N$
quadratic equations $(\ref{Quadratic System})$ in $N$ unknown $%
\{x_{1},...,x_{N}\}$ is nothing else but  the rewriting of the transfer
matrix fusion equations:%
\begin{equation}
t(\xi _{a})t(\xi _{a}-\eta )=q\text{-det}M^{(K)}(\xi _{a},\{\xi \}),\text{ }%
\forall a\in \{1,...,N\},  \label{scalar-fusion-1-gl2}
\end{equation}%
for the set of all the polynomials of degree $N$ of the form:%
\begin{equation}
t(\lambda )=\text{tr\thinspace }K\text{ }\lambda ^{N}+\sum_{a=1}^{%
N}t_{a}\lambda ^{a-1},  \label{T-eigenVa-form}
\end{equation}%
where we have used for these functions the interpolation formula in the
points $\{\xi _{1},...,\xi _{N}\}$. Then, it is clear that any
eigenvalue of the transfer matrix $T^{(K)}(\lambda ,\{\xi \})$ is solution
of this system, i.e. $(\ref{scalar-fusion-1-gl2})$, and that the associated
right eigenvector $|t\rangle $ admits the characterization $(\ref%
{SoV-Ch-T-eigenV-gl2})$ in the left SoV basis. So we are left with the proof of
the reverse statement, i.e. that any polynomial $t(\lambda )$ of the above
form satisfying this system is an eigenvalue of the transfer matrix. We will prove this by showing that the vector $|t\rangle $ characterized by $(\ref%
{SoV-Ch-T-eigenV-gl2})$ is a transfer matrix eigenstate, i.e. we have to show:%
\begin{equation}
\langle h_{1},...,h_{N}|T^{(K)}(\lambda ,\{\xi \})|t\rangle
=t(\lambda )\langle h_{1},...,h_{N}|t\rangle ,\text{ }\forall
\{h_{1},...,h_{N}\}\in \{0,1\}^{\otimes N}.
\end{equation}%
Let us define:%
\begin{equation}
\xi _{a}^{\left( h\right) }=\xi _{a}-h\eta ,\text{ }h\in \{0,1\},
\end{equation}%
and let us write the following interpolation formula for the transfer matrix:%
\begin{equation}
T^{(K)}(\lambda ,\{\xi \})=\text{tr\thinspace }K\text{ }\prod_{a=1}^{\mathsf{%
N}}(\lambda -\xi _{a}^{\left( h_{a}\right) })+\sum_{a=1}^{N%
}\prod_{b\neq a,b=1}^{N}\frac{\lambda -\xi _{b}^{\left(
h_{b}\right) }}{\xi _{a}^{\left( h_{a}\right) }-\xi _{b}^{\left(
h_{b}\right) }}T^{(K)}(\xi _{a}^{\left( h_{a}\right) },\{\xi \}),
\end{equation}%
and use it to act on the generic element of the left SoV basis. Then, we
have:%
\begin{equation}
\langle h_{1},...,h_{a},...,h_{N}|T^{(K)}(\xi _{a}^{\left(
h_{a}\right) },\{\xi \})|t\rangle =\left\{ 
\begin{array}{l}
a(\xi _{a})\langle h_{1},...,h_{a}^{\prime }=1,...,h_{N}|t\rangle 
\text{ \ \ if \ }h_{a}=0 \\ 
q\text{-det}M^{(K)}(\xi _{a},\{\xi \})\frac{\langle h_{1},...,h_{a}^{\prime
}=0,...,h_{N}|t\rangle }{a(\xi _{a})}\text{ \ \ if \ }h_{a}=1%
\end{array}%
\right.
\end{equation}%
which by the definition of the state $|t\rangle $ can be rewritten as:%
\begin{equation}
\langle h_{1},...,h_{a},...,h_{N}|T^{(K)}(\xi _{a}^{\left(
h_{a}\right) },\{\xi \})|t\rangle =\left\{ 
\begin{array}{l}
t(\xi _{a})\prod_{n\neq a,n=1}^{N}\left( \frac{t(\xi _{n})}{a(\xi
_{n})}\right) ^{h_{n}}\text{ \ \ if \ }h_{a}=0 \\ 
\frac{q\text{-det}M^{(K)}(\xi _{a},\{\xi \})}{a(\xi _{a})}\prod_{n\neq
a,n=1}^{N}\left( \frac{t(\xi _{n})}{a(\xi _{n})}\right) ^{h_{n}}%
\text{ \ \ if \ }h_{a}=1%
\end{array}%
\right.
\end{equation}%
and finally by the equation $(\ref{scalar-fusion-1-gl2})$ reads:%
\begin{equation}
\langle h_{1},...,h_{a},...,h_{N}|T^{(K)}(\xi _{a}^{\left(
h_{a}\right) },\{\xi \})|t\rangle =\left\{ 
\begin{array}{l}
t(\xi _{a})\prod_{n\neq a,n=1}^{N}\left( \frac{t(\xi _{n})}{a(\xi
_{n})}\right) ^{h_{n}}\text{ \ \ \ if \ }h_{a}=0 \\ 
t(\xi _{a}-\eta )\prod_{n=1}^{N}\left( \frac{t(\xi _{n})}{a(\xi
_{n})}\right) ^{h_{n}}\text{ \ \ if \ }h_{a}=1%
\end{array}%
\right. ,
\end{equation}%
and so:%
\begin{equation}
\langle h_{1},...,h_{a},...,h_{N}|T^{(K)}(\xi _{a}^{\left(
h_{a}\right) },\{\xi \})|t\rangle =t(\xi _{a}^{\left( h_{a}\right) })\langle
h_{1},...,h_{a},...,h_{N}|t\rangle ,
\end{equation}%
from which we have, using the polynomial interpolation,%
\begin{equation}
\langle h_{1},...,h_{N}|T^{(K)}(\lambda ,\{\xi \})|t\rangle =( 
\text{tr\thinspace }K\text{ }\prod_{a=1}^{N}(\lambda -\xi
_{a}^{\left( h_{a}\right) })+\sum_{a=1}^{N}\prod_{b\neq a,b=1}^{%
N}\frac{\lambda -\xi _{b}^{\left( h_{b}\right) }}{\xi _{a}^{\left(
h_{a}\right) }-\xi _{b}^{\left( h_{b}\right) }}t(\xi _{a}^{\left(
h_{a}\right) })) \langle h_{1},...,h_{N}|t\rangle ,
\end{equation}%
proving our statement.
\end{proof}

Let us comment that the same characterization of the transfer matrix eigenvalues and eigenvectors is obtained in the Sklyanin's like SoV representations. This is natural as we have shown that under special choice of the covector $\bra S$ the two SoV basis coincide. Nevertheless, it is worth remarking the slightly different point of view that we used here. In the Sklyanin's like SoV approach the fusion relations are derived as compatibility conditions for the existence of nonzero eigenvectors. Here, instead, they are used as the starting point to prove that the vectors $\ket t$ of the form \rf{SoV-Ch-T-eigenV-gl2} are indeed eigenvectors of the transfer matrix.

The previous characterization of the spectrum allows to introduce a
functional equation which provides an equivalent characterization of it, by the so-called quantum
spectral curve, which in the case at hand is a second order Baxter
difference equation.

\begin{theorem}
Let us assume that $K\neq \alpha I$, for any $\alpha\in \mathbb{C}$, and has at least one
non-zero eigenvalue (The case where $K$ is a pure Jordan block with eigenvalue zero is not very interesting as in that situation the transfer matrix is quite degenerated being proportional to the nilpotent operator $B(\la)$ or $C(\la)$; however, our method would anyway work in those cases too), that the inhomogeneities $\{\xi _{1},...,\xi _{\mathsf{N%
}}\}\in \mathbb{C}^{N}$ satisfy the condition $(\ref{Inhomog-cond})$%
. Moreover, let us introduce the coefficients:%
\begin{equation}
\alpha (\lambda )=\beta (\lambda )\beta (\lambda -\eta ),\text{ }\beta
(\lambda )=\mathsf{k}_{0}a(\lambda ),
\end{equation}%
where $\mathsf{k}_{0}\neq 0$ is solution of the equation:%
\begin{equation}
\mathsf{k}_{0}^{2}-\mathsf{k}_{0}\text{\thinspace tr}K+\text{det\thinspace }%
K=0\text{,}  \label{Ch-Eq-K-gl2}
\end{equation}%
i.e. $\mathsf{k}_{0}$ is a non-zero eigenvalue of the matrix\thinspace $K$,
and let $t(\lambda )$ be an entire function of $\lambda $, then $t(\lambda )$
is an element of the spectrum of $T^{(K)}(\lambda ,\{\xi \})$ if and only if
there exists a unique polynomial:%
\begin{equation}
Q_{t}(\lambda )=\prod_{a=1}^{\mathsf{M}}(\lambda -\lambda _{a})\text{, with }%
\mathsf{M}\leq N\text{ \ such that }\lambda _{a}\neq \xi _{b},
\label{Q-form}
\end{equation}%
for any $\left( a,b\right) \in \{1,...,\mathsf{M}\}\times \{1,...,N%
\},$ such that  $t(\lambda )$ and $Q_{t}(\lambda )$ are solutions of the following
quantum spectral curve functional equation:%
\begin{equation}
\label{t-q-gl2}
\alpha (\lambda )Q_{t}(\lambda -2\eta )-\beta (\lambda )t(\lambda -\eta
)Q_{t}(\lambda -\eta )+q\text{-det}M^{(K)}(\lambda ,\{\xi \})Q_{t}(\lambda
)=0.
\end{equation}%
Moreover, up to a normalization the associated transfer matrix eigenvector $%
|t\rangle $ admits the following rewriting in the left SoV basis:%
\begin{equation}
\langle h_{1},...,h_{N}|t\rangle =\mathsf{k}_{0}^{\sum_{n=1}^{%
N}h_{n}}\prod_{n=1}^{N}Q_{t}(\xi _{n}^{(h_{n})}).
\end{equation}
\end{theorem}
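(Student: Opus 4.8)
The plan is to prove the two implications of the stated equivalence by pivoting on the single scalar relation
\[
t(\xi_n)\,Q_t(\xi_n)=\mathsf{k}_0\,a(\xi_n)\,Q_t(\xi_n-\eta),\qquad n=1,\dots,N,\qquad(\star)
\]
which I will show is the bridge between the Baxter equation \eqref{t-q-gl2} and the fusion characterisation already obtained in the previous theorem. Recall that, with $x_n=t(\xi_n)$, the system \eqref{Quadratic System} is nothing but the fusion relations $t(\xi_n)t(\xi_n-\eta)=q\text{-det}\,M^{(K)}(\xi_n)$, and that the associated eigenvector has SoV components $\prod_n(t(\xi_n)/a(\xi_n))^{h_n}$. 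The first step is a cosmetic but crucial rewriting: denoting by $\mathsf{k}_1=\det K/\mathsf{k}_0$ the second eigenvalue of $K$ and using $d(\lambda)=a(\lambda-\eta)$, $q\text{-det}\,M^{(K)}(\lambda)=a(\lambda)d(\lambda-\eta)\det K$ and $\alpha(\lambda)=\beta(\lambda)\beta(\lambda-\eta)$, one checks that the left-hand side of \eqref{t-q-gl2} equals $\beta(\lambda)\tilde P(\lambda)$ with
\[
\tilde P(\lambda)=\mathsf{k}_0\,d(\lambda)\,Q_t(\lambda-2\eta)-t(\lambda-\eta)\,Q_t(\lambda-\eta)+\mathsf{k}_1\,d(\lambda-\eta)\,Q_t(\lambda).
\]
Since $\beta(\lambda)=\mathsf{k}_0 a(\lambda)\not\equiv0$, the quantum spectral curve is equivalent to the polynomial identity $\tilde P\equiv 0$, and I will work with $\tilde P$ throughout.

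For the implication ``$Q_t$ solves \eqref{t-q-gl2} $\Rightarrow t\in\Sigma_{T^{(K)}}$'' I would evaluate $\tilde P\equiv0$ at two well-chosen points. At $\lambda=\xi_n+\eta$ the last term drops because $d(\xi_n)=0$, and using $d(\xi_n+\eta)=a(\xi_n)$ one recovers exactly $(\star)$. At $\lambda=\xi_n$ the first term drops, again because $d(\xi_n)=0$, leaving $t(\xi_n-\eta)Q_t(\xi_n-\eta)=\mathsf{k}_1 d(\xi_n-\eta)Q_t(\xi_n)$. Multiplying these two relations and cancelling the factor $Q_t(\xi_n)Q_t(\xi_n-\eta)$ — legitimate since $Q_t(\xi_n)\neq0$ by the form \eqref{Q-form}, the degenerate subcase $Q_t(\xi_n-\eta)=0$ forcing $\mathsf{k}_1=0$ and the fusion relation then holding trivially — yields $t(\xi_n)t(\xi_n-\eta)=\mathsf{k}_0\mathsf{k}_1 a(\xi_n)d(\xi_n-\eta)=q\text{-det}\,M^{(K)}(\xi_n)$. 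These are precisely the fusion relations, so by the previous theorem $t\in\Sigma_{T^{(K)}}$; the eigenvector formula follows by rewriting each factor $t(\xi_n)/a(\xi_n)$ of the SoV wave function as $\mathsf{k}_0 Q_t(\xi_n-\eta)/Q_t(\xi_n)$ via $(\star)$ and reorganising into $\mathsf{k}_0^{\sum_n h_n}\prod_n Q_t(\xi_n^{(h_n)})$.

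Conversely, assuming $t\in\Sigma_{T^{(K)}}$, I must build $Q_t$. Conceptually $\tilde P\equiv0$ is a second order Baxter recursion whose leading coefficient degenerates at $\lambda=\xi_n+\eta$, where $d(\lambda-\eta)$ vanishes; the condition that a polynomial solution survive this apparent singularity is exactly $(\star)$, which is in turn guaranteed by the fusion relations — this is the structural reason why the quadratic system and the Baxter equation encode the same spectrum. Concretely, the existence of a nonzero $Q_t$ of degree $\le N$ satisfying the $N$ linear conditions $(\star)$ is immediate, these being $N$ homogeneous equations on the $N+1$ coefficients of a degree-$\le N$ polynomial; I normalise the solution to be monic of its actual degree $M\le N$. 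It then remains to verify $\tilde P\equiv0$ for this $Q_t$ by a degree count: $\deg\tilde P\le N+M$, its coefficient of $\lambda^{N+M}$ equals $\mathsf{k}_0-\tr K+\mathsf{k}_1=0$ by the characteristic equation \eqref{Ch-Eq-K-gl2}, so $\deg\tilde P\le N+M-1\le2N-1$; and $\tilde P$ vanishes at the $2N$ points $\{\xi_n,\xi_n+\eta\}_{n=1}^N$, pairwise distinct by \eqref{Inhomog-cond}, since at $\xi_n+\eta$ the identity reduces to $(\star)$ and at $\xi_n$ to $t(\xi_n-\eta)Q_t(\xi_n-\eta)=\mathsf{k}_1 d(\xi_n-\eta)Q_t(\xi_n)$, the latter obtained by feeding $(\star)$ into the fusion relation. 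A polynomial of degree at most $2N-1$ with $2N$ distinct roots vanishes identically, hence $\tilde P\equiv0$ and \eqref{t-q-gl2} holds.

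The delicate part, which I expect to be the main obstacle, is upgrading this existence statement into the precise form demanded by the theorem: that $Q_t$ can be chosen with all roots $\lambda_a\neq\xi_b$ and is then unique. When $\det K\neq0$ the fusion relation gives $t(\xi_n)t(\xi_n-\eta)=q\text{-det}\,M^{(K)}(\xi_n)\neq0$, hence $t(\xi_n)\neq0$ for all $n$, and I would argue that the $N\times(N+1)$ system $(\star)$ has full rank $N$ — making the monic solution unique — and that a root of $Q_t$ at some $\xi_b$ would, through $(\star)$ and the recursion, contradict this maximal rank; equivalently, the unique polynomial solution of the recursion normalised at infinity automatically avoids the inhomogeneities. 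The case $\det K=0$ (one vanishing eigenvalue, the pure nilpotent $K$ being explicitly excluded) degenerates $\tilde P$ to the first order relation $\mathsf{k}_0 d(\lambda)Q_t(\lambda-2\eta)=t(\lambda-\eta)Q_t(\lambda-\eta)$ and must be handled separately, since there the fusion relations read $t(\xi_n)t(\xi_n-\eta)=0$ and one must track carefully which of $t(\xi_n)$, $t(\xi_n-\eta)$ vanishes. Securing uniqueness and the root condition in both regimes is where the real work lies; everything else is the elementary point-evaluation and degree-counting described above.
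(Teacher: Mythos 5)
Your proof follows essentially the same route as the paper's: your two point evaluations of $\tilde P$ at $\xi_n+\eta$ and $\xi_n$ are exactly the paper's system $(\ref{Sys-SoV-2})$--$(\ref{Sys-SoV-1})$, the bridge to the fusion relations $(\ref{scalar-fusion-1-gl2})$ and back is identical, and your deferral of the "real work" (existence of $Q_t$ with $\lambda_a\neq\xi_b$ and its uniqueness) matches the paper, which at precisely this point does not give a self-contained argument either but invokes Theorem 2.3 of \cite{KitMNT16}. Two of your variations are genuine, if small, improvements. First, factoring out $\beta(\lambda)$ turns the verification in the converse direction into a degree count for a polynomial of degree at most $N+\mathsf{M}-1$ vanishing at the $2N$ points $\xi_n,\xi_n+\eta$, whereas the paper works with the unfactored left-hand side of degree at most $2N+\mathsf{M}$ and needs the $3N$ points $\xi_a-\eta,\xi_a,\xi_a+\eta$ together with the vanishing leading coefficient; the extra $N$ points $\xi_a-\eta$ are nothing but the trivial zeros of your prefactor $\beta$, so the content is the same but your count is leaner. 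Second, your explicit handling of the degenerate subcase $Q_t(\xi_n-\eta)=0$ (which forces $\mathsf{k}_1=0$ and makes the fusion relation hold trivially) is more careful than the paper, which only records $Q_t(\xi_a)\neq 0$ and sidesteps the issue by substitution rather than by your multiply-and-cancel step.

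There is one step you omit in the forward direction. The hypothesis makes $t$ merely an entire function, while the previous theorem characterizes eigenvalues only among polynomials of degree $N$ with leading coefficient $\tr K$, so deriving the fusion relations alone does not yet allow you to invoke it. You must first extract from $\tilde P\equiv 0$ that $t(\lambda-\eta)Q_t(\lambda-\eta)$ equals a polynomial of degree at most $N+\mathsf{M}$, hence that $t$ is itself a polynomial of degree at most $N$, and then match the coefficient of $\lambda^{N+\mathsf{M}}$ to get $\mathsf{k}_0-t_{N+1}+\mathsf{k}_1=0$, i.e. the leading coefficient $t_{N+1}=\tr K$, which is the content of $(\ref{Ch-Eq-K-gl2})$; the paper states this explicitly at the start of its proof. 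The step is routine --- indeed you use the very same leading-coefficient identity in your converse --- but without it the application of the previous theorem in your first implication is not licensed.
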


\begin{proof}
Let us start assuming the existence of $Q_{t}(\lambda )$ satisfying with $%
t(\lambda )$ the functional equation \eqref{t-q-gl2}, then it follows that $t(\lambda )$ is
a polynomial of degree $N$ with leading coefficient $t_{N%
+1}$ satisfying the equation:%
\begin{equation}
\mathsf{k}_{0}^{2}-\mathsf{k}_{0}\text{\thinspace }t_{N+1}+\text{%
det\thinspace }K=0\text{,}
\end{equation}%
which imposes $t_{N+1}=$tr$K=\mathsf{k}_{0}+\mathsf{k}_{1}$, where $%
\mathsf{k}_{1}$ is the second eigenvalue or $\mathsf{k}_{1}=\mathsf{k}_{0}$
if $K$ has a non-trivial Jordan block. Now from the identities:%
\begin{equation}
q\text{-det}M^{(K)}(\xi _{a}+\eta ,\{\xi \})=\alpha (\xi _{a})=0,
\end{equation}%
we have that the functional equation reduces to the system of equations:%
\begin{eqnarray}
-t(\xi _{a}-\eta )Q_{t}(\xi _{a}-\eta )+\mathsf{k}_{1}d(\xi _{a}-\eta
)Q_{t}(\xi _{a}) &=&0,  \label{Sys-SoV-1} \\
\mathsf{k}_{0}a(\xi _{a})Q_{t}(\xi _{a}-\eta )-t(\xi _{a})Q_{t}(\xi _{a})
&=&0,  \label{Sys-SoV-2}
\end{eqnarray}%
once computed in the points $\xi _{a}$ and $\xi _{a}+\eta ,$ from which it
follows:%
\begin{equation}
t(\xi _{a}-\eta )\frac{t(\xi _{a})Q_{t}(\xi _{a})}{\mathsf{k}_{0}a(\xi _{a})}%
=\mathsf{k}_{1}d(\xi _{a}-\eta )Q_{t}(\xi _{a})
\end{equation}%
which being $Q_{t}(\xi _{a})\neq 0$ implies that $t(\lambda )$ satisfies
also the system of equations $(\ref{scalar-fusion-1-gl2})$, for any $a\in
\{1,...,N\}$, so that $t(\lambda )$ is a transfer matrix eigenvalue
for the previous theorem.

Let us now prove the reverse statement, i.e. we assume that $t(\lambda )$ is
a transfer matrix eigenvalue and we want to prove the existence of the
polynomial $Q_{t}(\lambda )$ satisfying the functional equation. The l.h.s.
of the equation is a polynomial in $\lambda $ of maximal degree $2N+%
\mathsf{M}$, with $\mathsf{M}\leq N$, so that if we prove that it
is zero in $3N+1$ different points we have proven the functional
equation. The leading coefficient of this polynomial is zero thanks to $(\ref%
{Ch-Eq-K-gl2})$ once we ask that $t(\lambda )$ is a transfer matrix eigenvalue.
It is easy to remark that in the points $\xi _{a}-\eta $, for any $a\in \{1,...,%
N\}$, the functional equation is directly satisfied. Finally, it is
satisfied in the $2N$ points $\xi _{a}$ and $\xi _{a}+\eta $, for
any $a\in \{1,...,N\},$ if the system $(\ref{Sys-SoV-1})$-$(\ref%
{Sys-SoV-2})$ is satisfied. As a consequence of the fact that $t(\lambda )$ satisfies \rf{scalar-fusion-1-gl2}, this last system reduces e.g. to the system of
the second $N$ equations:%
\begin{equation}
\mathsf{k}_{0}a(\xi _{a})Q_{t}(\xi _{a}-\eta )=t(\xi _{a})Q_{t}(\xi _{a}),
\end{equation}%
that one can prove to be satisfied by a polynomial $Q_{t}(\lambda )$ of the
form $(\ref{Q-form})$, see for example \cite{KitMNT16}. Moreover, following the proof of
the Theorem 2.3 of \cite{KitMNT16} one can prove also here that the function $%
Q_{t}(\lambda )$ is unique. Finally, from the identities:%
\begin{equation}
\prod_{n=1}^{N}Q_{t}(\xi _{n})\prod_{n=1}^{N}\left( \frac{%
t(\xi _{n})}{a(\xi _{n})}\right) ^{h_{n}}=\mathsf{k}_{0}^{\sum_{n=1}^{%
N}h_{n}}\prod_{n=1}^{N}Q_{t}(\xi _{n}^{(h_{n})}),
\end{equation}%
our statement on the representation of the transfer matrix eigenstate in the
left SoV basis follows.
\end{proof}

\section{The quasi-periodic $XXZ$ spin-1/2 fundamental model}

Let us consider here the trigonometric 6-vertex R-matrix:%
\begin{equation}
R_{12}(\lambda )=\left( 
\begin{array}{cccc}
\sinh (\lambda +\eta ) & 0 & 0 & 0 \\ 
0 & \sinh \lambda & \sinh \eta & 0 \\ 
0 & \sinh \eta & \sinh \lambda & 0 \\ 
0 & 0 & 0 & \sinh (\lambda +\eta )%
\end{array}%
\right)\ ,
\end{equation}%
which is a solution of the Yang-Baxter equation:%
\begin{equation}
R_{12}(\lambda -\mu )R_{13}(\lambda )R_{23}(\mu )=R_{23}(\mu )R_{13}(\lambda
)R_{12}(\lambda -\mu ).
\end{equation}%
The following family of $2\times 2$ matrices:%
\begin{equation}
K_{0}^{\left( a,\alpha \right) }= \left[ \delta
_{0,a}\left( 
\begin{array}{cc}
e^{\alpha } & 0 \\ 
0 & e^{-\alpha }%
\end{array}%
\right) _{0}+\delta _{1,a}\left( 
\begin{array}{cc}
0 & e^{\alpha } \\ 
e^{-\alpha } & 0%
\end{array}%
\right) _{0}\right] ,\text{ }\forall (a,\alpha )\in \{0,1\}\times  \mathbb{C}^{2},
\end{equation}%
characterizes the symmetries of the trigonometric 6-vertex R-matrix:%
\begin{equation}
R_{12}(\lambda -\mu )K_{1}^{\left( a,\alpha \right) }K_{2}^{\left(
a,\alpha \right) }=K_{2}^{\left( a,\alpha \right) }K_{1}^{\left(
a,\alpha \right) }R_{12}(\lambda -\mu ),
\end{equation}%
i.e. the scalar solutions of the trigonometric 6-vertex Yang-Baxter
equation. Let us comment that we can add a normalization factor to this matrix which allows to describe also the case in which one ($a=0$ case) or both ($a=1$ case) the eigenvalues of $K$ are zero while $K$ stays $w$-simple. While our SoV basis construction applies also for these degenerate cases, we omit them to simplify the notations as these special cases correspond to simple transfer matrices, coinciding with one of the elements of the monodromy matrix and for which direct diagonalization methods exist already. Then by using the R-matrix as Lax operator and the K-matrix as
symmetry twist we can construct the monodromy matrices (Unless they play an explicit role, and to simplify the notations, we omit the upper indices in the matrix $K$ when it is used as un upper index itself for the monodromy and transfer matrices):%
\begin{equation}
M_{0}^{\left( K\right) }(\lambda )=K_{0}^{\left( a,\alpha \right)
}R_{0N}(\lambda -\xi _{N})\dots R_{01}(\lambda -\xi
_{1})=\left( 
\begin{array}{cc}
A(\lambda ) & B(\lambda ) \\ 
C(\lambda ) & D(\lambda )%
\end{array}%
\right)  \label{T}
\end{equation}%
solutions to the trigonometric 6-vertex Yang-Baxter equation:%
\begin{equation}
R_{12}(\lambda -\mu )M_{1}^{\left( K\right) }(\lambda )M_{2}^{\left(
K\right) }(\mu )=M_{2}^{\left( K\right) }(\mu )M_{1}^{\left( K\right)
}(\lambda )R_{12}(\lambda -\mu ),  \label{YB}
\end{equation}%
in the 2$^{N}$-dimensional representation space:%
\begin{equation}
\mathcal{H}=\otimes _{n=1}^{N}\mathcal{H}_{n}.
\end{equation}%
Then the associated transfer matrices:%
\begin{equation}
T^{\left( K\right) }(\lambda )=\text{tr}_{0}M_{0}^{\left( K\right) }(\lambda
)\in \End (\mathcal{H}),  \label{trig-6v-transfer}
\end{equation}%
defines a one parameter family of commuting operators. As in the rational case, in the following we assume that the inhomogeneity condition 
\begin{equation}
\xi _{a}\neq \xi _{b}+r\eta+im\pi \text{ \ }\forall a\neq b\in \{1,...,N%
\}\, ,\,r\in \{-1,0,1\} \text{ and\thinspace \thinspace } m\in\mathbb{Z} \label{Inhomog-cond-trigo}
\end{equation}%
is satisfied. The main properties enjoyed by this transfer matrices are collected in the following Lemma.

\begin{lemma}
The transfer matrix satisfies the following fusion equations:%
\begin{equation}
T^{(K)}(\xi _{a})T^{(K)}(\xi _{a}-\eta )=q\text{-det}M^{(K)}(\xi _{a}),\text{
}\forall a\in \{1,...,N\},
\end{equation}%
where:%
\begin{equation}
q\text{-det}M^{(K)}(\lambda )=a(\lambda )d(\lambda -\eta )\text{det}%
K^{\left( a,\alpha \right) }
\end{equation}%
with%
\begin{equation}
a(\lambda -\eta )\left. =\right. d(\lambda )=\prod_{n=1}^{N}\sinh
(\lambda -\xi _{n}),\text{ \ det} K^{\left( a,\alpha \right)
}=\left( -1\right) ^{a}
\end{equation}%
is the quantum determinant a one-parameter family of central elements of the
Yang-Baxter algebra, which admits the following quadratic form in the
generator of the Yang-Baxter algebra:%
\begin{equation}
q\text{-det}M^{(K)}(\lambda )=A^{(K)}(\lambda )D^{(K)}(\lambda -\eta
)-B^{(K)}(\lambda )C^{(K)}(\lambda -\eta )
\end{equation}%
and others three equivalent ones. Moreover, $T^{(K)}(\lambda )$ is a
trigonometric polynomial of degree $1$ in all the $\xi _{n}$ and of degree $%
N$ in $\lambda $ with the following leading operator coefficients:%
\begin{equation}
T_{\pm N}(S_{z})\equiv \lim_{\lambda \rightarrow \pm \infty }e^{\mp
\lambda N}T^{(K^{\left( a,\alpha \right) })}(\lambda
)=\delta _{0,a} \frac{(-1)^{(1\mp 1)\frac{N}{2}}e^{\pm (\frac{\eta 
N}{2}-\sum_{n=1}^{N}\xi _{n})}}{2^{N-1}}\cosh (%
\frac{\eta }{2}S_{z}\pm \alpha ),  \label{Asymp-T}
\end{equation}%
where:%
\begin{equation}
S_{z}=\sum_{n=1}^{N}\sigma _{n}^{z}.
\end{equation}
\end{lemma}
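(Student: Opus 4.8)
The plan is to establish the assertions of the Lemma by adapting to the $\sinh$-valued six-vertex $R$-matrix the same mechanism behind the rational identity \eqref{Fusion-1-eq}. I would begin with the quantum determinant. Since $q\text{-det}M^{(K)}(\lambda)$ is a central element of the Yang--Baxter algebra and is multiplicative along the factorization $M_0^{(K)}(\lambda)=K_0 R_{0N}(\lambda-\xi_N)\cdots R_{01}(\lambda-\xi_1)$, it suffices to compute the local quantum determinant of a single Lax factor $R_{0n}(\lambda-\xi_n)$ (a product of two shifted $\sinh$ factors) and the determinant of the constant twist. Direct evaluation gives $\det K^{(a,\alpha)}=e^{\alpha}e^{-\alpha}=1$ for $a=0$ and $-e^{\alpha}e^{-\alpha}=-1$ for $a=1$, i.e.\ $(-1)^a$; multiplying the local contributions and matching against $a(\lambda-\eta)=d(\lambda)=\prod_{n=1}^N\sinh(\lambda-\xi_n)$ reproduces $q\text{-det}M^{(K)}(\lambda)=a(\lambda)d(\lambda-\eta)\det K^{(a,\alpha)}$. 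The quadratic form $A^{(K)}(\lambda)D^{(K)}(\lambda-\eta)-B^{(K)}(\lambda)C^{(K)}(\lambda-\eta)$ and its three variants follow from the commutation relations \eqref{YB} specialized to the point where the auxiliary $R$-matrix degenerates onto its antisymmetric subspace, exactly as in the $gl_2$ case.

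The fusion equations are the conceptual core. Applying the decomposition $\mathrm{Id}_{00'}=P^+_{00'}+P^-_{00'}$ of the two auxiliary spaces to $T^{(K)}(\lambda)T^{(K)}(\lambda-\eta)=\text{tr}_{00'}\!\left[M_0^{(K)}(\lambda)M_{0'}^{(K)}(\lambda-\eta)\right]$ yields a relation of the form $T^{(K)}(\lambda)T^{(K)}(\lambda-\eta)=\tau_2(\lambda)+q\text{-det}M^{(K)}(\lambda)$, where $\tau_2$ is the fused spin-$1$ transfer matrix coming from the symmetric projector and the antisymmetric projection reproduces the quantum determinant. It then remains to show $\tau_2(\xi_a)=0$: at $\lambda=\xi_a$ the factor $R_{0a}(0)=\sinh\eta\,\mathbb{P}_{0a}$ in $M_0^{(K)}(\xi_a)$ is the permutation, while the factor $R_{0'a}(-\eta)\propto P^-_{0'a}$ in $M_{0'}^{(K)}(\xi_a-\eta)$ is the local antisymmetric projector, and the symmetric auxiliary projection $P^+_{00'}$ is incompatible with $P^-_{0'a}$ threaded by $\mathbb{P}_{0a}$, so the combination is annihilated. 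Tracking these projectors through the ordered string of $R$-matrices, with care for the trigonometric normalizations, is the step I expect to be the main technical obstacle; it is nonetheless the exact $\sinh$-analogue of the rational identity \eqref{Fusion-1-eq} already in hand.

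Finally, the analytic structure is immediate and the asymptotics are a direct computation. Each entry of $R_{0n}(\lambda-\xi_n)$ is a trigonometric polynomial of degree one in $\lambda$ and in $\xi_n$, so the ordered product of $N$ factors followed by the auxiliary trace has degree $N$ in $\lambda$ and degree one in each $\xi_n$. For the leading operators I would send $\lambda\to\pm\infty$ in each factor: using $\sinh(\lambda-\xi_n+\eta)/\sinh(\lambda-\xi_n)\to e^{\pm\eta}$, every $R_{0n}$ tends to a diagonal operator with leading behaviour $\pm\tfrac{1}{2}\,e^{\pm(\lambda-\xi_n)}\,e^{\pm\eta/2}\,e^{\pm\eta\,\sigma_0^z\sigma_n^z/2}$, the outer sign being $+$ for $\lambda\to+\infty$ and $-$ for $\lambda\to-\infty$. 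The scalar prefactors then collect into the overall $e^{\pm\lambda N}$ growth extracted by $e^{\mp\lambda N}$, the constant $2^{-N}$, the sign $(-1)^{(1\mp1)N/2}$, and the exponential $e^{\pm(\eta N/2-\sum_{n=1}^N\xi_n)}$, while the spin part exponentiates to $e^{\pm\eta\,\sigma_0^z S_z/2}$ with $S_z=\sum_{n=1}^N\sigma_n^z$. Multiplying by $K_0^{(a,\alpha)}$ and using $\text{tr}_0\,e^{\sigma_0^z(\alpha\pm\eta S_z/2)}=2\cosh(\tfrac{\eta}{2}S_z\pm\alpha)$ turns $2^{-N}$ into $2^{-(N-1)}$ and reproduces \eqref{Asymp-T}; the factor $\delta_{0,a}$ appears because for $a=1$ the off-diagonal twist has vanishing trace against the diagonal leading term, so only $a=0$ contributes at leading order.
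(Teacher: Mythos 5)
Your proposal is correct: the paper states this lemma without proof, treating it as a standard consequence of fusion (in the analogous $gl_3$ proposition it merely cites the fusion literature and ``direct calculation using the reduction of the $R$-matrix to permutation and projections operators''), and your argument --- the $P^{+}_{00'}+P^{-}_{00'}$ decomposition with $R(0)\propto\mathbb{P}$ and $R(-\eta)\propto P^{-}$ killing the symmetric part at $\lambda=\xi_a$, multiplicativity of the quantum determinant over local Lax factors, and the term-by-term $\lambda\to\pm\infty$ evaluation yielding $2^{-(N-1)}\cosh(\frac{\eta}{2}S_z\pm\alpha)$ with $\delta_{0,a}$ from the traceless antidiagonal twist --- is exactly that standard route, carried out correctly. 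The one step you flag as delicate, moving $P^{-}_{0'a}$ through the ordered string, is indeed resolved by the identity $P^{+}_{00'}\mathbb{P}_{0a}P^{-}_{0'a}P^{+}_{00'}=P^{+}_{00'}P^{-}_{00'}\mathbb{P}_{0a}P^{+}_{00'}=0$ together with the standard factorization of the projected fused monodromy, so no genuine gap remains.
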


Moreover, it is interesting to present explicitly the interpolation formula
for the transfer matrix which follows from the previous lemma:

\begin{lemma}
Let us take the generic $K^{^{\left( a,\alpha \right) }}\neq vI$,
for any $v\in \mathbb{C}$, then the transfer matrix:%
\begin{equation}
T^{(K^{^{\left( a,\alpha \right) }})}(\lambda )=[\delta
_{0,a}(e^{\alpha }A(\lambda )-e^{-\alpha }D(\lambda ))+\delta
_{1,a}(e^{\alpha }C(\lambda )-e^{-\alpha }B(\lambda ))],
\end{equation}%
for any choice of $\{h_{1},...,h_{N}\}\in \{0,1\}^{N}$\
admits the following interpolation formula:%
\begin{align}
T^{(K^{^{\left( a,\alpha \right) }})}(\lambda )& =2i%
\delta _{0,a}\frac{\cosh \alpha \cosh (\eta S_{z}/2)}{\sinh t_{h_{1},...,h_{N}}}\prod_{n=1}^{N}\sinh
(\lambda -\xi _{n}^{\left( h_{n}\right) })+  \notag \\
& \sum_{n=1}^{N}\left( \frac{\sinh (t_{h_{1},...,h_{N%
}}+\lambda -\xi _{n}^{(h_{n})})}{\sinh t_{h_{1},...,h_{N}}}\right)
^{\delta _{0,a}}\prod_{b\neq n,b=1}^{N}\frac{\sinh (\lambda -\xi
_{b}^{\left( h_{b}\right) })}{\sinh (\xi _{n}^{\left( h_{n}\right) }-\xi
_{b}^{\left( h_{b}\right) })}T^{(K^{^{\left( a,\alpha \right)
}})}(\xi _{n}^{\left( h_{n}\right) }),
\end{align}
where we have defined:%
\begin{equation}
t_{h_{1},...,h_{N}}=\eta (\frac{N}{2}-\sum_{n=1}^{N%
}h_{n})+\frac{i\pi }{2},  \label{t.choice}
\end{equation}%
and moreover it holds:%
\begin{equation}
\delta _{0,a}  \sum_{\epsilon =\pm 1}\epsilon e^{\epsilon (\frac{%
\eta N}{2}-\sum_{n=1}^{N}h_{n})}\cosh (\frac{\eta }{2}%
S_{z}+\epsilon \alpha )=\sum_{a=1}^{N}\frac{T^{(K^{^{\left( a,%
\alpha \right) }})}(\xi _{a}^{\left( h_{a}\right) })}{%
\prod_{b\neq a,b=1}^{N}\sinh (\xi _{a}^{\left( h_{a}\right) }-\xi
_{b}^{\left( h_{b}\right) })},
\end{equation}%
which in the case $\left( a=0,\alpha =i\pi /2\right) $ reads:%
\begin{equation}
\sinh S_{z}=\frac{1}{2ik\cosh \eta (\frac{N}{2}%
-\sum_{n=1}^{N}h_{n})}\sum_{a=1}^{N}\frac{T^{(K^{^{\left(
a=0,\alpha =i\pi /2\right) }})}(\xi _{a}^{\left( h_{a}\right) })}{%
\prod_{b\neq a,b=1}^{N}\sinh (\xi _{a}^{\left( h_{a}\right) }-\xi
_{b}^{\left( h_{b}\right) })}.
\end{equation}
\end{lemma}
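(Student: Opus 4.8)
The plan is to prove the interpolation formula by the classical strategy for trigonometric polynomials---match the values at the nodes and then fix the single remaining coefficient by the leading behaviour---with the essential twist that here the transfer matrix lives in a fixed parity class, which is exactly what forces the extra factor $\sinh(t+\lambda-\xi_n^{(h_n)})/\sinh t$ and the special value of $t$ in \eqref{t.choice}.

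Write $\mu_n=\xi_n^{(h_n)}$ and $t=t_{h_1,\dots,h_N}$. The first assertion is immediate: computing $\mathrm{tr}_0\bigl(K_0^{(a,\alpha)}M_0^{(K)}(\lambda)\bigr)$ and reading the entries of the monodromy matrix \eqref{T} gives the stated expression for $T^{(K^{(a,\alpha)})}(\lambda)$. For the interpolation identity I would first pin down the relevant function space. By the previous Lemma $T^{(K)}$ is a trigonometric polynomial of degree $N$ in $\lambda$; more precisely, for the diagonal twist $(a=0)$ the entries $A,D$ are built from an even number of off-diagonal (spin-flip) insertions of the $R$-matrix, so $T^{(K)}=e^{\alpha}A-e^{-\alpha}D$ only contains the exponentials $e^{(N-2j)\lambda}$, $j=0,\dots,N$, and therefore lies in the $(N+1)$-dimensional space $V$ in which $e^{-N\lambda}T^{(K)}$ is a polynomial of degree $N$ in $z=e^{2\lambda}$. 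Counting $\sinh$ factors shows the right-hand side lies in the same $V$. This observation is the crux: the naive Lagrange functions $\prod_{b\neq n}\sinh(\lambda-\mu_b)$ have the opposite parity (they belong to the degree-$(N-1)$ class), so the degree-one factor $\sinh(t+\lambda-\mu_n)$ is precisely what lifts them back into $V$.

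Next I would establish the node and product structure. Evaluating the right-hand side at $\lambda=\mu_m$, the product $\prod_n\sinh(\lambda-\mu_n)$ vanishes and only the $n=m$ term of the sum survives, where the Lagrange factor and $\bigl(\sinh(t+\lambda-\mu_n)/\sinh t\bigr)^{\delta_{0,a}}$ both equal $1$; hence it returns $T^{(K)}(\mu_m)$. Thus $\Delta(\lambda):=T^{(K)}(\lambda)-[\text{r.h.s.}]$ is an element of $V$ vanishing at $\mu_1,\dots,\mu_N$. Writing $\Delta=e^{-N\lambda}g(e^{2\lambda})$ with $\deg g\le N$ and using that \eqref{Inhomog-cond-trigo} makes the $e^{2\mu_n}$ pairwise distinct, $g$ is divisible by $\prod_n(z-e^{2\mu_n})$, so $\Delta(\lambda)=c\prod_{n=1}^N\sinh(\lambda-\mu_n)$ for a single $S_z$-dependent constant $c$.

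Finally I would fix $c$ by the asymptotics and extract the ``moreover'' identity as a by-product. Expanding both sides as $\lambda\to\pm\infty$ and collecting the coefficients of $e^{\pm N\lambda}$ gives two relations involving $c$ and the single quantity $\Sigma_0:=\sum_n T^{(K)}(\mu_n)/\prod_{b\neq n}\sinh(\mu_n-\mu_b)$, with the leading values $T_{\pm N}(S_z)$ taken from \eqref{Asymp-T}. Subtracting the two relations eliminates $c$ and yields $\Sigma_0=-i\bigl(e^{t}\cosh(\tfrac{\eta}{2}S_z+\alpha)+e^{-t}\cosh(\tfrac{\eta}{2}S_z-\alpha)\bigr)$, which is exactly the stated identity once $\eta(\tfrac{N}{2}-\sum_n h_n)=t-\tfrac{i\pi}{2}$ is substituted; the special case then follows on setting $\alpha=i\pi/2$ and using $\cosh(x\pm i\pi/2)=\pm i\sinh x$. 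Putting $\Sigma_0$ back into either relation, the vanishing of $c$ reduces, after using $t=\eta(\tfrac{N}{2}-\sum_n h_n)+\tfrac{i\pi}{2}$ so that $e^{\pm(t-i\pi/2)}=\mp i\,e^{\pm t}$, to the elementary identity $\cosh(\tfrac{\eta}{2}S_z+\alpha)+\cosh(\tfrac{\eta}{2}S_z-\alpha)=2\cosh\alpha\cosh(\tfrac{\eta}{2}S_z)$; this is the compatibility that lets the single coefficient $2i\cosh\alpha\cosh(\eta S_z/2)/\sinh t$ reproduce both ends at once. I expect this double asymptotic matching, together with the parity bookkeeping that dictates the form of $t$, to be the main obstacle; the $a=1$ case is easier, since the product term drops and the exponent $\delta_{0,a}=0$ removes the extra $\sinh$, degenerating the formula to ordinary Lagrange interpolation of the degree-$(N-1)$ object $e^{\alpha}C-e^{-\alpha}B$.
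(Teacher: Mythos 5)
Your proposal is correct and takes essentially the same route as the paper's proof, which is a two-line sketch: write the general trigonometric interpolation ansatz with free leading coefficient $T_{h_1,\dots,h_N}$ and free parameter $t$, then fix both by matching the known $e^{\pm N\lambda}$ asymptotics \rf{Asymp-T}, forcing the choice \rf{t.choice}. Your version simply supplies the details the paper leaves implicit --- the parity bookkeeping identifying the $(N+1)$-dimensional space containing $T^{(K)}$ for $a=0$ (and the $N$-dimensional one for $a=1$), the reduction of the difference to $c\prod_n\sinh(\lambda-\xi_n^{(h_n)})$, and the elimination between the two asymptotic relations that yields the sum rule and then $c=0$ via $\cosh(\tfrac{\eta}{2}S_z+\alpha)+\cosh(\tfrac{\eta}{2}S_z-\alpha)=2\cosh\alpha\cosh(\tfrac{\eta}{2}S_z)$ --- all of which is consistent with the paper's argument.
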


\begin{proof}
The proof just follows matching the known leading asymptotic of the transfer
matrix $T^{(K^{^{\left( a,\alpha \right) }})}(\lambda )$ with
that corresponding to the general interpolation formula:%
\begin{align}
T_{h_{1},...,h_{N}}\prod_{n=1}^{N}\sinh (\lambda -\xi
_{n}^{\left( h_{n}\right) })+\sum_{n=1}^{N}&\left( \frac{\sinh
(t_{h_{1},...,h_{N}}+\lambda -\xi _{a}^{(h_{a})})}{\sinh
t_{h_{1},...,h_{N}}}\right) ^{\delta _{0,a}} \times \nonumber\\
& \times \prod_{b\neq n,b=1}^{%
N}\frac{\sinh (\lambda -\xi _{b}^{\left( h_{b}\right) })}{\sinh
(\xi _{n}^{\left( h_{n}\right) }-\xi _{b}^{\left( h_{b}\right) })}%
T^{(K^{^{\left( a,\alpha \right) }})}(\xi _{n}^{\left(
h_{n}\right) }),
\end{align}%
fixing $t_{h_{1},...,h_{N}}$ by $(\ref{t.choice})$.
\end{proof}

\subsection{Our approach to the SoV basis}

The general construction for the SoV basis applies to this special case. Let us
see in this framework how it works.

\begin{theorem}
For almost any choice of $\langle S|$, of $K^{^{\left( a,\alpha
\right) }}\neq vI$, for any $v\in \mathbb{C}$, and under the condition $(\ref%
{Inhomog-cond-trigo})$, then the following set of covectors:%
\begin{equation}
\langle h_{1},...,h_{N}|\equiv \langle S|\prod_{n=1}^{N}(%
\frac{T^{(K)}(\xi _{n})}{(e^{\alpha })^{\delta _{0,a}}a(\xi _{n})}%
)^{h_{n}}\text{ \ for any }\{h_{1},...,h_{N}\}\in \{0,1\}^{\otimes 
N},
\end{equation}%
forms a covector basis of $\cal{H}$. In particular, we can take the
state $\langle S|$ of the following tensor product form:%
\begin{equation}
\langle S|=\bigotimes_{a=1}^{N}(x,y)_{a},
\end{equation}%
asking that $xy\neq 0$ in the case $K^{^{\left( a=0,\alpha
\right) }}\neq vI$, for any $v\in \mathbb{C},$ while asking that $y\neq \pm
e^{\alpha }x$ in the case $K^{^{\left( a=1,\alpha \right) }}$.
\end{theorem}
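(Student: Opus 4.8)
The plan is to read this theorem as the specialization to the trigonometric $6$-vertex model of the general deformation result of Section~2.4 (the proposition asserting that the SoV set is a covector basis whenever the $R$-matrix admits a Yangian limit), which itself reduces matters to the rational $n=2$ instance of Proposition~\ref{gln-basis}. First I would dispose of the normalization: each prefactor $(e^{\alpha})^{\delta_{0,a}}a(\xi_n)$ is a nonzero scalar, since $(e^{\alpha})^{\delta_{0,a}}\in\{1,e^{\alpha}\}$ and $a(\xi_n)=\prod_{m}\sinh(\xi_n+\eta-\xi_m)$ is nonvanishing under $(\ref{Inhomog-cond-trigo})$, which excludes $\xi_n-\xi_m\in -\eta+i\pi\mathbb{Z}$ and keeps $\sinh\eta\neq0$; dividing each factor by such a scalar merely rescales each covector by a nonzero constant and hence does not affect whether the set is a basis, so it suffices to show that $\langle S|\prod_n(T^{(K)}(\xi_n))^{h_n}$ is a basis. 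Next I would record that $K^{(a,\alpha)}$ is $w$-simple on $\mathbb{C}^2$ under the hypothesis $K\neq vI$: for $a=0$ the eigenvalues are $e^{\pm\alpha}$, distinct precisely because $K\neq vI$, and for $a=1$ they are $\pm1$, always distinct, so in both cases $K$ is diagonalizable with simple spectrum.

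The second step is to check the Yangian-limit hypothesis for the trigonometric $6$-vertex $R$-matrix. Rescaling the spectral parameters by $\eta$ and normalizing by $\sinh\eta$, its three independent entries become $\sinh(\eta(\lambda-\mu+1))/\sinh\eta$, $\sinh(\eta(\lambda-\mu))/\sinh\eta$ and $\sinh\eta/\sinh\eta$, which tend to $\lambda-\mu+1$, $\lambda-\mu$ and $1$ as $\eta\to0$; hence the rescaled $R$-matrix converges to the rational $6$-vertex $R$-matrix $R^Y_{a,b}(\lambda-\mu)$ of the $n=2$ case. Since $K^{(a,\alpha)}$ is $\eta$-independent and is a valid scalar symmetry of both $R$-matrices, the deformation proposition applies and yields the basis property for almost any $\langle S|$ and for almost all values of $\eta$ and of the inhomogeneities, under the sole condition that $K$ is $w$-simple, which is already verified.

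To obtain the explicit tensor-product statement it remains, exactly as in the rational case, to produce a local covector $\langle S,a|=(x,y)$ for which $\{\langle S,a|K_a^{h}:h\in\{0,1\}\}$ is a basis of $V_a^{\ast}$, i.e. for which the determinant $\det\big((x,y)K_a^{\,i-1}|e_j\rangle\big)_{i,j\in\{1,2\}}$ is nonzero. A direct computation gives, for $K^{(0,\alpha)}$,
\begin{equation}
\det\left(\begin{array}{cc} x & y\\ xe^{\alpha} & ye^{-\alpha}\end{array}\right)=xy\,(e^{-\alpha}-e^{\alpha}),
\end{equation}
which is nonzero iff $xy\neq0$ (recall $e^{\alpha}\neq e^{-\alpha}$ since $K\neq vI$), while for $K^{(1,\alpha)}$,
\begin{equation}
\det\left(\begin{array}{cc} x & y\\ ye^{-\alpha} & xe^{\alpha}\end{array}\right)=x^2e^{\alpha}-y^2e^{-\alpha},
\end{equation}
which is nonzero iff $y\neq\pm e^{\alpha}x$. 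These are precisely the stated conditions, so $\langle S|=\bigotimes_{a}(x,y)_a$ meets the hypothesis of the deformation proposition and the theorem follows.

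I expect the only genuinely delicate point to be the reduction itself rather than any computation. A naive attempt to mimic the rational proof directly --- setting $\xi_a=a\xi$ and extracting the leading coefficient of $\det_{2^N}\mathcal{M}$ as $\xi\to\infty$ --- does not factorize here into local $2\times2$ determinants, because at large argument the trigonometric $R$-matrix tends to the nontrivial diagonal matrix $\operatorname{diag}(e^{\eta},1,1,e^{\eta})$ rather than to the identity, so that $T^{(K)}(\xi_l)$ picks up spin-configuration-dependent factors that entangle the sites. Passing through the Yangian limit is exactly what removes this obstruction, since $\operatorname{diag}(e^{\eta},1,1,e^{\eta})\to I$ as $\eta\to0$, after which the continuity and genericity argument transports the non-vanishing of $\det_{2^N}\mathcal{M}$ from the rational point to generic values of $\eta$.
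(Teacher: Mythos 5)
Your proposal is correct and takes essentially the same route as the paper: the paper's proof likewise invokes the trigonometric-to-rational ($\eta\to 0$) limit of subsection 2.4 and then reduces the tensor-product case to the non-vanishing of the local determinant $\det\bigl((x,y)K^{i-1}|e_{j}\rangle\bigr)_{i,j\in\{1,2\}}$, whose evaluation gives exactly your conditions $xy\neq 0$ (case $a=0$) and $y\neq\pm e^{\alpha}x$ (case $a=1$). Your additional explicit checks — the $w$-simplicity of $K^{(a,\alpha)}$, the Yangian limit after rescaling by $\eta$ and normalizing by $\sinh\eta$, and the correctly signed quadratic $e^{\alpha}x^{2}-e^{-\alpha}y^{2}$ (the paper's displayed form $Bx^{2}+(D-A)xy+Cy^{2}$ with $C=\delta_{0,a}e^{-\alpha}$ contains minor sign/index typos) — merely make explicit what the paper leaves implicit.
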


\begin{proof}
The proof proceed taking the limit of the trigonometric $R$-matrix toward the rational case as explained in the general subsection 2.4. In particular, we have a basis choosing the
state of the above factorized form once the condition:%
\begin{equation}
\text{det}||\left( (x,y)K^{i-1}e_{j}(a)\right) _{i,j\in \{1,2\}}||=\text{det}%
\left( 
\begin{array}{cc}
x & y \\ 
Ax+Cy & Bx+Dy%
\end{array}%
\right) =Bx^{2}+(D-A)xy+Cy^{2},
\end{equation}%
is satisfied, where we have defined:\begin{equation}
A=\delta _{0,a}e^{\alpha },\text{ }D=\delta _{0,a}e^{-\alpha },\text{ }%
B=\delta _{1,a}e^{\alpha },\text{ }C=\delta _{0,a}e^{-\alpha },
\end{equation}
which clearly leads to the given requirements on the components $x,y\in \mathbb{C}$
of the two dimensional covector.
\end{proof}

\subsection{Comparison with the Sklyanin's SoV construction}

Let us start recalling that the Sklyanin's approach for the transfer
matrices associated to the trigonometric 6-vertex Yang-Baxter algebra have
been developed only for the case of a twist matrix of the form $K^{^{\left(
a=1,\alpha \right) }}$ \cite{Nic12,Nic13,Nic13a,Nic13b,FalKN14,KitMN14,NicT15}. In the elliptic 8-vertex Yang-Baxter
algebra it was possible to analyze also transfer matrices associated to
twist matrix $K=I$ or $\sigma ^{z}$ \cite{Nic13a,FalN13,LevNT15,NicT16}, however, only after the introduction of
Baxter's like gauge transformations. It can be interesting to analyze if the
trigonometric version of these Baxter's gauge transformation allow also to
describe in the generalized Sklyanin's approach the case $K=I$ or $\sigma
^{z}$. We leave this question for further analysis, on the other hand, our
approach to SoV applies for any $K^{^{\left( a,\alpha \right)
}}\neq vI$.

So the comparison can be made only in the case $K^{^{\left( a=1,%
\alpha \right) }}$ for which we have:

\begin{lemma}
Let us fix $a=1$, so that it holds:%
\begin{eqnarray}
A^{(K)}(\lambda ) &=&e^{\alpha }C(\lambda ),\text{ \ }%
B^{(K)}(\lambda )=e^{\alpha }D(\lambda ), \\
C^{(K)}(\lambda ) &=&e^{-\alpha }A(\lambda ),\text{ \ }%
D^{(K)}(\lambda )=e^{-\alpha }B(\lambda ),
\end{eqnarray}%
and the Sklyanin's SoV basis is the covector basis of $B^{(K)}(\lambda )$:%
\begin{equation}
\underline{\langle h_{1},...,h_{N}|}B^{(K)}(\lambda )\equiv e^{\alpha }\prod_{a=1}^{N}\sinh (\lambda -\xi _{a}+h_{a}\eta )%
\underline{\langle h_{1},...,h_{N}|},
\end{equation}%
where we have defined:%
\begin{equation}
\underline{\langle h_{1},...,h_{N}|}\equiv \langle 0|\prod_{a=1}^{%
N}(\frac{A^{(K)}(\xi _{a},\{\xi \})}{ia(\xi _{a})}%
)^{h_{a}}\text{ \ for any }\{h_{1},...,h_{N}\}\in \{0,1\}^{\otimes 
N},
\end{equation}%
so that $B^{(K)}(\lambda )$ is diagonalizable with simple spectrum. Then our
SoV approach reproduces the Sklyanin's SoV approach, i.e. it holds:%
\begin{equation}
\langle h_{1},...,h_{N}|=\underline{\langle h_{1},...,h_{N%
}|}\text{ \ for any }\{h_{1},...,h_{N}\}\in \{0,1\}^{\otimes 
N}
\end{equation}%
as soon as we fix:%
\begin{equation}
\langle S|=\bigotimes_{a=1}^{N}(1,0)_{a}.
\end{equation}
\end{lemma}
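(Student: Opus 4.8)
The plan is to prove that the two covector families coincide by induction on the total level $l=\sum_{a=1}^{N}h_{a}$, mirroring exactly the argument already carried out in the rational $Y(gl_2)$ comparison; the only genuinely new inputs are the replacement of rational by trigonometric structure functions and the specific identification of the twisted monodromy entries in the $a=1$ case. Since $\langle S|=\langle 0|=\otimes_{a=1}^{N}(1,0)_a$, both families start from the same reference covector, so the base case $l=0$ is immediate. For the inductive step I would exploit the fact that our generator differs from Sklyanin's only through the trace structure: writing $T^{(K)}(\xi_c)=A^{(K)}(\xi_c)+D^{(K)}(\xi_c)$, I would show that on a Sklyanin state with $h_c=0$ the diagonal entry $D^{(K)}(\xi_c)$ acts as zero, so that acting with $T^{(K)}(\xi_c)$ reproduces exactly the creation part $A^{(K)}(\xi_c)$ that builds $\underline{\langle h_1,\dots,h_N|}$.

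The key lemma is therefore $\underline{\langle h'_1,\dots,h'_N|}\,D^{(K)}(\xi_c)=0$ whenever $h'_c=0$. First I record the vacuum identity: since in the $a=1$ case $D^{(K)}(\lambda)=e^{-\alpha}B(\lambda)$ and $\langle 0|B(\lambda)=0$, one has $\langle 0|D^{(K)}(\xi_c)=0$, the trigonometric analogue of \eqref{D-zeros}. To promote this to a general level-$l$ Sklyanin state $\langle 0|\prod_{r}A^{(K)}(\xi_{\pi(r)})/(\cdots)$ — with $\pi$ ordering the occupied sites exactly as in the rational proof — I would commute $D^{(K)}(\xi_c)$ leftward through each factor $A^{(K)}(\xi_{\pi(r)})$ using the trigonometric analogue of the Yang--Baxter relation \eqref{YB-AD-BC}, whose commutator produces terms proportional to $B^{(K)}(\xi_c)C^{(K)}(\xi_{\pi(r)})-B^{(K)}(\xi_{\pi(r)})C^{(K)}(\xi_c)$. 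The crucial point, identical to the rational case, is that the partial Sklyanin states standing to the left of these $B^{(K)}$'s are eigencovectors of $B^{(K)}(\lambda)$ with eigenvalue $e^{\alpha}\prod_{a}\sinh(\lambda-\xi_a+h_a\eta)$, which vanishes at $\lambda=\xi_c$ and at $\lambda=\xi_{\pi(r)}$ precisely because the corresponding occupation numbers are zero there. Hence every commutator term drops, $D^{(K)}(\xi_c)$ slides all the way to $\langle 0|$, and the vacuum identity closes the argument.

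With this in hand the induction closes: applying $T^{(K)}(\xi_c)/a(\xi_c)$ to $\underline{\langle h'_1,\dots,h'_N|}$ (equal to $\langle h'_1,\dots,h'_N|$ by the inductive hypothesis) yields $\underline{\langle h'_1,\dots,h'_N|}A^{(K)}(\xi_c)/a(\xi_c)$, the next Sklyanin covector, the distinctness needed for the $B^{(K)}$-eigenvalue vanishing being supplied by \eqref{Inhomog-cond-trigo}. I expect the main obstacle to be bookkeeping rather than conceptual: one must keep careful track of the normalizing constants (the powers of $i$, of $e^{\alpha}$ and of $a(\xi_n)$) so that the two definitions match on the nose, and one must check that the vanishing of the $B^{(K)}$ eigenvalue persists at every intermediate stage of the commutation, i.e. that no point $\xi_{\pi(r)}$ collides with $\xi_c$ modulo the shifts excluded by \eqref{Inhomog-cond-trigo}. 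Finally, as in the rational case, I would note that the very same manipulation shows directly that our covectors $\langle h_1,\dots,h_N|$ are themselves eigencovectors of $B^{(K)}(\lambda)$, giving an independent confirmation that the two bases agree.
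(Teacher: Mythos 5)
Your proof is correct and is essentially the paper's own argument: the paper proves this lemma by the same induction on the number of raised sites as in the rational $Y(gl_2)$ case, using precisely your two inputs — the vacuum identity $\langle 0|D^{(K)}(\xi_a)=0$ (immediate here since $D^{(K)}(\lambda)=e^{-\alpha}B(\lambda)$) and the trigonometric Yang--Baxter relation $A^{(K)}(\mu)D^{(K)}(\lambda)=D^{(K)}(\lambda)A^{(K)}(\mu)+\frac{\sinh\eta}{\sinh(\lambda-\mu)}\bigl(B^{(K)}(\lambda)C^{(K)}(\mu)-B^{(K)}(\mu)C^{(K)}(\lambda)\bigr)$, with the commutator terms killed by the vanishing of the $B^{(K)}$-eigenvalue at $\xi_c$ and $\xi_{\pi(r)}$ on the partial states. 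Your closing remarks on normalization bookkeeping and on the direct verification that the new basis diagonalizes $B^{(K)}(\lambda)$ also match observations made in the paper's rational-case discussion.
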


\begin{proof}
The proof is done by induction just by the same steps of the rational
6-vertex case, by using the identity:%
\begin{equation}
\langle 0|D^{(K)}(\xi_a )=0
\end{equation}%
and the following Yang-Baxter commutation relations:%
\begin{equation}
A^{(K)}\left( \mu \right) D^{(K)}\left( \lambda \right) =D^{(K)}\left(
\lambda \right) A^{(K)}\left( \mu \right) +\frac{\sinh \eta }{\sinh (\lambda
-\mu )}(B^{(K)}\left( \lambda \right) C^{(K)}\left( \mu \right) -B^{(K)}\left( \mu
\right) C^{(K)}\left( \lambda \right) ).
\end{equation}
\end{proof}

\subsection{Transfer matrix spectrum in our SoV scheme}

Let us show here how in our SoV schema it is characterized the transfer
matrix spectrum. In our introductory chapter, we have anticipated that in
our SoV basis the separate relations are given directly by the
particularization of the fusion relations at the spectrum of the separate
variables. Let us define:%
\begin{equation}
g_{n}^{(a)}(\lambda )=\left( \frac{\cosh (\eta N/2+\lambda -\xi
_{n})}{\cosh \eta N/2}\right) ^{\delta _{0,a}}\prod_{b\neq n,b=1}^{%
N}\frac{\sinh (\lambda -\xi _{b})}{\sinh (\xi _{n}-\xi _{b})},
\end{equation}%
then the following theorem holds.

\begin{theorem}\label{Ch-SoV-discrete-trig}
Let us assume that $K\neq vI$, for any $v\in \mathbb{C}$, and that the
inhomogeneities $\{\xi _{1},...,\xi _{N}\}\in \mathbb{C}^{N%
}$ satisfy the condition $(\ref{Inhomog-cond-trigo})$, then the spectrum of $%
T^{(K)}(\lambda )$ is characterized by:%
\begin{equation}
\Sigma _{T^{(K)}}=\bigcup_{l=\{-N,2-N,....,+N-2,%
N\}}\Sigma _{T^{(K)}}^{\left( l\right) },
\end{equation}%
where we have defined:%
\begin{align}
& \Sigma _{T^{(K)}}^{\left( l\right) }=\notag \\
&= \left\{ t(\lambda ):t(\lambda )=2
\delta _{0,a}  \frac{\cosh \alpha \cosh (\eta l/2)}{\cosh \eta N/2}\prod_{n=1}^{%
N}\sinh (\lambda -\xi _{n})+
 \sum_{n=1}^{N%
}g_{n}^{(a)}(\lambda )x_{n},\text{\ }\forall \{x_{1},...,x_{N}\}\in
\Sigma _{T}^{\left( l\right) }\right\} 
\end{align}%
$\Sigma _{T}^{\left( l\right) }$ is the set of solutions to the following
inhomogeneous system of $N$ quadratic equations:%
\begin{align}
& x_{n}[2\delta _{0,a}\frac{\cosh \alpha \cosh (\eta l/2)}{\cosh \eta N/2}\prod_{b=1}^{N}\sinh (\xi _{n}-\xi _{b}-\eta )+\sum_{b=1}^{\mathsf{%
N}}g_{b}^{(a)}(\xi _{n}-\eta )x_{b}]=  \notag \\
&\hspace{10 cm}%
\left. =\right. a(\xi _{n})d(\xi _{n}-\eta )\text{det}K,
\label{Q-System- trig}
\end{align}
in $N$ unknown $\{x_{1},...,x_{N}\}$, where the integer $l\in\{-N,2-N,....,+N-2,%
N\}$ is fixed without ambiguity by the following sum rule:
\begin{equation}\label{roots-sum-role}
\delta _{0,a}\sum_{\epsilon =\pm 1}\epsilon e^{\epsilon \frac{\eta N}{2}}\cosh (\frac{\eta }{2}l+\epsilon \alpha
)=\sum_{a=1}^{N}\frac{x_{a}}{\prod_{b\neq a,b=1}^{N}\sinh (\xi _{a}-\xi _{b})}.
\end{equation}
Moreover, $T^{(K)}(\lambda ,\{\xi \})$ has $w$-simple spectrum and for any $t(\lambda
)\in \Sigma _{T^{(K)}}$ the associated unique (up-to normalization that we take to be given by $\langle S | t \rangle=1$)
eigenvector $|t\rangle $ has the following wave-function in the left SoV
basis:%
\begin{equation}
\langle h_{1},...,h_{N}|t\rangle =\prod_{n=1}^{N}\left( 
\frac{t(\xi _{n})}{(e^{\alpha })^{\delta _{0,a}}a(\xi _{n})}%
\right) ^{h_{n}}.  \label{SoV-Ch-T-eigenV-trig}
\end{equation}
\end{theorem}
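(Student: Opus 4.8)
The plan is to follow closely the strategy used for the rational $Y(gl_2)$ case, adapting each ingredient to the trigonometric setting and accounting for the extra discrete sector label $l$. First I would observe that the inhomogeneous quadratic system $(\ref{Q-System- trig})$ is merely the rewriting, through the interpolation formula of the preceding Lemma (with $t_{h_1,\dots,h_N}$ fixed by $(\ref{t.choice})$), of the scalar fusion relations $t(\xi_a)\,t(\xi_a-\eta)=q\text{-det}M^{(K)}(\xi_a)$ for $a\in\{1,\dots,N\}$, applied to a candidate trigonometric polynomial $t(\lambda)$ of the admissible degree whose leading coefficient is the $l$-sector value $2\delta_{0,a}\cosh\alpha\cosh(\eta l/2)/\cosh(\eta N/2)$. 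The sum rule $(\ref{roots-sum-role})$ records the matching of the subleading asymptotics; read in reverse, it fixes the integer $l$ without ambiguity from the data $\{x_a\}$, using the asymptotic identity already derived in the previous Lemma, in which $S_z$ is evaluated on its eigenvalue $l$. Note that for $a=0$ one has $[T^{(K)},S_z]=0$, so every eigenstate lies in a definite $S_z=l$ sector.

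For the forward inclusion, since the transfer matrix operator obeys the fusion equations, any eigenvalue $t(\lambda)$ satisfies their scalar version and hence the system $(\ref{Q-System- trig})$, while the corresponding right eigenvector admits, exactly as in the rational computation, the factorized representation $(\ref{SoV-Ch-T-eigenV-trig})$ in the left SoV basis. The $w$-simplicity is not reestablished here from scratch: it follows at once from the general Proposition of Section 2 relating the independence property to weak non-degeneracy, since the existence of the SoV basis is already granted by the preceding theorem.

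The core of the argument is the reverse inclusion. Given any admissible $t(\lambda)$ solving $(\ref{Q-System- trig})$, I would define $|t\rangle$ through its components $(\ref{SoV-Ch-T-eigenV-trig})$ and prove it is an eigenvector by computing $\langle h_1,\dots,h_N|T^{(K)}(\lambda)|t\rangle$ directly. The tool is the interpolation formula of the previous Lemma evaluated at the shifted nodes $\xi_a^{(h_a)}=\xi_a-h_a\eta$: acting with $T^{(K)}(\xi_a^{(h_a)})$ on $\langle h_1,\dots,h_N|$ shifts $h_a$ by one, producing the factor $a(\xi_a)$ when $h_a=0$ and the quantum determinant $q\text{-det}M^{(K)}(\xi_a)$ when $h_a=1$. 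Using the factorized form of $|t\rangle$ together with the scalar fusion identities, each case collapses to $t(\xi_a^{(h_a)})$, so that $\langle h_1,\dots,h_N|T^{(K)}(\xi_a^{(h_a)})|t\rangle=t(\xi_a^{(h_a)})\langle h_1,\dots,h_N|t\rangle$; reassembling via the interpolation formula then yields the eigenvalue equation for all $\lambda$.

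I expect the main obstacle to be the careful bookkeeping of the leading trigonometric coefficient. Whereas in the rational case the asymptotic coefficient was the fixed scalar $\operatorname{tr}K$, here it carries the $l$-dependent factor $\cosh(\eta l/2)$ tied to the $S_z$-sector. One must verify that the interpolation formula, with its $\sinh t_{h_1,\dots,h_N}$ normalization, reproduces exactly this leading term when acting on $|t\rangle$, i.e. that the $S_z$ eigenvalue seen by the asymptotics of $T^{(K)}(\lambda)$ coincides with the $l$ selected by $(\ref{roots-sum-role})$. Guaranteeing this consistency uniformly across the whole SoV basis, so that the reconstructed eigenvalue indeed lands in the correct sector $\Sigma_{T^{(K)}}^{(l)}$, is the delicate point with no counterpart in the rational argument.
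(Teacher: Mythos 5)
Your outline reproduces the paper's proof faithfully in its first two stages: the identification of the system $(\ref{Q-System- trig})$ with the scalar fusion relations $(\ref{scalar-fusion-trig})$ for trigonometric polynomials with sector asymptotics, the forward inclusion, the $w$-simplicity via the general results of Section 2, and the node-by-node computation $\langle h_1,\dots,h_N|T^{(K)}(\xi_a^{(h_a)})|t\rangle=t(\xi_a^{(h_a)})\langle h_1,\dots,h_N|t\rangle$ using the shift action and fusion. But it is incomplete at exactly the point you yourself flag as delicate, and the remedy you hint at is circular. To pass from the node relations to the eigenvalue equation at generic $\lambda$ you must use the interpolation formula of the preceding Lemma, whose leading term is proportional (for $a=0$) to the \emph{operator} $\cosh(\eta S_z/2)$, not to a central element as in the rational case; so you need to know how $S_z$ acts on the candidate vector $|t\rangle$. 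Your observation that $[T^{(K)},S_z]=0$ forces eigenstates into definite $S_z$ sectors cannot supply this: at this stage $|t\rangle$ is only defined through its components $(\ref{SoV-Ch-T-eigenV-trig})$ and is not yet known to be an eigenvector, which is precisely what is being proved. Saying that the consistency "must be verified" leaves the reverse inclusion unproved: without it you have the eigenvalue relation only at the $2N$ interpolation nodes $\xi_a^{(h_a)}$, not for all $\lambda$.

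The paper closes this gap with a concrete mechanism that your proposal lacks. The sum rule $(\ref{roots-sum-role})$ forces the candidate $t(\lambda)$ to have the sector asymptotics $(\ref{Verif-asym})$, and interpolating $t$ at the shifted nodes yields, for \emph{every} configuration $\{h_1,\dots,h_N\}$, a sum rule expressing $\sum_{a=1}^{N} t(\xi_a^{(h_a)})\,/\prod_{b\neq a}\sinh(\xi_a^{(h_a)}-\xi_b^{(h_b)})$ through $\cosh(\tfrac{\eta}{2}l\pm\alpha)$. Pairing each such identity with the analogous operator identity for $T^{(K)}$ (with $\cosh(\tfrac{\eta}{2}S_z+\epsilon\alpha)$ in place of the scalar) and with the already-established node relations, one obtains that $\langle h_1,\dots,h_N|\cosh(\tfrac{\eta}{2}S_z+\epsilon\alpha)|t\rangle$ equals $\cosh(\tfrac{\eta}{2}l+\epsilon\alpha)\langle h_1,\dots,h_N|t\rangle$ on every covector of the SoV basis, whence $S_z|t\rangle=l\,|t\rangle$ (for $a=0$, $\alpha\neq in\pi$). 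Only after this step can the operator-valued leading term of the interpolation formula be replaced by the scalar $2\cosh\alpha\cosh(\eta l/2)/\cosh(\eta N/2)$ so that the reconstructed eigenvalue indeed lies in $\Sigma_{T^{(K)}}^{(l)}$. Adding this argument — asymptotics from the sum rule, shifted-node sum rules uniformly in $\{h\}$, and the resulting $S_z$-eigenvector property of $|t\rangle$ — would turn your proposal into a complete proof coinciding with the paper's.
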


\begin{proof}
Let us start observing that the inhomogeneous system of $N$ quadratic
equations $(\ref{Q-System- trig})$ in $N$ unknown $\{x_{1},...,x_{N}\}$ is
nothing else the rewriting of the transfer matrix fusion equations:%
\begin{equation}
t(\xi _{a})t(\xi _{a}-\eta )=q\text{-det}M^{(K)}(\xi _{a}),\text{ }\forall
a\in \{1,...,N\},  \label{scalar-fusion-trig}
\end{equation}%
for the set of all the trigonometric polynomials of degree $N$ with
asymptotic terms given by $(\ref{Asymp-T})$. Then, it is clear that any
eigenvalue of the transfer matrix $T^{(K)}(\lambda ,\{\xi \})$ is solution
of this system, satisfying in particular the associated sum rule, and that
the associated right eigenvector $|t\rangle $ admits the characterization $(%
\ref{SoV-Ch-T-eigenV-trig})$ in the left SoV basis.

So we are left with the proof of the reverse statement, i.e. that any
polynomial $t(\lambda )$ of the above form satisfying this system is an
eigenvalue of the transfer matrix and this is proven by proving that the
vector $|t\rangle $ characterized by $(\ref{SoV-Ch-T-eigenV-trig})$ is a
transfer matrix eigenstate, i.e. we have to show:%
\begin{equation}
\langle h_{1},...,h_{N}|T^{(K)}(\lambda ,\{\xi \})|t\rangle =t(\lambda
)\langle h_{1},...,h_{N}|t\rangle ,\text{ }\forall \{h_{1},...,h_{N}\}\in
\{0,1\}^{\otimes N}.
\end{equation}%
Let us observe that:%
\begin{equation}
\langle h_{1},...,h_{n},...,h_{N}|T^{(K)}(\xi _{n}^{\left( h_{n}\right)
})|t\rangle =\left\{ 
\begin{array}{l}
(e^{\alpha })^{\delta _{0,a}}a(\xi _{n})\langle h_{1},...,h_{n}^{\prime
}=1,...,h_{N}|t\rangle \text{ \ \ if \ }h_{n}=0 \\ 
q\text{-det}M^{(K)}(\xi _{n})\frac{\langle h_{1},...,h_{n}^{\prime
}=0,...,h_{N}|t\rangle }{(e^{\alpha })^{\delta _{0,a}}a(\xi _{n})}\text{ \ \
if \ }h_{n}=1%
\end{array}%
\right. 
\end{equation}%
which by the definition of the state $|t\rangle $ can be rewritten as:%
\begin{equation}
\langle h_{1},...,h_{n},...,h_{N}|T^{(K)}(\xi _{n}^{\left( h_{n}\right)
})|t\rangle =\left\{ 
\begin{array}{l}
t(\xi _{n})\prod_{m\neq n,m=1}^{N}\left( \frac{t(\xi _{m})}{(e^{\alpha
})^{\delta _{0,a}}a(\xi _{m})}\right) ^{h_{m}}\text{ \ \ if \ }h_{n}=0 \\ 
\frac{q\text{-det}M^{(K)}(\xi _{n})}{(e^{\alpha })^{\delta _{0,a}}a(\xi _{n})%
}\prod_{m\neq n,m=1}^{N}\left( \frac{t(\xi _{m})}{(e^{\alpha })^{\delta
_{0,a}}a(\xi _{m})}\right) ^{h_{m}}\text{ if \ }h_{n}=1%
\end{array}%
\right. 
\end{equation}%
and finally by the equation $(\ref{scalar-fusion-trig})$ reads:%
\begin{equation}
\langle h_{1},...,h_{n},...,h_{N}|T^{(K)}(\xi _{n}^{\left( h_{n}\right)
})|t\rangle =\left\{ 
\begin{array}{l}
t(\xi _{n})\prod_{n\neq a,n=1}^{N}\left( \frac{t(\xi _{m})}{(e^{\alpha
})^{\delta _{0,a}}a(\xi _{m})}\right) ^{h_{m}}\text{ \ \ \ if \ }h_{n}=0 \\ 
t(\xi _{n}-\eta )\prod_{m=1}^{N}\left( \frac{t(\xi _{m})}{(e^{\alpha
})^{\delta _{0,a}}a(\xi _{m})}\right) ^{h_{m}}\text{ \ \ if \ }h_{n}=1%
\end{array}%
\right. ,
\end{equation}%
and so:%
\begin{equation}
\langle h_{1},...,h_{n},...,h_{N}|T^{(K)}(\xi _{n}^{\left( h_{n}\right)
})|t\rangle =t(\xi _{n}^{\left( h_{n}\right) })\langle
h_{1},...,h_{n},...,h_{N}|t\rangle .  \label{Inter-eigen-trig}
\end{equation}%
Let us comment that the sum rule \rf{roots-sum-role} implies that the trigonometric polynomial $t(\lambda )$
satisfies the following asymptotics: 
\begin{equation}
\lim_{\lambda \rightarrow \pm \infty }e^{\mp \lambda N}t(\lambda )=\frac{%
(-1)^{(1\mp 1)\frac{N}{2}}e^{\pm (\frac{\eta N}{2}-\sum_{n=1}^{N}\xi _{n})}}{2^{N-1}}%
\cosh (\frac{\eta }{2}l\pm \alpha )\equiv T_{\pm N}(l),  \label{Verif-asym}
\end{equation}%
from which it also follows that the following general sum rule is satisfied:%
\begin{equation}
\sum_{a=1}^{N}\frac{t(\xi _{a}^{\left( h_{a}\right) })}{\prod_{b\neq
a,b=1}^{N}\sinh (\xi _{a}^{\left( h_{a}\right) }-\xi _{b}^{\left(
h_{b}\right) })}\left. =\right. \delta _{0,a}\sum_{\epsilon =\pm 1}\epsilon
e^{\epsilon (\frac{\eta N}{2}-\sum_{n=1}^{N}h_{n})}\cosh (\frac{\eta }{2}%
l+\epsilon \alpha ).
\end{equation}%
Now by using this identity and the one for the transfer matrix%
\begin{equation}
\sum_{a=1}^{N}\frac{T^{(K^{^{\left( a,,\alpha \right) }})}(\xi _{a}^{\left(
h_{a}\right) })}{\prod_{b\neq a,b=1}^{N}\sinh (\xi _{a}^{\left( h_{a}\right)
}-\xi _{b}^{\left( h_{b}\right) })}=\delta _{0,a}\sum_{\epsilon =\pm
1}\epsilon e^{\epsilon (\frac{\eta N}{2}-\sum_{n=1}^{N}h_{n})}\cosh (\frac{%
\eta }{2}S_{z}+\epsilon \alpha ),
\end{equation}%
we obtain:%
\begin{eqnarray}
&&\langle h_{1},...,h_{N}|\left( \delta _{0,a}\sum_{\epsilon =\pm 1}\epsilon
e^{\epsilon (\frac{\eta N}{2}-\sum_{n=1}^{N}h_{n})}\cosh (\frac{\eta }{2}%
S_{z}+\epsilon \alpha )\right) |t\rangle  \\
&&\left. =\right. \langle h_{1},...,h_{N}|\left( \sum_{a=1}^{N}\frac{%
T^{(K^{^{\left( a,,\alpha \right) }})}(\xi _{a}^{\left( h_{a}\right) })}{%
\prod_{b\neq a,b=1}^{N}\sinh (\xi _{a}^{\left( h_{a}\right) }-\xi
_{b}^{\left( h_{b}\right) })}\right) |t\rangle  \\
&&\left. =\right. \left( \sum_{a=1}^{N}\frac{t(\xi _{a}^{\left( h_{a}\right)
})}{\prod_{b\neq a,b=1}^{N}\sinh (\xi _{a}^{\left( h_{a}\right) }-\xi
_{b}^{\left( h_{b}\right) })}\right) \langle h_{1},...,h_{N}|t\rangle  \\
&&\left. =\right. \left( \delta _{0,a}\sum_{\epsilon =\pm 1}\epsilon
e^{\epsilon (\frac{\eta N}{2}-\sum_{n=1}^{N}h_{n})}\cosh (\frac{\eta }{2}%
l+\epsilon \alpha )\right) \langle h_{1},...,h_{N}|t\rangle 
\end{eqnarray}%
that is $|t\rangle $ is an eigenvector of $S_{z}$ with eigenvalue $l$%
\begin{equation}
S_{z}|t\rangle =|t\rangle l,
\end{equation}%
for any $t(\lambda )\in \Sigma _{T^{(K)}}^{\left( l\right) }$ with $a=0$ and
for any $\alpha \neq in\pi $ with $n$ integer. So that by using the
interpolation formula for both the transfer matrix and the function $%
t(\lambda )\in \Sigma _{T^{(K)}}^{\left( l\right) }$ we prove our theorem.\end{proof}

The previous characterization of the spectrum allows to introduce an equivalent characterization in terms of a 
functional equation, the so-called quantum
spectral curve equation, which in the case at hand is a second order Baxter's difference equation.  Here, we present the functional equation reformulation of the SoV
spectrum characterization only in the case of diagonal $K$-twist as this case cannot be directly derived in the standard
SoV Sklyanin's approach while it can be achieved with our new formulation of SoV. For the case of non-diagonal $K$-twist the
reformulation proven in the standard SoV \cite{NicT15} applies as well as, of course, to our present reformulation of the SoV basis.

\begin{theorem}
Let us assume that the twist matrix has the form $K^{^{\left( a=0%
,\alpha \right) }}\neq xI$, for any $x\in \mathbb{C}$, that the
inhomogeneities $\{\xi _{1},...,\xi _{N}\}\in \mathbb{C}^{N%
}$ satisfy the condition $(\ref{Inhomog-cond-trigo})$. Moreover, let us introduce
the coefficients:%
\begin{equation}
\alpha (\lambda )=\beta (\lambda )\beta (\lambda -\eta ),\text{ }\beta
(\lambda )=\mathsf{k}_{0}\mathsf{\,}a(\lambda ),
\end{equation}%
with $\mathsf{k}_{0}$ the eigenvalue $e^{\alpha }$ of $K$ and let $%
t(\lambda )$ be an entire function of $\lambda $, then $t(\lambda )\in
\Sigma _{T^{(K)}}^{\left( N-2\mathsf{M}\right) }$ if and only if
there exists a unique polynomial:%
\begin{equation}
Q_{t}(\lambda )=\prod_{n=1}^{\mathsf{M}}\sinh(\lambda -\lambda _{n})\text{, with }%
\mathsf{M}\leq N\text{ \ such that }\lambda _{n}\neq \xi _{m}+ir\pi \text{ \ }\forall r\in\mathbb{Z},
\label{Q-form-trig}
\end{equation}%
for any $\left( n,m\right) \in \{1,...,\mathsf{M}\}\times \{1,...,N%
\},$ such that $t(\lambda )$ and $Q_{t}(\lambda )$ are solutions of the following
quantum spectral curve functional equation:%
\begin{equation}
\alpha (\lambda )Q_{t}(\lambda -2\eta )-\beta (\lambda )t(\lambda -\eta
)Q_{t}(\lambda -\eta )+q\text{-det}M^{(K)}(\lambda ,\{\xi \})Q_{t}(\lambda
)=0.
\end{equation}%
Moreover, up to a normalization the associated transfer matrix eigenvector $%
|t\rangle $ admits the following rewriting in the left SoV basis:%
\begin{equation}
\langle h_{1},...,h_{N}|t\rangle =\prod_{n=1}^{N}Q_{t}(\xi
_{n}^{(h_{n})}).
\end{equation}
\end{theorem}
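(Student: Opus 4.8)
The proof runs closely parallel to that of the preceding rational $Y(gl_2)$ theorem, the only new features being the replacement of ordinary polynomials by trigonometric polynomials and the bookkeeping of the conserved $S_z$-charge. The plan is to prove the two implications separately: first that a solution pair $(t,Q_t)$ of the quantum spectral curve equation yields a transfer-matrix eigenvalue lying in the announced sector, and conversely that every eigenvalue admits such a $Q_t$.

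For the direct implication I would assume that $t(\lambda)$ together with a trigonometric polynomial $Q_t$ of the form \eqref{Q-form-trig} solves the functional equation. Examining its behaviour as $\lambda\to\pm\infty$ forces $t$ to be a trigonometric polynomial of degree $N$ whose leading coefficients satisfy $\mathsf{k}_0^2-\mathsf{k}_0\,\mathrm{tr}\,K+\det K=0$, exactly as in \eqref{Ch-Eq-K-gl2}; since $\mathsf{k}_0=e^{\alpha}$ and $\det K=1$ the complementary root is $e^{-\alpha}$, so the asymptotics of $t$ reproduce \eqref{Verif-asym} with the sector index tied to $\deg Q_t=\mathsf{M}$ by $l=N-2\mathsf{M}$. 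The key algebraic step is then to specialise the functional equation at the two families of points $\lambda=\xi_a$ and $\lambda=\xi_a+\eta$. Since $\beta(\xi_a-\eta)=\mathsf{k}_0\,d(\xi_a)=0$ one has $\alpha(\xi_a)=0$, while $d(\xi_a)=0$ gives $q\text{-det}M^{(K)}(\xi_a+\eta)=0$; dividing by the nonvanishing factors $a(\xi_a)$ and $\beta(\xi_a+\eta)$ (both nonzero under \eqref{Inhomog-cond-trigo}) reduces the equation to the pair
\begin{align}
-t(\xi_a-\eta)Q_t(\xi_a-\eta)+\mathsf{k}_1\,d(\xi_a-\eta)Q_t(\xi_a)&=0,\notag\\
\mathsf{k}_0\,a(\xi_a)Q_t(\xi_a-\eta)-t(\xi_a)Q_t(\xi_a)&=0,\notag
\end{align}
the exact analogues of \eqref{Sys-SoV-1}--\eqref{Sys-SoV-2}. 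Eliminating $Q_t(\xi_a-\eta)$ and using $Q_t(\xi_a)\neq0$ (guaranteed by the condition $\lambda_n\neq\xi_m+ir\pi$ in \eqref{Q-form-trig}) yields the scalar fusion relation \eqref{scalar-fusion-trig}; by Theorem \ref{Ch-SoV-discrete-trig} this shows $t\in\Sigma_{T^{(K)}}$, and the asymptotic matching places it in $\Sigma_{T^{(K)}}^{(N-2\mathsf{M})}$.

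For the converse I would assume $t$ is an eigenvalue, hence satisfies \eqref{scalar-fusion-trig}, and produce $Q_t$. The left-hand side of the functional equation is a trigonometric polynomial in $\lambda$ of degree at most $2N+\mathsf{M}$. It vanishes identically at $\lambda=\xi_a-\eta$ for every $a$, since there all three terms carry the factor $d(\xi_a)=0$, and its two leading exponential coefficients vanish by the characteristic equation \eqref{Ch-Eq-K-gl2} for $\mathsf{k}_0$. It then remains to annihilate it at the $2N$ points $\xi_a,\xi_a+\eta$, which is exactly the pair of relations above; using \eqref{scalar-fusion-trig} these two become equivalent, so it suffices to exhibit a trigonometric polynomial $Q_t$ of the form \eqref{Q-form-trig} solving the single first-order difference relation $\mathsf{k}_0\,a(\xi_a)Q_t(\xi_a-\eta)=t(\xi_a)Q_t(\xi_a)$. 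That such a $Q_t$ exists and is unique, with degree $\mathsf{M}=(N-l)/2$ dictated by the sector, follows along the lines of \cite{KitMNT16,NicT15}; combined with the degree bound and the vanishing of the two asymptotics this forces the functional equation to hold identically.

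Finally, the eigenvector representation is immediate from the first-order relation, which gives $t(\xi_n)/(\mathsf{k}_0\,a(\xi_n))=Q_t(\xi_n-\eta)/Q_t(\xi_n)$; raising to the power $h_n\in\{0,1\}$ and multiplying over $n$ turns the wave function \eqref{SoV-Ch-T-eigenV-trig} into $\prod_{n=1}^N Q_t(\xi_n^{(h_n)})$ up to the overall normalisation $\prod_n Q_t(\xi_n)$. The step I expect to be the genuine obstacle, absent in the rational case, is the sector bookkeeping: because the diagonal twist conserves $S_z$ one must verify that the degree $\mathsf{M}$ of $Q_t$ is rigidly locked to the eigenvalue $l=N-2\mathsf{M}$ through the asymptotic sum rule \eqref{roots-sum-role}, and that the trigonometric zero-count, now with two asymptotic directions rather than one, is correctly saturated by the $3N$ evaluation points together with the two leading coefficients.
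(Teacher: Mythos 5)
Your direct implication, the reduction at the points $\xi_a$, $\xi_a+\eta$, $\xi_a-\eta$, and the final eigenvector rewriting all match the paper's proof essentially step by step, and they are correct. The converse, however, contains a genuine gap, and it sits exactly where you yourself flag ``the genuine obstacle'': the identification of the degree of $Q_t$ with the sector index. You assert that a unique $Q_t$ solving the discrete system $\mathsf{k}_0\,a(\xi_a)Q_t(\xi_a-\eta)=t(\xi_a)Q_t(\xi_a)$ exists ``with degree $\mathsf{M}=(N-l)/2$ dictated by the sector, following \cite{KitMNT16,NicT15}'', but what those arguments actually deliver is existence and uniqueness of a trigonometric polynomial of \emph{some} degree $\mathsf{R}\leq N$ solving that system; nothing in them ties $\mathsf{R}$ to the $S_z$-sector $l=N-2\mathsf{M}$ of the eigenvalue $t$. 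This identification is precisely what your next claim presupposes: in the trigonometric case the two leading exponential coefficients of the left-hand side of the quantum spectral curve involve factors of the type $\cosh(\frac{\eta}{2}(N-2\mathsf{M}_h)\pm\alpha)$ in which both the sector of $t$ and the degree of $Q_t$ enter, and they cancel only if $\deg Q_t=(N-l)/2$; the characteristic equation \eqref{Ch-Eq-K-gl2} for $\mathsf{k}_0$ alone does not make them vanish, unlike the rational case where there is a single asymptotic coefficient and no sector structure. Without the vanishing asymptotics, the zero count does not close either: the left-hand side is a trigonometric polynomial of maximal degree $2N+\mathsf{R}$ with $\mathsf{R}$ possibly equal to $N$, so the $3N$ evaluation points are not by themselves enough, and in any case the degree statement in \eqref{Q-form-trig} together with $t\in\Sigma_{T^{(K)}}^{(N-2\mathsf{M})}$ would remain unproven.

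The paper closes this hole with a dedicated argument that your proposal is missing. It introduces $F_1(\lambda)=t(\lambda)Q_t(\lambda)$ and $F_2(\lambda)=\mathsf{k}_0\,a(\lambda)Q_t(\lambda-\eta)+\mathsf{k}_1\,d(\lambda)Q_t(\lambda+\eta)$, both trigonometric polynomials of degree $N+\mathsf{R}$ which, by the system \eqref{S-SoV-1}--\eqref{S-SoV-2}, agree at the $2N$ points $\{\xi_a,\xi_a-\eta\}$. Writing the interpolation sum rules for $F_1$ and $F_2$ over $N+\mathsf{R}$ nodes drawn from these points and comparing their asymptotics yields the identity \eqref{Asymp-ID}, which for generic $\eta$, $\alpha$ and $\xi_n$ forces $\mathsf{R}=\mathsf{M}$; only then do the two asymptotic coefficients vanish and the $3N$ zeros suffice to kill the left-hand side. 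You correctly diagnosed that this degree--sector lock is the new difficulty relative to the rational theorem, but diagnosing it and deferring it to the literature is not the same as proving it: supplying the $F_1$/$F_2$ comparison (or an equivalent counting of the asymptotic sum rule \eqref{roots-sum-role} against $\deg Q_t$) is what completes the converse.
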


\begin{proof}

Let us start assuming the existence of $Q_{t}(\lambda )$ satisfying with $%
t(\lambda )$ the functional equation, which can be rewritten also in a
Baxter's like form:%
\begin{equation}
\mathsf{k}_{0}a(\lambda )Q_{t}(\lambda -\eta )-t(\lambda )Q_{t}(\lambda )+%
\mathsf{k}_{1}d(\lambda )Q_{t}(\lambda +\eta )=0,
\end{equation}%
where $\mathsf{k}_{1}=e^{-\alpha }$ is the second eigenvalue of $K$. Then it
follows that $t(\lambda )$ is a trigonometric polynomial of degree $N$ with
the following leading coefficient:%
\begin{equation}
\lim_{\lambda \rightarrow +\infty }e^{\mp \lambda N}t(\lambda )\equiv T_{\pm
N}(N-2\mathsf{M}).  \label{Verif-asym-1}
\end{equation}%
Now from the identities:%
\begin{equation}
q\text{-det}M^{(K)}(\xi _{a}+\eta ,\{\xi \})=\alpha (\xi _{a})=0,
\end{equation}%
we have that the functional equation reduces to the system of equations:%
\begin{eqnarray}
-t(\xi _{a}-\eta )Q_{t}(\xi _{a}-\eta )+\mathsf{k}_{1}d(\xi _{a}-\eta
)Q_{t}(\xi _{a}) &=&0,  \label{S-SoV-1} \\
\mathsf{k}_{0}a(\xi _{a})Q_{t}(\xi _{a}-\eta )-t(\xi _{a})Q_{t}(\xi _{a})
&=&0,  \label{S-SoV-2}
\end{eqnarray}%
once computed in the points $\xi _{a}$ and $\xi _{a}+\eta ,$ from which it
follows:%
\begin{equation}
t(\xi _{a}-\eta )\frac{t(\xi _{a})Q_{t}(\xi _{a})}{\mathsf{k}_{0}a(\xi _{a})}%
=\mathsf{k}_{1}d(\xi _{a}-\eta )Q_{t}(\xi _{a})
\end{equation}%
which being $Q_{t}(\xi _{a})\neq 0$ implies that $t(\lambda )$ satisfies
also the system of equations $(\ref{scalar-fusion-trig})$, for any $a\in
\{1,...,N\}$, so that for the previous theorem $t(\lambda )$ is a transfer
matrix eigenvalue belonging to $\Sigma _{T^{(K)}}^{\left( a=\mathsf{N}-2%
\mathsf{M}\right) }$.

Let us now prove the reverse statement, i.e. we assume that $t(\lambda )$ is
a transfer matrix eigenvalue and we want to prove the existence of the
polynomial $Q_{t}(\lambda )$ satisfying the functional equation. It is easy
to remark that in the points $\xi _{a}-\eta $, for any $a\in \{1,...,N\}$,
the functional equation is directly satisfied. Moreover, it is satisfied in
the $2N$ points $\xi _{a}$ and $\xi _{a}+\eta $, for any $a\in \{1,...,N\},$
if the system $(\ref{S-SoV-1})$-$(\ref{S-SoV-2})$ is satisfied. This last
system reduces to the system of the second $N$ equations:%
\begin{equation}
\mathsf{k}_{0}a(\xi _{a})Q_{t}(\xi _{a}-\eta )=t(\xi _{a})Q_{t}(\xi _{a}),
\end{equation}
by the fusion equations satisfied by the transfer matrix eigenvalue $t(\la)$.
Then, following similar steps to those used in the rational case, one can prove the existence and unicity of a polynomial $Q_{t}(\lambda )
$ of the form:%
\begin{equation}
Q_{t}(\lambda )=\prod_{n=1}^{\mathsf{R}}\sinh (\lambda -\lambda _{n})\text{,
with }\mathsf{R}\leq N\text{ \ such that }\lambda _{n}\neq \xi _{m}+if\pi \
\forall n,\,m\text{ and}\ f\in \mathbb{Z},  \label{Q-form-trig2}
\end{equation}%
see for example \cite{NicT15}, solution of the above system of equations. So we are
left with the proof of the identity $\mathsf{R}=\mathsf{M}$ once $t(\lambda
)\in $ $\Sigma _{T^{(K)}}^{\left( N-2\mathsf{M}\right) }$. In order to prove
that, let us define:%
\begin{equation}
F_{1}(\lambda )=t(\lambda )Q_{t}(\lambda ),\text{ }F_{2}(\lambda )=\mathsf{k}%
_{0}a(\lambda )Q_{t}(\lambda -\eta )+\mathsf{k}_{1}d(\lambda )Q_{t}(\lambda
+\eta )
\end{equation}%
these are two trigonometric polynomial of degree $N+\mathsf{R}$ with the
following asymptotic:%
\begin{equation}
\lim_{\lambda \rightarrow +\infty }e^{\mp \lambda (N+\mathsf{R}%
)}F_{h}(\lambda )=F_{h}^{\pm }
\end{equation}%
with $\mathsf{M}_{h}=(\delta _{1,h}\mathsf{M}+\delta _{2,h}\mathsf{R})$ and%
\begin{equation}
F_{h}^{\pm }=\frac{e^{\pm (\frac{\eta N}{2}-\sum_{n=1}^{N}\xi
_{n}-\sum_{n=1}^{\mathsf{R}}\lambda _{n})}}{(-1)^{(1\mp 1)(N+\mathsf{R}%
)}2^{(N+\mathsf{R})-1}}\cosh (\frac{\eta }{2}(N-2\mathsf{M}_{h})\pm \alpha ).
\end{equation}%
Let us now consider the interpolation formulae for the $F_{h}(\lambda )$ in $%
N+\mathsf{R}$ distinct points $\{x_{1},...,x_{N+\mathsf{R}}\}$, (as already
shown in the case of the transfer matrix) then one can prove the following
sum rules:%
\begin{equation}
\sum_{\epsilon =\pm 1}\frac{\cosh (\frac{\eta }{2}(N-2\mathsf{M}%
_{h})+\epsilon \alpha )}{e^{\epsilon (\sum_{n=1}^{N}\xi _{n}+\sum_{n=1}^{%
\mathsf{R}}\lambda _{n}-\frac{\eta N}{2}-\sum_{n=1}^{N+\mathsf{R}}x_{n})}}%
=\sum_{n=1}^{N+\mathsf{R}}\frac{F_{h}(x_{n})}{\prod_{b\neq n,b=1}^{N}\sinh
(x_{n}-x_{b})}.
\end{equation}%
So that we obtain the identity:%
\begin{equation}
\sum_{\epsilon =\pm 1}\frac{\cosh (\frac{\eta }{2}(N-2\mathsf{M})+\epsilon
\alpha )}{e^{\epsilon (\sum_{n=1}^{N}\xi _{n}+\sum_{n=1}^{\mathsf{R}}\lambda
_{n}-\frac{\eta N}{2}-\sum_{n=1}^{N+\mathsf{R}}x_{n})}}=\sum_{\epsilon =\pm
1}\frac{\cosh (\frac{\eta }{2}(N-2\mathsf{R})+\epsilon \alpha )}{e^{\epsilon
(\sum_{n=1}^{N}\xi _{n}+\sum_{n=1}^{\mathsf{R}}\lambda _{n}-\frac{\eta N}{2}%
-\sum_{n=1}^{N+\mathsf{R}}x_{n})}}.  \label{Asymp-ID}
\end{equation}%
for any choice of $\{x_{1},...,x_{N+\mathsf{R}}\}\subset \{\xi _{1},...,\xi
_{N},\xi _{1}-\eta ,...,\xi _{N}-\eta \}$ as the system $(\ref{S-SoV-1})$-$(%
\ref{S-SoV-2})$ implies in particular the identities:%
\begin{equation}
F_{1}(x_{n})=F_{2}(x_{n})\text{ }\forall n\in \{1,....,N+\mathsf{R}\}.
\end{equation}%
Then the identity $(\ref{Asymp-ID})$ implies $\mathsf{R}=\mathsf{M}$ being $%
\eta ,$ $\alpha $ and $\xi _{n}$ arbitrary.

So that we have shown that the l.h.s. of the quantum spectral curve equation
is zero in $3N$ different points plus the points at infinity, this last
statement being true as we have shown $\mathsf{R}=\mathsf{M}$ for $t(\lambda
)\in $ $\Sigma _{T^{(K)}}^{\left( a=\mathsf{N}-2\mathsf{M}\right) }$. Now
being this l.h.s. a trigonometric polynomial in $\lambda $ of maximal degree 
$2N+\mathsf{M}$, with $\mathsf{M}\leq N$, we have proven this functional
equation.

Finally, the
identities:%
\begin{equation}
\prod_{n=1}^{N}Q_{t}(\xi _{n})\prod_{n=1}^{N}\left( \frac{t(\xi _{n})}{%
\mathsf{k}_{0}a(\xi _{n})}\right) ^{h_{n}}=\prod_{n=1}^{\mathsf{N}}Q_{t}(\xi
_{n}^{(h_{n})}),
\end{equation}%
imply our statement on the representation of the transfer matrix eigenstate
in the left SoV basis.
\end{proof}

\subsection{Algebraic Bethe ansatz form of separate states: the diagonal
case}

Let us consider here the case $K^{^{\left( a=0,\alpha \right)
}}\neq xI$, for any $x\in \mathbb{C}$, which can be described also by
algebraic Bethe ansatz and so it allows for a direct comparison with our SoV.

The following identity follows from direct calculation.%
\begin{equation}
T^{(K^{^{\left( a=0,\alpha \right) }})}(\lambda )|0\rangle
=|0\rangle t_{0}(\lambda )\text{ \ with }t_{0}(\lambda )=
e^{\alpha }a(\lambda )+e^{-\alpha }d(\lambda ),
\end{equation}%
so that in our SoV covector basis the eigenstate $|0\rangle $ admits the
following representation:%
\begin{equation}
\langle h_{1},...,h_{N}|t\rangle =1,
\end{equation}%
i.e the associated $Q$-function is the identity.

Let us now denote with $\mathbb{B}\left( \lambda \right) $ the one parameter
family of commuting operators characterized by:%
\begin{equation}
\langle h_{1},...,h_{N}|\mathbb{B}\left( \lambda \right) =\text{ }%
b_{h_{1},...,h_{N}}(\lambda )\langle h_{1},...,h_{N}|,
\end{equation}%
where we have defined:%
\begin{equation}
b_{h_{1},...,h_{N}}(\lambda )=\prod_{n=1}^{N}(\lambda -\xi
_{n}^{(h_{n})}),
\end{equation}%
then the following corollary holds:

\begin{corollary}
Let us assume that the condition $(\ref{Inhomog-cond})$ is satisfied and
that the $K^{^{\left( a=0,\alpha \right) }}\neq xI$, for any $%
x\in \mathbb{C}$, then for any $\mathsf{M}\leq N$ taken the generic 
$t(\lambda )\in \Sigma _{T^{(K)}}^{\left( N-2\mathsf{M}\right) }$
the unique associated eigenvector $|t\rangle $ admits the following ABA
representation:%
\begin{equation}
|t\rangle =(-1)^{\mathsf{NM}}\prod_{n=1}^{\mathsf{M}}\mathbb{B}\left(
\lambda _{n}\right) |0\rangle ,
\end{equation}%
where $\{\lambda _{1},...,\lambda _{\mathsf{M}}\}$ are the Bethe roots, i.e.
the zero of the associated $Q$-function.
\end{corollary}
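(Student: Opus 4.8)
The plan is to prove the vector identity by comparing the two sides component-by-component in the left SoV covector basis $\langle h_1,\dots,h_N|$, and then to use that these covectors form a basis of $\mathcal{H}^\ast$: once all components agree, the two vectors must coincide. This turns the statement into a pure product rearrangement, requiring no further input from the Yang--Baxter algebra beyond the already-established form of the SoV wave function.

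First I would evaluate the components of the right-hand side. By construction $\mathbb{B}(\lambda)$ is diagonal on the left SoV basis, with eigenvalue $b_{h_1,\dots,h_N}(\lambda)=\prod_{k=1}^{N}\sinh(\lambda-\xi_k^{(h_k)})$ in the trigonometric normalization, and the $\mathbb{B}(\lambda_n)$ mutually commute, so acting with their product only multiplies the covector by the product of eigenvalues. Using that the reference state has trivial wave function $\langle h_1,\dots,h_N|0\rangle=1$ (its associated $Q$-function being the constant $1$), I obtain
\begin{equation}
\langle h_1,\dots,h_N|\prod_{n=1}^{\mathsf{M}}\mathbb{B}(\lambda_n)|0\rangle=\prod_{n=1}^{\mathsf{M}}\prod_{k=1}^{N}\sinh(\lambda_n-\xi_k^{(h_k)}).
\end{equation}

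Next I would rewrite every factor through the antisymmetry $\sinh(\lambda_n-\xi_k^{(h_k)})=-\sinh(\xi_k^{(h_k)}-\lambda_n)$. Since there are exactly $N\mathsf{M}$ factors, the collected sign is $(-1)^{N\mathsf{M}}$, and regrouping the inner product over $n$ at fixed $k$ reconstructs the $Q$-function $Q_t(\lambda)=\prod_{n=1}^{\mathsf{M}}\sinh(\lambda-\lambda_n)$ evaluated at the separate-variable points, giving $(-1)^{N\mathsf{M}}\prod_{k=1}^{N}Q_t(\xi_k^{(h_k)})$. The preceding quantum spectral curve theorem identifies this product with the SoV wave function of the eigenvector, $\langle h_1,\dots,h_N|t\rangle=\prod_{k=1}^{N}Q_t(\xi_k^{(h_k)})$. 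Hence the components of $(-1)^{N\mathsf{M}}\prod_{n=1}^{\mathsf{M}}\mathbb{B}(\lambda_n)|0\rangle$ and of $|t\rangle$ coincide for every multi-index, proving the asserted ABA representation.

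I do not anticipate a genuine obstacle: the only delicate points are the sign bookkeeping that produces $(-1)^{N\mathsf{M}}$ and the matching of normalizations, that is, both sides must be taken in the convention where the wave function equals $\prod_{k=1}^{N}Q_t(\xi_k^{(h_k)})$ rather than $\langle S|t\rangle=1$. One should also recall that $\mathsf{M}$ is tied to the spin sector by $l=N-2\mathsf{M}$, as fixed by the sum rule of Theorem \ref{Ch-SoV-discrete-trig}, so that $\prod_{n=1}^{\mathsf{M}}\mathbb{B}(\lambda_n)$ indeed carries $|0\rangle$ into the sector containing $|t\rangle$.
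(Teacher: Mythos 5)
Your proposal is correct and follows essentially the same route as the paper's own proof: both evaluate the components of $(-1)^{\mathsf{NM}}\prod_{n=1}^{\mathsf{M}}\mathbb{B}(\lambda_n)|0\rangle$ in the left SoV basis using the diagonality of $\mathbb{B}(\lambda)$ together with the trivial wave function $\langle h_1,\dots,h_N|0\rangle=1$, and then match the resulting product of eigenvalues against the $Q$-function representation $\langle h_1,\dots,h_N|t\rangle=\prod_{n=1}^{N}Q_t(\xi_n^{(h_n)})$ from the quantum spectral curve theorem. Your explicit sign bookkeeping via the antisymmetry of $\sinh$ (which forces reading $b_{h_1,\dots,h_N}(\lambda)=\prod_{n=1}^{N}\sinh(\lambda-\xi_n^{(h_n)})$, as the paper's displayed definition drops the $\sinh$) just spells out the identity the paper states without detail.
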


\begin{proof}
Taken the generic $t(\lambda )\in \Sigma _{T^{(K)}}^{\left( N-2%
\mathsf{M}\right) }$ the unique associated eigenvector $|t\rangle $ admits
the following SoV representation:%
\begin{equation}
\langle h_{1},...,h_{N}|t\rangle =\prod_{n=1}^{N}Q_{t}(\xi
_{n}^{(h_{n})}),
\end{equation}%
while by the definition of $\mathbb{B}\left( \lambda \right) $ it holds:%
\begin{equation}
\langle h_{1},...,h_{N}|(-1)^{\mathsf{NM}}\prod_{n=1}^{\mathsf{M}}%
\mathbb{B}\left( \lambda _{n}\right) |0\rangle =(-1)^{\mathsf{NM}%
}\prod_{n=1}^{\mathsf{M}}b_{h_{1},...,h_{N}}\left( \lambda
_{n}\right) ,
\end{equation}%
from which our statement follows being:%
\begin{equation}
(-1)^{\mathsf{NM}}\prod_{n=1}^{\mathsf{M}}b_{h_{1},...,h_{N}}\left(
\lambda _{n}\right) =\prod_{n=1}^{N}Q_{t}(\xi _{n}^{(h_{n})}).
\end{equation}
\end{proof}

\section{The quasi-periodic $Y(gl_3)$ fundamental model}
We consider now the Yang-Baxter algebra associated to the rational $gl_3$
R-matrix:%
\begin{equation}
R_{a,b}(\lambda )=\lambda I_{a,b}+\eta \mathbb{P}_{a,b}=\left( 
\begin{array}{ccc}
a_{1}(\lambda ) & b_{1} & b_{2} \\ 
c_{1} & a_{2}(\lambda ) & b_{3} \\ 
c_{2} & c_{3} & a_{3}(\lambda )%
\end{array}%
\right) \in \End(V_{a}\otimes V_{b}),
\end{equation}%
where $V_{a}\cong V_{b}\cong \mathbb{C}^{3}$ and we have defined:%
\begin{align}
& a_{j}(\lambda )\left. =\right. \left( 
\begin{array}{ccc}
\lambda +\eta \delta _{j,1} & 0 & 0 \\ 
0 & \lambda +\eta \delta _{j,2} & 0 \\ 
0 & 0 & \lambda +\eta \delta _{j,3}%
\end{array}%
\right) ,\text{ \ \ }\forall j\in \{1,2,3\},  \notag \\
& b_{1}\left. =\right. \left( 
\begin{array}{ccc}
0 & 0 & 0 \\ 
\eta & 0 & 0 \\ 
0 & 0 & 0%
\end{array}%
\right) ,\text{ \ }b_{2}\left. =\right. \left( 
\begin{array}{ccc}
0 & 0 & 0 \\ 
0 & 0 & 0 \\ 
\eta & 0 & 0%
\end{array}%
\right) ,\text{ \ }b_{3}\left. =\right. \left( 
\begin{array}{ccc}
0 & 0 & 0 \\ 
0 & 0 & 0 \\ 
0 & \eta & 0%
\end{array}%
\right) ,  \notag \\
& c_{1}\left. =\right. \left( 
\begin{array}{ccc}
0 & \eta & 0 \\ 
0 & 0 & 0 \\ 
0 & 0 & 0%
\end{array}%
\right) ,\text{ \ }c_{2}\left. =\right. \left( 
\begin{array}{ccc}
0 & 0 & \eta \\ 
0 & 0 & 0 \\ 
0 & 0 & 0%
\end{array}%
\right) ,\text{ \ }c_{3}\left. =\right. \left( 
\begin{array}{ccc}
0 & 0 & 0 \\ 
0 & 0 & \eta \\ 
0 & 0 & 0%
\end{array}%
\right) ,
\end{align}%
which satisfies the Yang-Baxter equation:%
\begin{equation}
R_{12}(\lambda -\mu )R_{13}(\lambda )R_{23}(\mu )=R_{23}(\mu )R_{13}(\lambda
)R_{12}(\lambda -\mu )\ .
\end{equation}%
This R-matrix satisfies the following symmetry properties (scalar Yang-Baxter
equation):%
\begin{equation}
R_{12}(\lambda )K_{1}K_{2}=K_{2}K_{1}R_{12}(\lambda )\in \End%
(V_{1}\otimes V_{2}),
\end{equation}%
where $K\in \End(V)$ is any $3\times 3$ matrix. We can define the following
monodromy matrix: 
\begin{eqnarray}
M_{a}^{(K)}(\lambda ) &=&\left( 
\begin{array}{ccc}
A_{1}^{(K)}(\lambda ) & B_{1}^{(K)}(\lambda ) & B_{2}^{(K)}(\lambda ) \\ 
C_{1}^{(K)}(\lambda ) & A_{2}^{(K)}(\lambda ) & B_{3}^{(K)}(\lambda ) \\ 
C_{2}^{(K)}(\lambda ) & C_{3}^{(K)}(\lambda ) & A_{3}^{(K)}(\lambda )%
\end{array}%
\right) _{a} \\
&\equiv &K_{a}R_{a,N}(\lambda -\xi _{N})\cdots
R_{a,1}(\lambda -\xi _{1})\in \End(V_{a}\otimes \mathcal{H}),
\end{eqnarray}%
where $\mathcal{H}=\bigotimes_{n=1}^{N}V_{n}$. Moreover, in the following we will assume that the inhomogeneity parameters $\xi_j$ satisfy the following conditions:
\begin{equation}
\xi _{a}\neq \xi _{b}+r\eta \text{ \ }\forall a\neq b\in \{1,...,N%
\}\,\,\text{and\thinspace \thinspace }r\in \{-2,-1,0,1,2\}.  \label{Inhomog-cond-gl3}
\end{equation}%

\subsection{First fundamental properties of the transfer matrices}

Let us first recall some basic properties of the transfer matrices associated to this higher rank case  summarized in the following two propositions (see \cite{Skl96}).
\begin{proposition}
The transfer matrices,%
\begin{equation}
T_{1}^{\left( K\right) }(\lambda )\equiv \text{tr}_{a}M_{a}^{(K)}(\lambda ),%
\text{ \ \ \ }T_{2}^{\left( K\right) }(\lambda )\equiv \text{tr}%
_{a}U_{a}^{(K)}(\lambda )\ ,
\end{equation}%
where,%
\begin{equation}
U_{c}^{(K)}(\lambda )^t\equiv 3\text{tr}_{ab}P_{abc}^{-}M_{a}^{(K)}(\lambda
)M_{b}^{(K)}(\lambda +\eta ),
\end{equation}%
defines two one parameter families of commuting operators:%
\begin{equation}
\left[ T_{1}^{\left( K\right) }(\lambda ),T_{1}^{\left( K\right) }(\mu )%
\right] =\left[ T_{1}^{\left( K\right) }(\lambda ),T_{2}^{\left( K\right)
}(\mu )\right] =\left[ T_{2}^{\left( K\right) }(\lambda ),T_{2}^{\left(
K\right) }(\mu )\right] =0.
\end{equation}%
Moreover, the quantum determinant:%
\begin{eqnarray}
q\text{-det}M^{(K)}(\lambda ) \cdot \mathbf{1}_a &\equiv &\text{tr}%
_{123} (P_{123}^{-}M_{1}^{(K)}(\lambda )M_{2}^{(K)}(\lambda +\eta
)M_{3}^{(K)}(\lambda +2\eta ))\cdot \mathbf{1}_a  \label{Bound-q-detU_1} \\
&=&\left( M_{a}^{(K)}(\lambda )\right) ^{t_{a}}U_{a}^{(K)}(\lambda +\eta
)=U_{a}^{(K)}(\lambda +\eta )\left( M_{a}^{(K)}(\lambda )\right) ^{t_{a}} \\
&=&\left( U_{a}^{(K)}(\lambda )\right) ^{t_{a}}M_{a}^{(K)}(\lambda +2\eta
)=M_{a}^{(K)}(\lambda +2\eta )\left( U_{a}^{(K)}(\lambda )\right) ^{t_{a}},
\end{eqnarray}%
is a central element of the algebra, i.e.%
\begin{equation}
\lbrack q\text{-det}M^{(K)}(\lambda ),M_{a}^{(K)}(\mu )]=0.
\end{equation}
\end{proposition}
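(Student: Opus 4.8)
The plan is to deduce every assertion from the single Yang--Baxter algebra relation satisfied by $M^{(K)}_a(\lambda)$,
\begin{equation}
R_{a,b}(\lambda-\mu)\,M_a^{(K)}(\lambda)\,M_b^{(K)}(\mu)=M_b^{(K)}(\mu)\,M_a^{(K)}(\lambda)\,R_{a,b}(\lambda-\mu),
\end{equation}
together with the fusion structure of the rational $gl_3$ $R$-matrix. The essential input is the degeneration $R_{a,b}(-\eta)=\eta(\mathbb{P}_{a,b}-I_{a,b})=-2\eta\,P^-_{a,b}$ onto the antisymmetric projector, and, more generally, the fact that $P^-_{1,2,3}$ projects onto the one-dimensional totally antisymmetric subspace of $(\mathbb{C}^3)^{\otimes 3}$. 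Note that the spectral shifts in the definitions of $U$ and of $q\text{-det}M^{(K)}$ are tuned precisely to these degeneration points, since $M_a(\lambda)$ and $M_b(\lambda+\eta)$ are coupled through $R_{a,b}(\lambda-(\lambda+\eta))=R_{a,b}(-\eta)\propto P^-_{a,b}$.

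First I would prove $[T_1^{(K)}(\lambda),T_1^{(K)}(\mu)]=0$ by the classical argument: since $R_{a,b}(\lambda-\mu)$ is invertible for generic $\lambda-\mu$, one rewrites the relation above as $M_a^{(K)}(\lambda)M_b^{(K)}(\mu)=R_{a,b}^{-1}\,M_b^{(K)}(\mu)M_a^{(K)}(\lambda)\,R_{a,b}$ and takes $\text{tr}_{a,b}$ of both sides; cyclicity of the trace removes the conjugating $R$-factors and yields $\text{tr}_{a,b}M_a^{(K)}(\lambda)M_b^{(K)}(\mu)=\text{tr}_{a,b}M_b^{(K)}(\mu)M_a^{(K)}(\lambda)$. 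For the statements involving $T_2^{(K)}$ I would pass to the fused auxiliary space: applying $P^-_{a,b}$ to a suitable product of two copies of the algebra relation and using the fusion intertwining property, namely that the fused $R$-matrix $R_{a,c}(\lambda)R_{b,c}(\lambda+\eta)$ preserves the image of $P^-_{a,b}$, one obtains an $RTT$-type relation for $U_c^{(K)}(\lambda)$, which is the monodromy matrix in the representation $\wedge^2\mathbb{C}^3\cong(\mathbb{C}^3)^\ast$ (whence the transpose in its definition). The same trace/cyclicity argument, now carried out with the fused $R$-matrices, then gives $[T_1^{(K)}(\lambda),T_2^{(K)}(\mu)]=0$ and $[T_2^{(K)}(\lambda),T_2^{(K)}(\mu)]=0$.

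For the quantum determinant I would first establish the cofactor chain of identities, which are the Yang--Baxter-algebra analogue of the adjugate relation $A\,\mathrm{adj}(A)=\det(A)\,I$: splitting the three-fold antisymmetrizer as $P^-_{1,2,3}\propto P^-_{2,3}(\dots)$ and performing the partial trace and partial transpose over two of the three spaces converts $\text{tr}_{1,2,3}(P^-_{1,2,3}M_1M_2M_3)$ into $(M_a^{(K)}(\lambda))^{t_a}U_a^{(K)}(\lambda+\eta)$, the remaining equalities following from the invariance of $P^-_{1,2,3}$ under permutations of $1,2,3$ combined with the shift $\lambda\mapsto\lambda+2\eta$. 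Centrality is then cleanest to see representation-theoretically: $q\text{-det}M^{(K)}(\lambda)$ is the transfer matrix attached to the totally antisymmetric auxiliary representation, which is one-dimensional for $gl_3$; consequently the fused $R$-matrix intertwining this space with any further auxiliary space reduces to a scalar function, so the corresponding $RTT$ relation degenerates and forces $[q\text{-det}M^{(K)}(\lambda),M_a^{(K)}(\mu)]=0$.

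The routine parts are the two commutativity arguments, which are immediate once the fused $RTT$ relation is in hand. The main obstacle I expect is the careful bookkeeping in the cofactor chain: one must track precisely on which spaces each antisymmetrizer and each partial transpose $t_a$ acts, and verify that the successive fusions land exactly on the degeneration points of $R$ at relative argument $-\eta$. Getting the transposes, the normalization factor $3$ in the definition of $U$, and the rank-one normalization of $P^-_{1,2,3}$ consistent with one another is where the genuine work lies; the commutation and centrality statements then follow formally.
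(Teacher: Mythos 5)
You should first note that the paper itself offers no proof of this proposition: it is explicitly recalled from Sklyanin \cite{Skl96}, with the fusion machinery it rests on attributed to \cite{KulRS81,KulR82}, so there is no internal argument to compare against; your proposal reconstructs exactly the standard fusion-procedure proof underlying those references, and it is sound. The individual ingredients are all correct: trace plus cyclicity of the $RTT$ relation for $[T_1^{(K)}(\lambda),T_1^{(K)}(\mu)]=0$; fusion at the degeneration point $R_{a,b}(-\eta)=\eta(\mathbb{P}_{a,b}-I_{a,b})=-2\eta P^-_{a,b}$, yielding an $RTT$-type relation for $U^{(K)}$ as the monodromy in $\wedge^2\mathbb{C}^3\simeq(\mathbb{C}^3)^{*}$ (a determinant twist for $gl_3$, immaterial here, and correctly accounting for the transpose); the adjugate-type cofactor chain; centrality from the one-dimensionality of the totally antisymmetric auxiliary space, which makes the corresponding fused $R$-matrix scalar; and the normalization, which indeed checks out since $3\,\mathrm{tr}_{ab}P^{-}_{abc}=\mathbf{1}_c$. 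One ordering detail to fix when writing it out in full — precisely the bookkeeping you yourself flag as the locus of real work: the Yang--Baxter equation at the degeneration point gives $P^{-}_{a,b}\,R_{a,c}(\mu)R_{b,c}(\mu+\eta)=R_{b,c}(\mu+\eta)R_{a,c}(\mu)\,P^{-}_{a,b}$, with the two $R$'s in \emph{opposite} order on the two sides; hence the ordered product $R_{a,c}(\mu)R_{b,c}(\mu+\eta)$ you wrote preserves $\mathrm{Im}\,P^{+}_{a,b}$, and it is the reversed product that preserves $\mathrm{Im}\,P^{-}_{a,b}$. For the monodromy the same mechanism yields $P^{-}_{a,b}M^{(K)}_a(\lambda)M^{(K)}_b(\lambda+\eta)\,P^{+}_{a,b}=0$, which is exactly what legitimizes the one-sided projector in the definition of $U^{(K)}$; with that convention straightened, your plan goes through and coincides with the proof in the cited literature.
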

Moreover, it holds

\begin{proposition}
The quantum spectral invariants have the following polynomial form:\\

i) $T_{1}^{\left( K\right) }(\lambda )$ is a degree $N$ polynomial
in $\lambda $ with the following central asymptotic:%
\begin{equation*}
\lim_{\lambda \rightarrow \infty }\lambda ^{-N}T_{1}^{\left(
K\right) }(\lambda )=\text{tr}K,
\end{equation*}%

ii) $T_{2}^{\left( K\right) }(\lambda )$ is a degree $2N$
polynomial in $\lambda $ with the following $N$ central zeros and
asymptotic:%
\begin{equation}
\text{ }T_{2}^{\left( K\right) }(\xi _{a})=0\text{ \ }\forall a\in \{1,...,%
N\}\text{,\ }\lim_{\lambda \rightarrow \infty }\lambda ^{-2\mathsf{N%
}}T_{2}^{\left( K\right) }(\lambda )=\frac{\left( \text{tr}K\right) ^{2}-%
\text{tr}K^{2}}{2},
\end{equation}%

iii) the quantum determinant reads:%
\begin{equation}
q\text{-det}M^{(K)}(\lambda )=\text{det}K\text{ }\prod_{b=1}^{N%
}(\lambda -\xi _{b})(\lambda +\eta -\xi _{b})(\lambda +3\eta -\xi _{b}).
\end{equation}%
Moreover, the following fusion identities holds:%
\begin{eqnarray}
T_{1}^{\left( K\right) }(\xi _{a})T_{2}^{\left( K\right) }(\xi _{a}-2\eta )
&=&q\text{-det}M^{(K)}(\xi _{a}-2\eta ), \\
T_{1}^{\left( K\right) }(\xi _{a}-\eta )T_{1}^{\left( K\right) }(\xi _{a})
&=&T_{2}^{\left( K\right) }(\xi _{a}-\eta ).
\end{eqnarray}
\end{proposition}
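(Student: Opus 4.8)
The plan is to establish the four assertions almost independently, reducing each to the affine $\lambda$-dependence of the Lax factors $R_{a,j}(\lambda-\xi_j)=(\lambda-\xi_j)I_{a,j}+\eta\,\mathbb{P}_{a,j}$ together with the three degenerations $R(0)=\eta\,\mathbb{P}$, $R(\eta)=2\eta\,P^{+}$ and $R(-\eta)=-2\eta\,P^{-}$. For the polynomial structure and leading behaviour I would argue by inspection: since each factor is affine in $\lambda$, $M_a^{(K)}(\lambda)$ is of degree $N$ with leading coefficient $\lambda^{N}K_a$, so tracing over $V_a$ gives $\deg T_1^{(K)}=N$ and $\lim_{\lambda\to\infty}\lambda^{-N}T_1^{(K)}(\lambda)=\text{tr}\,K$, which is (i). Likewise $M_a^{(K)}(\lambda)M_b^{(K)}(\lambda+\eta)$ has degree $2N$ with leading term $\lambda^{2N}K_aK_b$, so $T_2^{(K)}$ has degree $2N$; using $\text{tr}\,X=\text{tr}\,X^{t}$ its leading coefficient is $3\,\text{tr}_{abc}\!\big(P_{abc}^{-}K_aK_b\big)$. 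Realising $P_{abc}^{-}$ as the rank-one projector onto $\Lambda^{3}\mathbb{C}^{3}$ and evaluating in the eigenbasis of $K$, this trace collapses to $\tfrac13 e_2(K)$, so the leading coefficient equals $e_2(K)=\tfrac12\big((\text{tr}\,K)^{2}-\text{tr}\,K^{2}\big)$ (a polynomial identity in the entries of $K$), giving the asymptotic part of (ii).

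The decisive point of (ii) is the central vanishing $T_2^{(K)}(\xi_n)=0$, which I would prove directly. Since $P_{abc}^{-}$ is antisymmetric in the pair $(a,b)$, one has $P_{abc}^{-}=P_{abc}^{-}P_{ab}^{-}$, so it suffices to show $P_{ab}^{-}M_a^{(K)}(\xi_n)M_b^{(K)}(\xi_n+\eta)=0$. Because R-matrices carrying different quantum spaces commute, the product reorders site by site into two-row Lax operators $R_{a,j}(\xi_n-\xi_j)R_{b,j}(\xi_n+\eta-\xi_j)$, and $P_{ab}^{-}$ commutes with the symmetric factor $K_aK_b$. The Yang--Baxter relation at argument difference $\eta$, namely $P_{ab}^{-}R_{a,j}(u)R_{b,j}(u+\eta)=R_{b,j}(u+\eta)R_{a,j}(u)P_{ab}^{-}$, lets me transport $P_{ab}^{-}$ rightward through every site $j\neq n$. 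When it reaches the site $n$ factor it meets $R_{a,n}(0)R_{b,n}(\eta)=2\eta^{2}\,\mathbb{P}_{a,n}P_{b,n}^{+}$, and the permutation identity $\mathbb{P}_{a,n}\mathbb{P}_{b,n}=\mathbb{P}_{a,b}\mathbb{P}_{a,n}$ together with $P_{ab}^{-}\mathbb{P}_{a,b}=-P_{ab}^{-}$ gives $P_{ab}^{-}\mathbb{P}_{a,n}P_{b,n}^{+}=\tfrac12\big(P_{ab}^{-}\mathbb{P}_{a,n}+P_{ab}^{-}\mathbb{P}_{a,n}\mathbb{P}_{b,n}\big)=0$. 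Hence the whole product vanishes and $T_2^{(K)}(\xi_n)=0$.

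For the quantum determinant (iii) I would use that $q\text{-det}\,M^{(K)}$ is central (established in the previous proposition) and hence a scalar, and that the $P_{123}^{-}$-fusion factorises along the chain: the purely auxiliary part yields $\text{tr}_{123}\big(P_{123}^{-}K_1K_2K_3\big)=\det K=e_3(K)$, while each site contributes the single-site quantum determinant of $R(\mu)=\mu I+\eta\,\mathbb{P}$ on $\mathbb{C}^{3}$. Diagonalising $\mathbb{P}$ on the symmetric and antisymmetric subspaces shows that the antisymmetrised triple product degenerates exactly at $\mu=0,-\eta,-3\eta$, producing the factor $(\lambda-\xi_b)(\lambda+\eta-\xi_b)(\lambda+3\eta-\xi_b)$ per site; alternatively, knowing the degree ($3N$), the leading coefficient ($\det K$) and the location of the zeros already fixes the formula, and the unusual shift $3\eta$ (rather than $2\eta$) is in fact forced by the first fusion identity together with \eqref{Inhomog-cond-gl3}.

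Finally, the two fusion identities I would obtain from the same degeneration mechanism applied to the splitting $I_{cd}=P_{cd}^{+}+P_{cd}^{-}$ in $T_1^{(K)}(\mu)T_1^{(K)}(\mu+\eta)=\text{tr}_{cd}\big(M_c^{(K)}(\mu)M_d^{(K)}(\mu+\eta)\big)$: the antisymmetric part is, by construction, the $\Lambda^{2}$-fused transfer matrix, which coincides with $T_2^{(K)}(\mu)$, while at $\mu=\xi_a-\eta$ the site-$a$ two-row factor $R_{c,a}(-\eta)R_{d,a}(0)=-2\eta^{2}P_{c,a}^{-}\mathbb{P}_{d,a}=-2\eta^{2}\mathbb{P}_{d,a}P_{c,d}^{-}$ exposes an auxiliary antisymmetriser $P_{c,d}^{-}$ incompatible with the symmetric projector $P_{cd}^{+}$, so the symmetric fusion drops out of the trace; this yields $T_1^{(K)}(\xi_a-\eta)T_1^{(K)}(\xi_a)=T_2^{(K)}(\xi_a-\eta)$. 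The second identity follows analogously by pairing $T_1^{(K)}$ with the $\Lambda^{2}$-object $U^{(K)}$ and recognising $q\text{-det}\,M^{(K)}$ as the $\Lambda^{3}$ fusion, the arguments $\xi_a,\xi_a-\eta,\xi_a-2\eta$ reproducing the triple product in its definition. I expect the delicate step to be precisely this fusion bookkeeping: transporting the global projectors $P_{cd}^{\pm}$ to the distinguished site through the entire chain while tracking the reordered R-matrices, so that only the intended irreducible component survives under the trace.
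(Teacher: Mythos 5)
Your proposal is correct and takes essentially the same route as the paper, whose proof is a one-line appeal to the general fusion properties of \cite{KulRS81,KulR82} evaluated at the inhomogeneities together with the reduction of the $R$-matrix to permutation and projection operators at the degenerate points $0,\pm\eta$ --- exactly the mechanism you implement. Your concrete steps all check out (in particular $P_{ab}^{-}\mathbb{P}_{a,n}P_{b,n}^{+}=0$ for the central zeros of $T_2^{(K)}$, the leading coefficient $\tfrac12\big((\mathrm{tr}\,K)^2-\mathrm{tr}\,K^2\big)$, the single-site quantum-determinant factor $\mu(\mu+\eta)(\mu+3\eta)$, and the exposure of $P_{c,d}^{-}$ at the distinguished site via $R_{c,a}(-\eta)R_{d,a}(0)=-2\eta^{2}\mathbb{P}_{d,a}P_{c,d}^{-}$ followed by YBE transport), so you are simply supplying the direct calculation the paper leaves implicit.
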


\begin{proof}
These trivial roots of $T_{2}^{(K)}$ and fusion identities can be obtained from the general fusion properties of the transfer matrices in the various representations \cite{KulRS81,KulR82}   when computed in the special points associated to the inhomogeneities and by
direct calculation using the reduction of the $R$-matrix to permutation and
projections operators.
\end{proof}

Let us introduce the functions%
\begin{eqnarray}
g_{a,\mathbf{h}}(\lambda ) &=&\prod_{b\neq a,b=1}^{N}\frac{\lambda
-\xi _{b}^{(h)}}{\xi _{a}^{(h)}-\xi _{b}^{(h)}},\text{ \ \ }\xi
_{b}^{(h)}=\xi _{b}-h\eta , \\
f_{a,\mathbf{h}}(\lambda ) &=&g_{a,\mathbf{h}}(\lambda )\prod_{b=1}^{\mathsf{%
N}}\frac{\lambda -\xi _{b}}{\xi _{a}^{(h_a)}-\xi _{b}},
\end{eqnarray}%
and%
\begin{equation}
T_{2,\mathbf{h}}^{(K,\infty )}(\lambda )=\frac{\left( \text{tr}K\right) ^{2}-%
\text{tr}K^{2}}{2}\prod_{n=1}^{N}(\lambda -\xi _{n})(\lambda -\xi
_{n}^{(h_{n})}),
\end{equation}%
then the following corollary holds.

\begin{corollary}
The transfer matrix $T_{2}^{\left( K\right) }(\lambda )$ is completely
characterized in terms of $T_{1}^{\left( K\right) }(\lambda )$ by the fusion
equations, and the following interpolation formula holds:%
\begin{equation}
T_{2}^{\left( K\right) }(\lambda )=T_{2,\mathbf{h}=\mathbf{1}}^{(K,\infty
)}(\lambda )+\sum_{a=1}^{N}f_{a,\mathbf{h}=\mathbf{1}}(\lambda
)T_{1}^{\left( K\right) }(\xi _{a}-\eta )T_{1}^{\left( K\right) }(\xi _{a}).
\end{equation}
\end{corollary}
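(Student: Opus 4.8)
The plan is to prove the identity by a polynomial interpolation argument: I will show that both sides are polynomials in $\lambda$ of degree $2N$ that share the same leading coefficient and agree at $2N$ distinct points, whence they coincide. The $2N$ interpolation nodes are the two families $\{\xi_a\}_{a=1}^N$ and $\{\xi_a-\eta\}_{a=1}^N$, and all the data attached to these nodes are supplied by the preceding propositions: the trivial zeros $T_2^{(K)}(\xi_a)=0$, the asymptotic leading coefficient $\tfrac{(\mathrm{tr}K)^2-\mathrm{tr}K^2}{2}$, and the fusion identity $T_2^{(K)}(\xi_a-\eta)=T_1^{(K)}(\xi_a-\eta)T_1^{(K)}(\xi_a)$. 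This is exactly why $T_2^{(K)}$ is ``completely characterized in terms of $T_1^{(K)}$'': its degree-$2N$ interpolation data are all expressible through $T_1^{(K)}$ and the central quantum determinant.

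First I would record the elementary properties of the interpolation kernels at $\mathbf{h}=\mathbf{1}$, using $\xi_b^{(1)}=\xi_b-\eta$ so that $\xi_a^{(1)}-\xi_b^{(1)}=\xi_a-\xi_b$. Since $g_{a,\mathbf{1}}(\lambda)=\prod_{b\neq a}\frac{\lambda-\xi_b+\eta}{\xi_a-\xi_b}$, a direct evaluation gives the Lagrange/Kronecker property
\begin{equation}
g_{a,\mathbf{1}}(\xi_c-\eta)=\prod_{b\neq a}\frac{\xi_c-\xi_b}{\xi_a-\xi_b}=\delta_{a,c},
\end{equation}
and, because $f_{a,\mathbf{1}}(\lambda)=g_{a,\mathbf{1}}(\lambda)\prod_{b=1}^{N}\frac{\lambda-\xi_b}{\xi_a-\eta-\xi_b}$ carries the prefactor $\prod_b(\lambda-\xi_b)$, one gets $f_{a,\mathbf{1}}(\xi_c)=0$ for every $c$, while $f_{a,\mathbf{1}}(\xi_a-\eta)=g_{a,\mathbf{1}}(\xi_a-\eta)=1$. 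I would also note the degree count: $g_{a,\mathbf{1}}$ has degree $N-1$ and the extra product has degree $N$, so each $f_{a,\mathbf{1}}$ has degree $2N-1$, whereas $T_{2,\mathbf{h}=\mathbf{1}}^{(K,\infty)}(\lambda)=\frac{(\mathrm{tr}K)^2-\mathrm{tr}K^2}{2}\prod_{n=1}^N(\lambda-\xi_n)(\lambda-\xi_n+\eta)$ has degree exactly $2N$ with leading coefficient $\tfrac{(\mathrm{tr}K)^2-\mathrm{tr}K^2}{2}$.

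With these facts in hand the conclusion is immediate. Set $D(\lambda)=T_2^{(K)}(\lambda)-\text{(r.h.s.)}$. Both $T_2^{(K)}$ and the right-hand side have degree-$2N$ coefficient $\tfrac{(\mathrm{tr}K)^2-\mathrm{tr}K^2}{2}$ (for the r.h.s.\ this comes solely from $T_{2,\mathbf{1}}^{(K,\infty)}$, since the sum $\sum_a f_{a,\mathbf{1}}(\lambda)T_1^{(K)}(\xi_a-\eta)T_1^{(K)}(\xi_a)$ has degree at most $2N-1$), so $\deg D\le 2N-1$. At $\lambda=\xi_c$ the r.h.s.\ vanishes because every term contains the factor $\prod_b(\lambda-\xi_b)$, matching $T_2^{(K)}(\xi_c)=0$; and at $\lambda=\xi_c-\eta$ the term $T_{2,\mathbf{1}}^{(K,\infty)}(\xi_c-\eta)$ vanishes (its factor $\lambda-\xi_c+\eta$), the Kronecker property collapses the sum to its $a=c$ term, and $f_{c,\mathbf{1}}(\xi_c-\eta)=1$ yields $T_1^{(K)}(\xi_c-\eta)T_1^{(K)}(\xi_c)=T_2^{(K)}(\xi_c-\eta)$. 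Hence $D$ vanishes at the $2N$ nodes $\{\xi_a\}\cup\{\xi_a-\eta\}$, which are pairwise distinct by the inhomogeneity condition $(\ref{Inhomog-cond-gl3})$ (it rules out $\xi_a=\xi_b$ and $\xi_a-\eta=\xi_b$). A polynomial of degree $\le 2N-1$ with $2N$ distinct roots is identically zero, so $D\equiv 0$ and the interpolation formula follows.

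The argument is essentially routine once the kernel evaluations are set up; the only point requiring genuine care is the bookkeeping of degrees and leading coefficients, since the whole proof hinges on the sum contributing nothing at order $\lambda^{2N}$ so that $\deg D$ drops to $2N-1$ and $2N$ zeros suffice. I would therefore treat the degree-$2N$ coefficient matching as the main step to verify carefully, together with checking that $(\ref{Inhomog-cond-gl3})$ indeed guarantees the distinctness of all $2N$ nodes.
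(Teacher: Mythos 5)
Your proof is correct and follows essentially the same route as the paper, which simply invokes the $N$ central zeros $T_2^{(K)}(\xi_a)=0$, the degree-$2N$ asymptotics, and the fusion identity $T_2^{(K)}(\xi_a-\eta)=T_1^{(K)}(\xi_a-\eta)T_1^{(K)}(\xi_a)$ to fix the polynomial $T_2^{(K)}(\lambda)$ by interpolation. You merely make explicit what the paper leaves implicit — the Kronecker evaluations $f_{a,\mathbf{1}}(\xi_c-\eta)=\delta_{a,c}$, $f_{a,\mathbf{1}}(\xi_c)=0$, the leading-coefficient matching, and the distinctness of the $2N$ nodes under the condition $(\ref{Inhomog-cond-gl3})$ — all of which check out.
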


\begin{proof}
The known central zeros and asymptotic behavior imply the above interpolation formula
once we use the fusion equations to write $T_{2}^{\left( K\right) }(\xi
_{a}-\eta )$.
\end{proof}

\subsection{The new SoV covector basis}

Our general construction of the SoV covector basis applies
in particular to the fundamental representation of the $gl_3$ rational
Yang-Baxter algebra.

Let $K$ be a $3\times 3$ $w$-simple matrix and let us denote by $K_{J}$ an upper-triangular
Jordan form of the matrix $K$ and $W$ the invertible matrix defining the
change of basis:%
\begin{equation}
K=W_{K}K_{J}W_{K}^{-1}\text{ \ with \ }K_{J}=\left( 
\begin{array}{ccc}
\mathsf{k}_{0} & \mathsf{y}_{1} & 0 \\ 
0 & \mathsf{k}_{1} & \mathsf{y}_{2} \\ 
0 & 0 & \mathsf{k}_{2}%
\end{array}%
\right) .
\end{equation}%
The requirement that $K$ is $w$-simple implies that we can have only the following
three possible cases (up to trivial permutations of the basis vectors):%
\begin{align}
i)\text{ }\mathsf{k}_{i}& \neq \mathsf{k}_{j}\ \forall i,j\in \{0,1,2\},%
\text{ }\mathsf{y}_{1}=\mathsf{y}_{2}=0, \\
ii)\text{ }\mathsf{k}_{0}& =\mathsf{k}_{1}\neq \mathsf{k}_{2},\text{ }%
\mathsf{y}_{1}=1,\text{ }\mathsf{y}_{2}=0, \\
ii)\text{ }\mathsf{k}_{0}& =\mathsf{k}_{1}=\mathsf{k}_{2},\text{ }\mathsf{y}%
_{1}=1,\text{ }\mathsf{y}_{2}=1,
\end{align}%
then the following theorem holds:

\begin{proposition}
Let $K$ be a $3\times 3$ $w$-simple matrix, then for almost any choice of $%
\langle S|$ and of the inhomogeneities under the condition $(\ref%
{Inhomog-cond-gl3})$, the following set of covectors:%
\begin{equation}
\langle h_{1},...,h_{N}|\equiv \langle S|\prod_{n=1}^{N%
}(T_{1}^{(K)}(\xi _{n}))^{h_{n}}\text{ \ for any }\{h_{1},...,h_{N%
}\}\in \{0,1,2\}^{\otimes N},
\end{equation}%
forms a covector basis of $\cal{H}^*$. In particular, we can take the
state $\langle S|$ of the following tensor product form:%
\begin{equation}
\langle S|=\bigotimes_{a=1}^{N}(x,y,z)_{a}\Gamma _{W}^{-1},\text{ \
\ }\Gamma _{W}=\bigotimes_{a=1}^{N}W_{K,a}
\end{equation}%
simply asking $x\,y\,z\neq 0$ in the case i), $x\,\,z\neq 0$ in the case
ii), $x\,\neq 0$ in the case iii).
\end{proposition}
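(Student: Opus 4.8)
The plan is to mirror the strategy used for the $Y(gl_n)$ case in Proposition \ref{gln-basis}, reducing the covector-basis statement to the non-vanishing of a single determinant and then evaluating that determinant in a convenient limit.

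First I would introduce the $3^N\times 3^N$ matrix $\mathcal{M}\left(\langle S|,K,\{\xi\}\right)$ with entries $\mathcal{M}_{i,j}\equiv\langle h_1(i),\dots,h_N(i)|e_j\rangle$, where the multi-index $(h_1(i),\dots,h_N(i))\in\{0,1,2\}^{\otimes N}$ is associated to $i$ through $1+\sum_{a=1}^N h_a(i)\,3^{a-1}=i$. The claim that the covectors form a basis of $\mathcal{H}^*$ is equivalent to $\det_{3^N}\mathcal{M}\left(\langle S|,K,\{\xi\}\right)\neq 0$. Since $T_1^{(K)}(\lambda)$ is polynomial in the entries of $K$ and in the inhomogeneities, this determinant is itself a polynomial in those parameters and in the coefficients $\langle S|e_j\rangle$ of the covector, so it suffices to exhibit one specialization where it is nonzero in order to conclude non-vanishing for almost all choices.

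Next I would impose the degenerating choice $\xi_a=a\,\xi$ and compute the leading behavior of $T_1^{(K)}(\xi_l)$ as $\xi\to\infty$. As in the $gl_n$ proof, $T_1^{(K)}(\xi_l)$ becomes a polynomial of degree $N-1$ in $\xi$ whose top coefficient is proportional to $K_l$, so that the leading term of each basis covector is proportional to $\langle S|\prod_{a=1}^N K_a^{h_a}$ up to nonzero scalar factors. Consequently the maximal-degree coefficient of $\det_{3^N}\mathcal{M}$ factorizes, using the tensor product form $\langle S|=\bigotimes_{a=1}^N(x,y,z)_a\Gamma_W^{-1}$ and $|e_j\rangle=\otimes_a|e_j(a)\rangle$, into a product of $N$ local $3\times 3$ determinants of the type $\det_3\left|\left|\left(\langle S,a|K_a^{i-1}|e_j(a)\rangle\right)_{i,j\in\{1,2,3\}}\right|\right|$. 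Here the factor $\Gamma_W^{-1}$ conjugates $K$ to its Jordan form $K_J$, so each local determinant reduces to $\det_3\left|\left|\left((x,y,z)K_J^{i-1}|e_j\rangle\right)_{i,j}\right|\right|$, which by Proposition \ref{confluent-vandermonde} (the confluent Vandermonde computation) and Corollary \ref{pre-sov} is nonzero exactly under the stated genericity conditions on $(x,y,z)$ in the three Jordan cases i), ii), iii).

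The genuinely new bookkeeping compared with the $gl_n$ argument, and the step I expect to require the most care, is verifying the precise nonvanishing conditions on $x,y,z$ case by case: one must compute $\det_3\left|\left|\left((x,y,z)K_J^{i-1}|e_j\rangle\right)_{i,j}\right|\right|$ explicitly for each admissible $w$-simple Jordan block $K_J$ and read off which monomials in $x,y,z$ survive, yielding $xyz\neq 0$ in case i), $xz\neq 0$ in case ii), and $x\neq 0$ in case iii). Once these local determinants are shown nonzero, their product gives a nonvanishing leading coefficient of $\det_{3^N}\mathcal{M}$, establishing that the determinant is not identically zero and hence is nonzero for almost all inhomogeneities and almost all covectors, which proves the proposition. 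The condition $(\ref{Inhomog-cond-gl3})$ on the $\xi_a$ guarantees the interpolation and irreducibility properties needed for $T_1^{(K)}(\xi_a)$ to carry the requisite information; I would invoke it precisely to justify that the chosen degeneration stays within the generic stratum.
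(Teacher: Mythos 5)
Your proposal is correct and follows essentially the same route as the paper: the paper's proof simply invokes the general $Y(gl_n)$ result (Proposition \ref{gln-basis}), whose mechanism — reduction to $\det_{3^N}\mathcal{M}\neq 0$, the degeneration $\xi_a=a\xi$ with leading coefficient proportional to $K_l$, and factorization into local $3\times 3$ determinants — you reproduce in full, before evaluating $\det_3||\left((x,y,z)K_J^{i-1}|e_j\rangle\right)_{i,j}||$ case by case to obtain exactly the paper's conditions $xyz\neq 0$, $xz\neq 0$, $x\neq 0$ (the paper's explicit values being $-xyzV(\mathsf{k}_0,\mathsf{k}_1,\mathsf{k}_2)$, $x^2 z V^2(\mathsf{k}_0,\mathsf{k}_2)$ and $x^3$, consistent with the confluent Vandermonde formula you cite). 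The only cosmetic difference is that you re-derive the reduction rather than citing it, which does not change the argument.
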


\begin{proof}
As shown in the general Proposition \ref{gln-basis}, the fact that the set of covectors is a
covector basis of $\cal{H}^*$ reduces to the requirement that the
covectors:%
\begin{equation}
(x,y,z)_{a}W^{-1},(x,y,z)_{a}W^{-1}K,(x,y,z)_{a}W^{-1}K^{2},
\end{equation}%
or equivalently:%
\begin{equation}
(x,y,z)_{a},(x,y,z)_{a}K_{J},(x,y,z)_{a}K_{J}^{2},
\end{equation}%
form a basis in $V_{a}\cong \mathbb{C}^{3}$, that is that the following
determinant is non-zero:%
\begin{equation}
\text{det}||\left( (x,y,z)K_{J}^{i-1}\ket{e_{j}}\right) _{i,j\in
\{1,2,3\}}||=\left\{ 
\begin{array}{l}
-xyzV(\mathsf{k}_{0},\mathsf{k}_{1},\mathsf{k}_{2})\text{ \ in the case i)}
\\ 
x^{2}zV^{2}(\mathsf{k}_{0},\mathsf{k}_{2})\text{ in the case ii)} \\ 
x^{3}\text{ in the case iii)}%
\end{array}%
\right. \ ,
\end{equation}%
where $\ket{e_{j}}$ is the canonical basis in $V_{a}\cong \mathbb{C}^{3}$ after the change of basis induced by $W_K$. It leads to the given requirements on the components $x,y,z\in \mathbb{C}$
of the three dimensional covector.
\end{proof}

\subsection{Transfer matrix spectrum in our SoV scheme}

The following characterization of the transfer matrix spectrum holds:

\begin{theorem}
Under the same assumptions ensuring that the set of SoV covector form a
basis, then the spectrum of $T_{1}^{(K)}(\lambda )$ is characterized by:%
\begin{equation}
\Sigma _{T^{(K)}}=\left\{ t_{1}(\lambda ):t_{1}(\lambda )=\text{tr\thinspace 
}K\text{ }\prod_{a=1}^{N}(\lambda -\xi _{a})+\sum_{a=1}^{N%
}g_{a,\mathbf{h}=0}(\lambda )x_{a},\text{ \ }\forall \{x_{1},...,x_{\mathsf{N%
}}\}\in \Sigma _{T}\right\} ,  \label{SET-T5}
\end{equation}%
$\Sigma _{T}$ is the set of solutions to the following inhomogeneous system
of $N$ cubic equations:%
\begin{equation}
x_{a}[T_{2,\mathbf{h}=\mathbf{1}}^{(K,\infty )}(\xi _{a}-2\eta )+\sum_{n=1}^{%
N}f_{n,\mathbf{h}=\mathbf{1}}(\xi _{a}-2\eta )t_{1}(\xi _{n}-\eta
)x_{n}]\left. =\right. q\text{-det}M^{(K)}(\xi _{a}-2\eta ),
\end{equation}%
in $N$ unknown $\{x_{1},...,x_{N}\}$. Moreover, $%
T_{1}^{(K)}(\lambda ,\{\xi \})$ is $w$-simple  and for any $%
t_{1}(\lambda )\in \Sigma _{T^{(K)}}$ the associated unique (up-to
normalization) eigenvector $|t\rangle $ has the following wave-function in
the left SoV basis:%
\begin{equation}
\langle h_{1},...,h_{N}|t\rangle =\prod_{n=1}^{N%
}t_{1}^{h_{n}}(\xi _{n}).  \label{SoV-Ch-T-eigenV}
\end{equation}
\end{theorem}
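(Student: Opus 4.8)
My plan is to mirror the proof of the $Y(gl_2)$ theorem, separating the statement into the two inclusions defining $\Sigma_{T^{(K)}}$, the $w$-simplicity, and the factorized wave-function \eqref{SoV-Ch-T-eigenV}; the genuinely new ingredient is that the closure of the separate basis at its boundary is now controlled by the second transfer matrix $T_2^{(K)}$ instead of by a central element.

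I would first dispose of the forward inclusion together with $w$-simplicity. Any eigenvalue $t_1(\lambda)$ of $T_1^{(K)}(\lambda)$ is a degree $N$ polynomial with leading coefficient $\text{tr}\,K$, so interpolation at $\{\xi_a\}$ puts it in the form \eqref{SET-T5} with $x_a=t_1(\xi_a)$. Applying the fusion identities $T_1^{(K)}(\xi_a-\eta)T_1^{(K)}(\xi_a)=T_2^{(K)}(\xi_a-\eta)$ and $T_1^{(K)}(\xi_a)T_2^{(K)}(\xi_a-2\eta)=q\text{-det}M^{(K)}(\xi_a-2\eta)$ to the eigenvalues, and eliminating $t_2$ by the interpolation formula of the Corollary (which represents $T_2^{(K)}$ as a quadratic expression in $T_1^{(K)}$ evaluated at $\xi_n$ and $\xi_n-\eta$), I obtain exactly the inhomogeneous cubic system, the shifted value $t_1(\xi_n-\eta)$ being re-expressed through the polynomial form of $t_1$ as a linear function of the $x_a$. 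Hence every eigenvalue produces a solution $\{x_a\}\in\Sigma_T$; and since each $T_1^{(K)}(\xi_n)$ acts on the generating covector by raising $h_n$, one gets $\langle h_1,\dots,h_N|t\rangle=\langle S|t\rangle\prod_n t_1^{h_n}(\xi_n)$, which is \eqref{SoV-Ch-T-eigenV} after normalization. Because the SoV covectors form a basis, this formula shows that an eigenvalue fixes all components of its eigenvector, so the spectrum is $w$-simple by the general statement of Section 2.

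For the reverse inclusion, given $\{x_a\}\in\Sigma_T$ I define $t_1(\lambda)$ by \eqref{SET-T5} and $|t\rangle$ by the components \eqref{SoV-Ch-T-eigenV}; the vector $|t\rangle$ is well defined and nonzero since the SoV covectors are a basis. Writing $T_1^{(K)}(\lambda)$ through its interpolation at $\{\xi_a\}$ with leading coefficient $\text{tr}\,K$, it is enough to prove $T_1^{(K)}(\xi_a)|t\rangle=x_a|t\rangle$ for every $a$, i.e. $\langle h_1,\dots,h_N|T_1^{(K)}(\xi_a)|t\rangle=x_a\langle h_1,\dots,h_N|t\rangle$ against every basis covector. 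Commuting $T_1^{(K)}(\xi_a)$ into $\langle S|\prod_b(T_1^{(K)}(\xi_b))^{h_b}$, the cases $h_a\in\{0,1\}$ are immediate raisings that reproduce the factor $x_a$ via the factorized form; the only nontrivial case is the boundary $h_a=2$, where the raising generates the overflow $(T_1^{(K)}(\xi_a))^3$ that must be pushed back onto the admissible range $h_a\le 2$.

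The heart of the argument, and the step I expect to be the main obstacle, is this boundary closure. I would use the operator identity produced by substituting the Corollary's interpolation of $T_2^{(K)}(\xi_a-2\eta)$ into $T_1^{(K)}(\xi_a)T_2^{(K)}(\xi_a-2\eta)=q\text{-det}M^{(K)}(\xi_a-2\eta)$, and evaluate its matrix element between $\langle S|\prod_{b\neq a}(T_1^{(K)}(\xi_b))^{h_b}$ and $|t\rangle$. Every term of admissible power is read off from the factorized wave-function as $\prod_b x_b^{m_b}$, while the shifted evaluations $T_1^{(K)}(\xi_n-\eta)$ are re-expanded through the interpolation at $\{\xi_b\}$; the resulting scalar relation is precisely the cubic system satisfied by $\{x_a\}$, which forces $\langle h_1,\dots,h_N|T_1^{(K)}(\xi_a)|t\rangle=x_a^3\prod_{b\neq a}x_b^{h_b}=x_a\langle h_1,\dots,h_N|t\rangle$. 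Collecting the three cases and interpolating in $\lambda$ then identifies $|t\rangle$ as an eigenvector with eigenvalue $t_1(\lambda)$. The difficulty, absent in the rank one case, is that fusing two copies of $T_1^{(K)}$ yields the independent charge $T_2^{(K)}$ rather than a scalar, so one must bring in its interpolation formula and carefully track the shifted arguments $\xi_n-\eta$ and the possible overflows they create at the other sites; arranging this bookkeeping so that the operator closure collapses exactly onto the scalar cubic system is the crux, and is what the paper refers to as the proper identification of the shifts acting on the separate basis.
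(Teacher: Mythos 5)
Your overall architecture is right, and the forward direction is sound: eigenvalues interpolate to the form \eqref{SET-T5} with $x_a=t_1(\xi_a)$, the cubic system is the fusion relations with $t_2$ eliminated via the Corollary, the factorized components follow from the definition of the basis, and $w$-simplicity follows from the general proposition of section 2, exactly as in the paper. The genuine gap is in your boundary closure for the reverse inclusion. Your scheme is to take the operator identity $T_1^{(K)}(\xi_a)\,T_2^{(K)}(\xi_a-2\eta)=q\text{-det}M^{(K)}(\xi_a-2\eta)$, substitute the Corollary's interpolation of $T_2^{(K)}(\xi_a-2\eta)$, re-expand each shifted factor $T_1^{(K)}(\xi_n-\eta)$ by interpolation at the \emph{unshifted} nodes $\{\xi_b\}$, and pair with $\langle S|\prod_{b\neq a}(T_1^{(K)}(\xi_b))^{h_b}$, reading off each term from \eqref{SoV-Ch-T-eigenV}. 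This does not collapse onto the scalar cubic system: after the re-expansion, the left-hand side is a sum of monomials $T_1^{(K)}(\xi_a)T_1^{(K)}(\xi_b)T_1^{(K)}(\xi_n)$ with arbitrary $b,n$, so pairing with a covector having $h_c=2$ at some $c\neq a$ produces exponents up to $h_c+2=4$ at those sites. These are out-of-range covectors whose pairings with $|t\rangle$ are \emph{not} determined by the factorized ansatz; your single identity therefore expresses the unknown overflow at site $a$ in terms of new unknown overflows (including double overflows) at the other sites, and the "reading off" step fails. Deferring this as "bookkeeping" is precisely deferring the theorem's actual content.

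The paper's resolution, which is the missing idea, has three ingredients. First, one never computes the overflow $\langle h_1,\dots,h_N|T_1^{(K)}(\xi_a)|t\rangle$ for $h_a=2$; instead one proves the shifted identity \eqref{Id-step2}, $\langle h_1,\dots,h_N|T_1^{(K)}(\xi_a-\eta)|t\rangle=t_1(\xi_a-\eta)\langle h_1,\dots,h_N|t\rangle$, and interpolates $T_1^{(K)}(\lambda)$ at the $\mathbf{h}$-dependent nodes $\xi_a^{(p_a)}$ with $p_a=1$ exactly when $h_a=2$, so out-of-range raisings never occur. Second, in proving \eqref{Id-step2} via $\langle\dots,h_a=1,\dots|T_2^{(K)}(\xi_a-\eta)|t\rangle$, the interpolation nodes for $T_2^{(K)}$ are also adapted to $\mathbf{h}$: at $\xi_n-2\eta$ wherever $h_n\geq 1$, where a lowering through the quantum-determinant fusion $T_1^{(K)}(\xi_n)T_2^{(K)}(\xi_n-2\eta)=q\text{-det}M^{(K)}(\xi_n-2\eta)$ is available, and at $\xi_n-\eta$ wherever $h_n=0$, where the fusion $T_1^{(K)}(\xi_n-\eta)T_1^{(K)}(\xi_n)=T_2^{(K)}(\xi_n-\eta)$ combines a legal raising with a residual action of $T_1^{(K)}(\xi_n-\eta)$. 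Third, those residual shifted actions are controlled by an induction on the number $R$ of zeros in $\mathbf{h}$ (the raising at a zero site reduces $R$ by one, so the induction hypothesis applies), and the scalar function $t_2(\lambda)$ of \eqref{function t2-def} — satisfying $t_2(\xi_n-\eta)=t_1(\xi_n-\eta)t_1(\xi_n)$ by construction and $t_1(\xi_n)t_2(\xi_n-2\eta)=q\text{-det}M^{(K)}(\xi_n-2\eta)$ by the cubic system — resums all contributions back into $t_1(\xi_a-\eta)$. Without the adapted nodes and this induction, your expansion scheme provably leaves undetermined terms, so as written the proposal does not prove the reverse inclusion.
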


\begin{proof}
Let us start observing that the inhomogeneous system of $N$ cubic
equations in $N$ unknown $\{x_{1},...,x_{N}\}$ is nothing
else but the rewriting of the transfer matrix fusion equations:%
\begin{equation}
t_{1}(\xi _{a})t_{2}(\xi _{a}-2\eta )=q\text{-det}M^{(K)}(\xi _{a}-2\eta ),%
\text{ }\forall a\in \{1,...,N\},  \label{scalar-fusion-1}
\end{equation}%
for the eigenvalues of the transfer matrices $T_{1,2}^{(K)}(\lambda )$. So
that this system has to be satisfied and the associated eigenvector $%
|t\rangle $ admits the given characterization in the covector SoV basis.

So we are left with the proof of the reverse statement, i.e. that any
polynomial $t_{1}(\lambda )$ of the above form satisfying this system is an
eigenvalue of the transfer matrices and this is proven by showing that the
vector $|t\rangle $ characterized by $(\ref{SoV-Ch-T-eigenV})$ is a transfer
matrix eigenstate, i.e. we have to show:%
\begin{equation}
\langle h_{1},...,h_{N}|T_{1}^{(K)}(\lambda )|t\rangle
=t_{1}(\lambda )\langle h_{1},...,h_{N}|t\rangle ,\text{ }\forall
\{h_{1},...,h_{N}\}\in \{0,1,2\}^{\otimes N}.
\label{Eigen-cond-sl3}
\end{equation}%
Let us start observing that by using the interpolation formula:%
\begin{equation}
T_{1}^{(K)}(\lambda )=T_{1,\mathbf{h}}^{(K,\infty )}(\lambda )+\sum_{a=1}^{%
N}g_{a,\mathbf{h}}(\lambda )T_{1}^{(K)}(\xi _{a}^{(h_{a})}),\text{ }%
T_{1,\mathbf{h}}^{(K,\infty )}(\lambda )=\text{tr\thinspace }K\text{ }%
\prod_{a=1}^{N}(\lambda -\xi _{a}^{(h_{a})}),
\end{equation}%
the above statement is proven once we prove the identity in the points $\xi
_{a}^{(h_{a})}$ for any $a\in \{1,...,N\}$. Let be $h_{a}=0,1$ and $%
h_{b}\in \{0,1,2\}$ for any $b\in \{1,...,N\}\backslash a$, then we
have the following identities:%
\begin{align}
\langle h_{1},...,h_{N}|T_{1}^{(K)}(\xi _{a})|t\rangle & =\langle
h_{1},...,h_{a}+1,...,h_{N}|t\rangle  \notag \\
& =t_{1}(\xi _{a})\langle h_{1},...,h_{a},...,h_{N}|t\rangle ,
\label{Id-step1}
\end{align}%
as a direct consequence of the definition of the covector SoV basis and of
the state $|t\rangle $. So that we are left with the proof of the statement
in the case $h_{a}=2$. In this case we want to prove that it holds:%
\begin{equation}
\langle h_{1},...,h_{N}|T_{1}^{(K)}(\xi _{a}-\eta )|t\rangle
=t_{1}(\xi _{a}-\eta )\langle h_{1},...,h_{a},...,h_{N}|t\rangle ,
\label{Id-step2}
\end{equation}%
the proof is done by induction on the number $R$ of zeros contained in $%
\{h_{1},...,h_{N}\}\in \{0,1,2\}^{\otimes N}$. Let us
observe that by the fusion identities it holds:%
\begin{equation}
\langle h_{1},...,h_{a}=2,...,h_{N}|T_{1}^{(K)}(\xi _{a}-\eta
)|t\rangle =\langle h_{1},...,h_{a}=1,...,h_{N}|T_{2}^{(K)}(\xi
_{a}-\eta )|t\rangle .
\end{equation}%
Let us start to prove our identity for $R$ $=0$. We can use the
following interpolation formula:%
\begin{equation}
T_{2}^{\left( K\right) }(\xi _{a}-\eta )=T_{2,\mathbf{h}}^{(K,\infty )}(\xi
_{a}-\eta )+\sum_{n=1}^{N}f_{n,\mathbf{h}=\mathbf{2}}(\xi _{a}-\eta
)T_{2}^{\left( K\right) }(\xi _{n}-2\eta )
\end{equation}%
so that:%
\begin{align}
\langle h_{1},...,h_{a}^{\prime}&=1,...,h_{N}|T_{2}^{(K)}(\xi
_{a}-\eta )|t\rangle =T_{2,\mathbf{h}}^{(K,\infty )}(\xi _{a}-\eta )\langle
h_{1},...,h_{a}^{\prime },...,h_{N}|t\rangle \\
&+\sum_{n=1}^{N}f_{n,\mathbf{h}=\mathbf{2}}(\xi _{a}-\eta )\langle
h_{1},...,h_{a}^{\prime },...,h_{N}|T_{2}^{\left( K\right) }(\xi
_{n}-2\eta )|t\rangle .
\end{align}%
Then, by the assumption $R=0$ and by the fusion identity, it follows:%
\begin{align}
\langle h_{1},...,h_{a}^{\prime },...,h_{N}|T_{2}^{(K)}(\xi
_{a}-\eta )|t\rangle & =T_{2,\mathbf{h}=\mathbf{2}}^{(K,\infty )}(\xi
_{a}-\eta )\langle h_{1},...,h_{a}^{\prime },...,h_{N}|t\rangle 
\notag \\
& +\sum_{n=1}^{N}q\text{-det}M^{(K)}(\xi _{n}-2\eta )f_{n,\mathbf{h}%
=\mathbf{2}}(\xi _{a}-\eta )\langle h_{1},...,h_{n}^{\prime \prime },...,h_{%
N}|t\rangle ,
\end{align}%
where $h_{n}^{\prime \prime }=h_{n}-1$ for $n\neq a$ and $h_{a}^{\prime
\prime }=h_{a}^{\prime }-1=0$. Let us now define the function:%
\begin{equation}
t_{2}(\lambda )=T_{2,\mathbf{h}=\mathbf{1}}^{(K,\infty )}(\lambda
)+\sum_{n=1}^{N}f_{n,\mathbf{h}=\mathbf{1}}(\lambda )t_{1}(\xi
_{n}-\eta )t_{1}(\xi _{n}),  \label{function t2-def}
\end{equation}%
then by its definition it satisfies the equations:%
\begin{eqnarray}
t_{2}(\xi _{n}-\eta ) &=&t_{1}(\xi _{n}-\eta )t_{1}(\xi _{n}),\text{ }%
\forall n\in \{1,...,N\}, \\
t_{1}(\xi _{a})t_{2}(\xi _{a}-2\eta ) &=&q\text{-det}M^{(K)}(\xi _{a}-2\eta
),\text{ }\forall n\in \{1,...,N\},
\end{eqnarray}%
while the second (quantum determinant) equation is satisfied by the
definition of the function $t_{1}(\lambda )$. Using the function $%
t_{2}(\lambda )$ and these identities we get:%
\begin{align}
& \langle h_{1},...,h_{a}^{\prime },...,h_{N}|T_{2}^{(K)}(\xi
_{a}-\eta )|t\rangle \left. =\right.  \notag \\
& =\left( T_{2,\mathbf{h}=\mathbf{2}}^{(K,\infty )}(\xi _{a}-\eta
)+\sum_{n=1}^{N}t_{2}(\xi _{n}-2\eta )f_{n,\mathbf{h}=\mathbf{2}%
}(\xi _{a}-\eta )\right) \langle h_{1},...,h_{a}^{\prime },...,h_{N%
}|t\rangle , \\
& =t_{2}(\xi _{n}-\eta )\langle h_{1},...,h_{a}^{\prime },...,h_{N%
}|t\rangle \\
& =t_{1}(\xi _{n}-\eta )\langle h_{1},...,h_{a}=2,...,h_{N%
}|t\rangle ,
\end{align}%
where we have used the interpolation formula:%
\begin{equation}
t_{2}(\xi _{a}-\eta )=T_{2,\mathbf{h}=\mathbf{2}}^{(K,\infty )}(\xi
_{a}-\eta )+\sum_{n=1}^{N}t_{2}(\xi _{n}-2\eta )f_{n,\mathbf{h}=%
\mathbf{2}}(\xi _{a}-\eta ),
\end{equation}%
i.e. we have shown our identity $(\ref{Id-step2})$ for $R=0$. Let us now
make the proof by induction assuming that it holds for generic $%
\{h_{1},...,h_{N}\}\in \{0,1,2\}^{\otimes N}$ containing $%
R-1$ zeros. Then we have to show the same property for generic $\{h_{1},...,h_{N}\}\in
\{0,1,2\}^{\otimes N}$ containing $R$ zeros. Let us fix the generic 
$\{h_{1},...,h_{N}\}\in \{0,1,2\}^{\otimes N}$ with $%
h_{a}=2$ and let us denote with $\pi $ a permutation of $\{1,...,N%
\} $ such that:%
\begin{equation}
\left. 
\begin{array}{l}
h_{\pi (i)}=0,\text{ }\forall i\in \{1,...,R\}, \\ 
h_{\pi (i)}=1,\text{ }\forall i\in \{R+1,...,R+S\}, \\ 
h_{\pi (i)}=2,\text{ }\forall i\in \{R+S+1,...,N\},%
\end{array}%
\right.
\end{equation}%
with $a=\pi (R+S+1)$. Let us use now the following interpolation formula:%
\begin{equation}
T_{2}^{\left( K\right) }(\xi _{a}-\eta )=T_{2,\mathbf{k}}^{(K,\infty )}(\xi
_{a}-\eta )+\sum_{n=1}^{N}f_{n,\mathbf{k}}(\xi _{a}-\eta
)T_{2}^{\left( K\right) }(\xi _{n}^{\left( k_{n}\right) }),
\end{equation}%
where we have defined $\mathbf{k}$ by:%
\begin{equation}
\left. 
\begin{array}{l}
k_{\pi (i)}=1,\text{ }\forall i\in \{1,...,R\}, \\ 
k_{\pi (i)}=2,\text{ }\forall i\in \{R+1,...,N\},%
\end{array}%
\right.
\end{equation}%
then it holds: 
\begin{align}
\langle h_{1},...,h_{a}^{\prime }& =1,...,h_{N}|T_{2}^{(K)}(\xi
_{a}-\eta )|t\rangle =T_{2,\mathbf{k}}^{(K,\infty )}(\xi _{a}-\eta )\langle
h_{1},...,h_{a}^{\prime },...,h_{N}|t\rangle  \notag \\
& +\sum_{n=1}^{R}f_{\pi (n),\mathbf{k}}(\xi _{a}-\eta )\langle
h_{1},...,h_{a}^{\prime },...,h_{N}|T_{2}^{\left( K\right) }(\xi
_{\pi (n)}-\eta )|t\rangle  \notag \\
& +\sum_{n=R+1}^{N}f_{\pi (n),\mathbf{k}}(\xi _{a}-\eta )\langle
h_{1},...,h_{a}^{\prime },...,h_{N}|T_{2}^{\left( K\right) }(\xi
_{\pi (n)}-2\eta )|t\rangle .
\end{align}%
Using the fusion identity, we get:%
\begin{align}
\langle h_{1},...,h_{a}^{\prime },...,h_{N}|T_{2}^{(K)}(\xi
_{a}-\eta )|t\rangle & =T_{2,\mathbf{k}}^{(K,\infty )}(\xi _{a}-\eta
)\langle h_{1},...,h_{a}^{\prime },...,h_{N}|t\rangle  \notag \\
& +\sum_{n=1}^{R}f_{\pi (n),\mathbf{k}}(\xi _{a}-\eta )\langle
h_{1}^{(n)},...,h_{N}^{(n)}|T_{1}^{\left( K\right) }(\xi _{\pi
(n)}-\eta )|t\rangle  \notag \\
& +\sum_{n=R+1}^{N}q\text{-det}M^{(K)}(\xi _{\pi (n)}-2\eta )f_{\pi
(n),\mathbf{k}}(\xi _{a}-\eta )\langle h_{1}^{(n)},...,h_{N%
}^{(n)}|t\rangle ,  \label{T2-id-2}
\end{align}%
where we have defined:%
\begin{equation}
h_{\pi (m)}^{(n)}=\left\{ 
\begin{array}{l}
h_{\pi (m)}+\theta (R-m)\delta _{m,n}\text{ \ for }n\leq R \\ 
h_{\pi (m)}-\theta (m-(R+1))\delta _{m,n}-\delta _{m,R+S+1}\text{ \ for }%
R+1\leq n%
\end{array}%
\right. .
\end{equation}%
To compute $\langle h_{1}^{(n)},...,h_{N}^{(n)}|T_{1}^{\left(
K\right) }(\xi _{\pi (n)}-\eta )|t\rangle $ for $n\leq R$, we use the
following interpolation formula:%
\begin{equation}
T_{1}^{\left( K\right) }(\xi _{\pi (n)}-\eta )=T_{1,\mathbf{k}^{\prime
}}^{(K,\infty )}(\xi _{\pi (n)}-\eta )+\sum_{a=1}^{N}g_{a,\mathbf{k}%
^{\prime }}(\xi _{\pi (n)}-\eta )T_{1}^{\left( K\right) }(\xi _{a}^{(k_{a}^{\prime
})}),
\end{equation}%
where we have defined:%
\begin{equation}
k_{\pi (m)}^{\prime }=\left\{ 
\begin{array}{l}
0\text{ \ for }m\leq R+S+1 \\ 
1\text{ \ for }R+S+2\leq m%
\end{array}%
\right. ,
\end{equation}%
which gives:%
\begin{align}
\langle h_{1}^{(n)},...,h_{N}^{(n)}|T_{1}^{\left( K\right) }(\xi
_{\pi (n)}-\eta )|t\rangle & =T_{1,\mathbf{k}^{\prime }}^{(K,\infty )}(\xi
_{\pi (n)}-\eta )\langle h_{1}^{(n)},...,h_{N}^{(n)}|t\rangle 
\notag \\
& +\sum_{a=1}^{R+S+1}g_{\pi (a),\mathbf{k}^{\prime }}(\xi _{\pi (n)}-\eta
)\langle h_{1}^{(n)},...,h_{N}^{(n)}|T_{1}^{\left( K\right) }(\xi
_{\pi (a)})|t\rangle  \notag \\
& +\sum_{a=R+S+2}^{N}g_{\pi (a),\mathbf{k}^{\prime }}(\xi _{\pi
(n)}-\eta )\langle h_{1}^{(n)},...,h_{N}^{(n)}|T_{1}^{\left(
K\right) }(\xi _{\pi (a)}-\eta )|t\rangle , \label{Iter-step-2}
\end{align}%
which becomes:%
\begin{align}
\langle h_{1}^{(n)},...,h_{N}^{(n)}|T_{1}^{\left( K\right) }(\xi
_{\pi (n)}-\eta )|t\rangle & =T_{1,\mathbf{k}^{\prime }}^{(K,\infty )}(\xi
_{\pi (n)}-\eta )\langle h_{1}^{(n)},...,h_{N}^{(n)}|t\rangle 
\notag \\
& +\sum_{a=1}^{R+S+1}g_{\pi (a),\mathbf{k}^{\prime }}(\xi _{\pi (n)}-\eta
)t_{1}(\xi _{\pi (a)})\langle h_{1}^{(n)},...,h_{N}^{(n)}|t\rangle 
\notag \\
& +\sum_{a=R+S+2}^{N}g_{\pi (a),\mathbf{k}^{\prime }}(\xi _{\pi
(n)}-\eta )t_{1}(\xi _{\pi (a)}-\eta )\langle h_{1}^{(n)},...,h_{N%
}^{(n)}|t\rangle ,
\end{align}%
where in the second line we have used the identity $(\ref{Id-step1})$ while
in the third line the identity $(\ref{Id-step2})$, which holds by assumption
being $R-1$ the number of zeros in $\{h_{1}^{(n)},...,h_{N}^{(n)}\}$%
. So that we have shown for any $n\leq R$:%
\begin{equation}
\langle h_{1}^{(n)},...,h_{N}^{(n)}|T_{1}^{\left( K\right) }(\xi
_{\pi (n)}-\eta )|t\rangle =t_{1}(\xi _{\pi (n)}-\eta )\langle
h_{1}^{(n)},...,h_{N}^{(n)}|t\rangle ,
\end{equation}%
and substituting it in \rf{Iter-step-2} we get:%
\begin{align}
\langle h_{1},...,h_{a}^{\prime },...,h_{N}|T_{2}^{(K)}(\xi
_{a}-\eta )|t\rangle & =T_{2,\mathbf{k}}^{(K,\infty )}(\xi _{a}-\eta
)\langle h_{1},...,h_{a}^{\prime },...,h_{N}|t\rangle  \notag \\
& +\sum_{n=1}^{R}t_{1}(\xi _{\pi (n)}-\eta )f_{\pi (n),\mathbf{k}}(\xi
_{a}-\eta )\langle h_{1}^{(n)},...,h_{N}^{(n)}|t\rangle  \notag \\
& +\sum_{n=R+1}^{N}q\text{-det}M^{(K)}(\xi _{\pi (n)}-2\eta )f_{\pi
(n),\mathbf{k}}(\xi _{a}-\eta )\langle h_{1}^{(n)},...,h_{N%
}^{(n)}|t\rangle ,
\end{align}%
and so $\langle h_{1},...,h_{a}^{\prime },...,h_{N}|T_{2}^{(K)}(\xi
_{a}-\eta )|t\rangle $ reads:%
\begin{align}
& \left( T_{2,\mathbf{k}}^{(K,\infty )}(\xi _{a}-\eta
)+\sum_{n=1}^{R}t_{1}(\xi _{\pi (n)})t_{1}(\xi _{\pi (n)}-\eta )f_{\pi (n),%
\mathbf{k}}(\xi _{a}-\eta )+\sum_{n=R+1}^{N}t_{2}(\xi _{\pi
(n)}-2\eta )f_{\pi (n),\mathbf{k}}(\xi _{a}-\eta )\right)  \notag \\
& \times \langle h_{1},...,h_{a}^{\prime },...,h_{N}|t\rangle 
\notag \\
& =t_{2}(\xi _{a}-\eta )\langle h_{1},...,h_{a}^{\prime }=1,...,h_{N%
}|t\rangle =t_{1}(\xi _{a}-\eta )\langle h_{1},...,h_{a}=2,...,h_{N%
}|t\rangle ,
\end{align}%
i.e. we have proven our formula $(\ref{Id-step2})$. Finally, taking 
generic $\{h_{1},...,h_{N}\}\in \{0,1,2\}^{\otimes N}$
with:%
\begin{equation}
\left. 
\begin{array}{l}
h_{\pi (i)}=0,\text{ }\forall i\in \{1,...,R\}, \\ 
h_{\pi (i)}=1,\text{ }\forall i\in \{R+1,...,R+S\}, \\ 
h_{\pi (i)}=2,\text{ }\forall i\in \{R+S+1,...,N\},%
\end{array}%
\right.
\end{equation}%
and by using the interpolation formula:%
\begin{equation}
T_{1}^{\left( K\right) }(\lambda )=T_{1,\mathbf{p}}^{(K,\infty )}(\lambda
)+\sum_{n=1}^{N}g_{n,\mathbf{p}}(\lambda )T_{1}^{\left( K\right)
}(\xi _{n}^{\left( p_{n}\right) }),
\end{equation}%
where we have defined $\mathbf{p}$ by:%
\begin{equation}
\left. 
\begin{array}{l}
p_{\pi (i)}=0,\text{ }\forall i\in \{1,...,R+S\}, \\ 
p_{\pi (i)}=1,\text{ }\forall i\in \{R+S+1,...,N\},%
\end{array}%
\right.
\end{equation}%
we get: 
\begin{align}
\langle h_{1},...,h_{N}|T_{1}^{\left( K\right) }(\lambda )|t\rangle
& =T_{1,\mathbf{p}}^{(K,\infty )}(\lambda )\langle h_{1},...,h_{N%
}|t\rangle  \notag \\
& +\sum_{n=1}^{R}g_{\pi (n),\mathbf{p}}(\lambda )\langle h_{1},...,h_{%
N}|T_{1}^{\left( K\right) }(\xi _{\pi (n)})|t\rangle  \notag \\
& +\sum_{n=R+1}^{N}g_{\pi (n),\mathbf{p}}(\lambda )\langle
h_{1},...,h_{N}|T_{1}^{\left( K\right) }(\xi _{\pi (n)}-\eta
)|t\rangle .
\end{align}%
Then, using in the second line the identity $(\ref{Id-step1})$ and $(\ref%
{Id-step2})$ in the third line we get:%
\begin{align}
\langle h_{1},...,h_{N}|T_{1}^{\left( K\right) }(\lambda )|t\rangle
&=\left( T_{1,\mathbf{p}}^{(K,\infty )}(\lambda )+\sum_{n=1}^{N%
}g_{\pi (n),\mathbf{p}}(\lambda )t_{1}(\xi _{\pi (n)}^{(p_{\pi (n)})}-\eta
)\right) \langle h_{1},...,h_{N}|t\rangle \\
&=t_{1}(\lambda )\langle h_{1},...,h_{N}|t\rangle
\end{align}%
which complete the proof of our theorem. Note that clearly, from the
following interpolation formula:%
\begin{equation}
T_{2}^{\left( K\right) }(\lambda )=T_{2,\mathbf{h}=\mathbf{1}}^{(K,\infty
)}(\lambda )+\sum_{a=1}^{N}f_{a,\mathbf{h}=\mathbf{1}}(\lambda
)T_{1}^{\left( K\right) }(\xi _{a}-\eta )T_{1}^{\left( K\right) }(\xi _{a}),
\end{equation}
it also holds:%
\begin{eqnarray}
\langle h_{1},...,h_{N}|T_{2}^{\left( K\right) }(\lambda )|t\rangle
&=&\left( T_{2,\mathbf{h}=\mathbf{1}}^{(K,\infty )}(\lambda )+\sum_{a=1}^{%
N}f_{a,\mathbf{h}=\mathbf{1}}(\lambda )t_{1}(\xi _{a}-\eta
)t_{1}(\xi _{a})\right) \langle h_{1},...,h_{N}|t\rangle \nonumber \\
&=&t_{2}(\lambda )\langle h_{1},...,h_{N}|t\rangle .
\end{eqnarray}
\end{proof}
Let us make some elementary remark about this quite lengthy proof. In fact the main idea behind it is to use the fusion relations until we get the quantum determinant which acts trivially on any covector. While doing so we use interpolation formulae for the transfer matrix. Then one can note that the same fusion relations and interpolation formulae are true for the eigenvalues of the transfer matrices, hence giving the possibility to reverse the process and to reconstruct it in the necessary points. 
From the above discrete characterization of the transfer matrix spectrum in
our SoV basis we can prove the following quantum spectral curve functional
reformulation.

\begin{theorem}
Let us assume that the twist matrix $K$ is $w$-simple and it has at least one nonzero eigenvalues then the entire functions $t_{1}(\lambda )$  is a $T_{1}^{\left( K\right) }(\lambda )$ transfer matrix eigenvalue if and only
if there exists a unique polynomial:%
\begin{equation}
\varphi _{t}(\lambda )=\prod_{a=1}^{\mathsf{M}}(\lambda -\lambda _{a})\text{%
\ \ with }\mathsf{M}\leq N\text{ and }\lambda _{a}\neq \xi _{n}%
\text{ }\forall (a,n)\in \{1,...,\mathsf{M}\}\times \{1,...,N\},
\label{Phi-form}
\end{equation}%
such that $t_{1}(\lambda )$,%
\begin{equation}
t_{2}(\lambda )=T_{2,\mathbf{h}=\mathbf{1}}^{(K,\infty )}(\lambda
)+\sum_{n=1}^{N}f_{n,\mathbf{h}=\mathbf{1}}(\lambda )t_{1}(\xi
_{n}-\eta )t_{1}(\xi _{n}),
\end{equation}%
and $\varphi _{t}(\lambda )$ are solutions of the following quantum spectral
curve:%
\begin{eqnarray}
&&\alpha (\lambda )\varphi _{t}(\lambda -3\eta )-\beta (\lambda
)t_{1}(\lambda -2\eta )\varphi _{t}(\lambda -2\eta )  \notag \\
&&+\gamma (\lambda )t_{2}(\lambda -2\eta )\varphi _{t}(\lambda -\eta )-q%
\text{-det}M_{a}^{(K)}(\lambda -2\eta )\varphi _{t}(\lambda )\left. =\right.
0
\end{eqnarray}%
where:%
\begin{eqnarray}
\alpha (\lambda ) &=&\gamma (\lambda )\gamma (\lambda -\eta )\gamma (\lambda
-2\eta ), \\
\beta (\lambda ) &=&\gamma (\lambda )\gamma (\lambda -\eta ), \\
\gamma (\lambda ) &=&\gamma _{0}\prod_{a=1}^{N}(\lambda +\eta -\xi
_{a}),
\end{eqnarray}%
and $\gamma _{0}$ is a nonzero solution of the characteristic equation:%
\begin{equation}
\gamma _{0}^{3}-\gamma _{0}^{2}\text{ tr\thinspace }K+\gamma _{0}\frac{%
\left( \text{tr}K\right) ^{2}-\text{tr}K^{2}}{2}=\text{det\thinspace }K\text{%
,}  \label{Ch-Eq-K}
\end{equation}%
i.e. $\gamma _{0}$ is a nonzero eigenvalue of the matrix $K$. Moreover,
up to a normalization the common transfer matrix eigenstate $|t\rangle $
admits the following separate representation:%
\begin{equation}
\langle h_{1},...,h_{N}|t\rangle =\prod_{a=1}^{N}\gamma
^{h_{a}}(\xi _{a})\varphi _{t}^{h_{a}}(\xi _{a}-\eta )\varphi
_{t}^{2-h_{a}}(\xi _{a}).
\end{equation}
\end{theorem}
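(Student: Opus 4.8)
The plan is to prove the two implications separately, reducing in both directions the third-order difference equation to its values on the shifted inhomogeneity lattice $\{\xi_a+j\eta\}$, where it collapses onto the fusion and quantum-determinant identities recalled in the propositions above. The pivotal observation is that, evaluating the quantum spectral curve at $\lambda=\xi_a-\eta,\ \xi_a,\ \xi_a+\eta,\ \xi_a+2\eta$ and using the central zeros $q\text{-det}M^{(K)}(\xi_b)=0=q\text{-det}M^{(K)}(\xi_b-\eta)$, the vanishing $\gamma(\xi_a-\eta)=0$, and the identities $t_2(\xi_a)=0$, $t_2(\xi_a-\eta)=t_1(\xi_a-\eta)t_1(\xi_a)$ which hold \emph{by construction} of $t_2$ in \eqref{function t2-def}, all four evaluations either vanish termwise (at $\xi_a-\eta$) or reduce to the single first-order relation
\begin{equation}
\gamma(\xi_a)\,\varphi_t(\xi_a-\eta)=t_1(\xi_a)\,\varphi_t(\xi_a),\qquad a=1,\dots,N. \tag{$\star$}
\end{equation}
Since $\gamma(\xi_a)=\gamma_0\prod_b(\xi_a+\eta-\xi_b)$, relation $(\star)$ is structurally identical to the $Y(gl_2)$ Baxter relation of Section 3, with $\mathsf{k}_0$ replaced by the nonzero eigenvalue $\gamma_0$ of $K$; this identification is what lets me import the rank-one construction of the $Q$-function.

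For the direction ``$\varphi_t$ exists $\Rightarrow$ eigenvalue'', I would first solve the curve for $t_1(\lambda-2\eta)$: as $t_2$ is the fixed polynomial \eqref{function t2-def} of degree $2N$ built from the numbers $t_1(\xi_n),t_1(\xi_n-\eta)$, the right-hand side is rational in $\lambda$, so the entire function $t_1$ is forced to be a polynomial. Balancing the top power $\lambda^{3N+\mathsf{M}}$ then requires $\deg t_1=N$, and comparing with the characteristic equation \eqref{Ch-Eq-K} satisfied by $\gamma_0$ pins the leading coefficient to $\operatorname{tr}K$. Evaluating at $\lambda=\xi_a+2\eta$ gives $(\star)$, and inserting $(\star)$ into the evaluation at $\lambda=\xi_a$ yields the cubic fusion relation $t_1(\xi_a)t_2(\xi_a-2\eta)=q\text{-det}M^{(K)}(\xi_a-2\eta)$, i.e.\ \eqref{scalar-fusion-1}. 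This is precisely membership of $t_1$ in $\Sigma_{T^{(K)}}$ of \eqref{SET-T5}, so $t_1$ is a transfer-matrix eigenvalue by the discrete characterization already established.

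For the converse, if $t_1$ is an eigenvalue then the three fusion relations hold, and the construction of $\varphi_t$ proceeds exactly as in the $Y(gl_2)$ case: because $(\star)$ is the same first-order relation, the existence and uniqueness of a polynomial $\varphi_t(\lambda)=\prod_{a=1}^{\mathsf{M}}(\lambda-\lambda_a)$ with $\mathsf{M}\le N$ and $\lambda_a\neq\xi_n$ follows from the argument of \cite{KitMNT16}. It then remains to upgrade $(\star)$ to the full functional identity. The left-hand side of the spectral curve is a polynomial of degree at most $3N+\mathsf{M}$ whose top coefficient equals $\gamma_0^3-\gamma_0^2\operatorname{tr}K+\gamma_0\tfrac{(\operatorname{tr}K)^2-\operatorname{tr}K^2}{2}-\det K$, which vanishes by \eqref{Ch-Eq-K}; using the fusion relations together with $(\star)$ one checks it also vanishes at the $4N$ points $\xi_a+j\eta$, $j\in\{-1,0,1,2\}$. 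As the number of zeros exceeds the reduced degree, the curve holds identically. Finally, the separate representation of $|t\rangle$ is immediate from \eqref{SoV-Ch-T-eigenV}: writing $t_1(\xi_n)=\gamma(\xi_n)\varphi_t(\xi_n-\eta)/\varphi_t(\xi_n)$ via $(\star)$ and absorbing the $h$-independent factor $\prod_n\varphi_t^{2}(\xi_n)$ into the normalization turns $\prod_n t_1^{h_n}(\xi_n)$ into $\prod_a\gamma^{h_a}(\xi_a)\varphi_t^{h_a}(\xi_a-\eta)\varphi_t^{2-h_a}(\xi_a)$.

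The hard part will be the reverse-direction construction step, namely establishing existence, uniqueness and the bound $\mathsf{M}\le N$ for $\varphi_t$ solving $(\star)$ with roots avoiding the $\xi_n$; this is where the rank-one machinery of \cite{KitMNT16} has to be shown to transfer verbatim despite $t_1$ now being a genuine degree-$N$ eigenvalue of the higher transfer matrix. A subsidiary point is that the $4N$ evaluation points must be pairwise distinct: the coincidence $\xi_a-\eta=\xi_b+2\eta$ (i.e.\ $\xi_a-\xi_b=3\eta$) is \emph{not} excluded by \eqref{Inhomog-cond-gl3}, so I expect to handle it either by a higher-order vanishing count or, more cleanly, by proving the polynomial identity for inhomogeneities in generic position and extending to all admissible $\{\xi_n\}$ by continuity. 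Everything else is routine bookkeeping with the interpolation formulae for $T_1^{(K)}$ and $T_2^{(K)}$.
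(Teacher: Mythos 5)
Your proposal reproduces the paper's own proof essentially step for step: the same evaluation of the spectral curve at $\lambda=\xi_a+k\eta$, $k\in\{-1,0,1,2\}$, collapsing (via $\alpha/\beta=\beta/\gamma=\gamma(\,\cdot-2\eta)$ evaluated appropriately, the definitional identity $t_2(\xi_a-\eta)=t_1(\xi_a-\eta)t_1(\xi_a)$, and the cubic fusion relation) onto the single first-order relation $(\star)$, which is precisely the paper's discrete characterization of $\varphi_t$; the same leading-coefficient argument pinning $t_{1,N}=\operatorname{tr}K$ through the characteristic equation for $\gamma_0$; the same appeal to the rank-one (Y($gl_2$)/\cite{KitMNT16}) machinery for existence and uniqueness of $\varphi_t$; the same $4N$-point zero count for the converse; and the same renormalization by $\prod_a\varphi_t^2(\xi_a)$ for the eigenvector formula. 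Your one genuine addition is the correct observation that $\xi_a-\xi_b=3\eta$ is not excluded by \rf{Inhomog-cond-gl3}, so the $4N$ evaluation points need not be pairwise distinct — a point the paper's proof silently assumes, though it is harmless since the SoV basis itself is only claimed for almost any inhomogeneities, and your genericity-plus-continuity patch (or a refined zero count, noting $M<N$ already suffices with $4N-1$ points) closes it.
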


\begin{proof}
Let us assume that the entire function $t_{1}(\lambda )$ satisfies with the
polynomial $t_{2}(\lambda )$ and $\varphi _{t}(\lambda )$ the functional
equation then it is a degree $N$ polynomial in $\lambda $ with leading coefficient $t_{1,N}$
satisfying the equation:%
\begin{equation}
\gamma _{0}^{3}-\gamma _{0}^{2}\text{ }t_{1,N}+\gamma _{0}\frac{%
\left( \text{tr}K\right) ^{2}-\text{tr}K^{2}}{2}=\text{det\thinspace }K\text{%
,}
\end{equation}%
which being $\gamma _{0}$ an eigenvalue of $K$ implies:%
\begin{equation}
t_{1,N}=\text{tr\thinspace }K.
\end{equation}%
Let us observe that, for $\lambda =\xi _{a}$ it holds:%
\begin{equation}
\alpha (\xi _{a})=\beta (\xi _{a})=0,\text{ }\gamma (\xi _{a})\neq 0,\text{
det}K(\xi _{a}-2\eta )\neq 0,
\end{equation}%
so that the functional equation is reduced in these points to:%
\begin{equation}
\frac{\gamma (\xi _{a})\varphi _{t}(\xi _{a}-\eta )}{\varphi _{t}(\xi _{a})}=%
\frac{\text{det}_{q}M_{a}^{(K)}(\xi _{a}-2\eta )}{t_{2}(\xi _{a}-2\eta )},
\label{Ch-1-Q}
\end{equation}%
while for $\lambda =\xi _{a}+\eta $ it holds:%
\begin{equation}
\alpha (\xi _{a}+\eta )=\text{det}M_{a}^{(K)}(\xi _{a}-\eta )=0,\text{ \ }%
\beta (\xi _{a}+\eta )\neq 0,\text{ }\gamma (\xi _{a}+\eta )\neq 0,
\end{equation}%
so that the functional equation is reduced to:%
\begin{equation}
\frac{\beta (\xi _{a}+\eta )\varphi _{t}(\xi _{a}-\eta )}{\gamma (\xi
_{a}+\eta )\varphi _{t}(\xi _{a})}=\frac{t_{2}(\xi _{a}-\eta )}{t_{1}(\xi
_{a}-\eta )} \ . \label{Ch-2-Q}
\end{equation}%
Finally for $\lambda =\xi _{a}+2\eta $ it holds:%
\begin{equation}
t_{2}(\xi _{a})=\text{det}M_{a}^{(K)}(\xi _{a})=0,\text{ \ }\beta (\xi
_{a}+2\eta )\neq 0,\text{ }\alpha (\xi _{a}+2\eta )\neq 0,
\end{equation}%
so that the functional equation is reduced to:%
\begin{equation}
\frac{\alpha (\xi _{a}+2\eta )\varphi _{t}(\xi _{a}-\eta )}{\beta (\xi
_{a}+2\eta )\varphi _{t}(\xi _{a})}=t_{1}(\xi _{a}).  \label{Ch-3-Q}
\end{equation}%
These identities implies that the following equations are satisfied:%
\begin{eqnarray}
t_{2}(\xi _{n}-\eta ) &=&t_{1}(\xi _{n}-\eta )t_{1}(\xi _{n}),\text{ }%
\forall n\in \{1,...,N\}, \\
t_{1}(\xi _{a})t_{2}(\xi _{a}-2\eta ) &=&q\text{-det}M^{(K)}(\xi _{a}-2\eta
),\text{ }\forall n\in \{1,...,N\},
\end{eqnarray}%
so that by our previous theorem we have that $t_{1}(\lambda )$ and $%
t_{2}(\lambda )$ are eigenvalues of the transfer matrices $T_{1}^{\left(
K\right) }(\lambda )$ and $T_{2}^{\left( K\right) }(\lambda )$,
respectively, associated to the same eigenstate $|t\rangle $.

Let us now prove the reverse statement, i.e. we assume that $t_{1}(\lambda )$
is eigenvalue of the transfer matrix $T_{1}^{\left( K\right) }(\lambda )$
and we want to show that there exists a polynomial $\varphi _{t}(\lambda )$
which satisfies with $t_{1}(\lambda )$ and $t_{2}(\lambda )$ the functional
equation. Here, we characterize $\varphi _{t}(\lambda )$ by imposing that it
satisfies the following set of conditions:%
\begin{equation}
\gamma (\xi _{a})\frac{\varphi _{t}(\xi _{a}-\eta )}{\varphi _{t}(\xi _{a})}%
=t_{1}(\xi _{a}).  \label{Discrete-Ch-Q}
\end{equation}%
The fact that these relations characterize uniquely a polynomial of the form $\left( %
\ref{Phi-form}\right) $ can be shown just following the same steps given in
the Y($gl_2$) case. Let us show that this characterization of $%
\varphi _{t}(\lambda )$ implies that the functional equation is indeed
satisfied. The functional equation is a polynomial in $\lambda $ with maximal degree 4$N$, so to show that it holds we have just to prove that it is satisfied in 4$%
N$ distinct points as the leading coefficient is zero by the
choice of $\gamma _{0}$ to be a nonzero eigenvalue of $K$. We use the following 4$%
N$ points $\xi _{a}+k_{a}\eta $, for any $a\in \{1,...,N\}$
and $k_{a}\in \left\{ -1,0,1,2\right\} $. Indeed, for $\lambda =\xi
_{a}-\eta $ it holds:%
\begin{equation}
\alpha (\xi _{a}-\eta )=\beta (\xi _{a}-\eta )=\gamma (\xi _{a}-\eta )=\text{%
det}M_{a}^{(K)}(\xi _{a}-3\eta )=0,
\end{equation}%
from which the functional equation is satisfied for any $a\in \{1,...,%
N\}$ and in the remaining 3$N$ points the functional
equation reduces to the 3$N$ equations \rf{Ch-1-Q}, \rf{Ch-2-Q} and \rf{Ch-3-Q} which, thanks to the fusion equations,
satisfied by the transfer matrix eigenvalues, are all equivalent to the
discrete characterization $\left( \ref{Discrete-Ch-Q}\right) $ so that our
statement holds.

Finally, let us show that the SoV characterization of the transfer matrix
eigenvector associated to the eigenvalue $t_{1}(\lambda )$ is equivalent to
the one presented in this theorem. We have just to remark that renormalizing
the eigenvector $|t\rangle $ multiplying it by the non-zero product of the $%
\varphi _{t}^{2}(\xi _{a})$ over all the $a\in \{1,...,N\}$ we get: 
\begin{equation}
\prod_{a=1}^{N}\varphi _{t}^{2}(\xi _{a})\prod_{a=1}^{N%
}t_{1}^{h_{a}}(\xi _{a})\overset{\left( \ref{Discrete-Ch-Q}\right) }{=}%
\prod_{a=1}^{N}\gamma ^{h_{a}}(\xi _{a})\varphi _{t}^{h_{a}}(\xi
_{a}-\eta )\varphi _{t}^{2-h_{a}}(\xi _{a}).
\end{equation}
\end{proof}

\subsection{Algebraic Bethe ansatz rewriting of transfer matrix eigenvectors}

It is easy to see that the previous SoV representation of the transfer
matrix eigenvectors admit an equivalent rewriting of Algebraic Bethe Ansatz
type. For this let us first remark that there exists one eigenvector of the transfer matrix $%
T_{1}^{\left( K\right) }(\lambda )$ and $T_{2}^{\left( K\right) }(\lambda )$
which corresponds to the constant solution of the quantum spectral curve
equation.

\begin{lemma}
Let $K$ a generic $3\times 3$ matrix and let us denote with $K_{J}$ its
Jordan form:%
\begin{equation}
K=W_{K}K_{J}W_{K}^{-1}\text{ \ with \ }K_{J}=\left( 
\begin{array}{ccc}
\mathsf{k}_{0} & \mathsf{y}_{1} & 0 \\ 
0 & \mathsf{k}_{1} & \mathsf{y}_{2} \\ 
0 & 0 & \mathsf{k}_{2}%
\end{array}%
\right) ,
\end{equation}%
where we assume that $\mathsf{k}_{0}\neq 0$, then:%
\begin{equation}
|t_{0}\rangle =\Gamma _{W}\bigotimes_{a=1}^{N}\left( 
\begin{array}{c}
1 \\ 
0 \\ 
0%
\end{array}%
\right) _{a}\text{ \ with \ }\Gamma _{W}=\bigotimes_{a=1}^{N}W_{K,a}
\end{equation}%
is a common eigenstate of the transfer matrices $T_{1}^{\left( K\right)
}(\lambda )$ and $T_{2}^{\left( K\right) }(\lambda )$:%
\begin{align}
T_{1}^{\left( K\right) }(\lambda )|t_{0}\rangle =|t_{0}\rangle
t_{1,0}(\lambda )\text{ \ with }&t_{1,0}(\lambda )=\mathsf{k}_{0}\prod_{a=1}^{%
N}(\lambda -\xi _{a}+\eta )+(\mathsf{k}_{1}+\mathsf{k}%
_{2})\prod_{a=1}^{N}(\lambda -\xi _{a}), \\
T_{2}^{\left( K\right) }(\lambda )|t_{0}\rangle =|t_{0}\rangle
t_{2,0}(\lambda )\text{ \ with }&\nonumber \\
t_{2,0}(\lambda )=\prod_{a=1}^{N%
}(\lambda -\xi _{a})& ( \mathsf{k}_{1}\mathsf{k}_{2}\prod_{a=1}^{\mathsf{N%
}}(\lambda -\xi _{a}+\eta )+(\mathsf{k}_{1}\mathsf{k}_{0}+\mathsf{k}_{2}%
\mathsf{k}_{0})\prod_{a=1}^{N}(\lambda -\xi _{a})) ,
\end{align}%
and $t_{1,0}(\lambda )$ and $t_{2,0}(\lambda )$ satisfy the quantum spectral
curve with constant $\varphi _{t}(\lambda )$:%
\begin{equation}
\alpha (\lambda )-\beta (\lambda )t_{1,0}(\lambda -2\eta )+\gamma (\lambda
)t_{2,0}(\lambda -2\eta )-q\text{-det}M_{a}^{(K)}(\lambda -2\eta )=0,
\end{equation}%
and with $\gamma _{0}=\mathsf{k}_{0}.$
\end{lemma}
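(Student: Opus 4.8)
The plan is to pass to the Jordan frame of $K$ by a local gauge transformation, carry out a standard reference-state computation of the two transfer matrices there, and then read off the quantum spectral curve from the results already established in this section. First I would use the $GL_3$ symmetry of the $R$-matrix: specializing the scalar Yang--Baxter relation $R_{12}(\lambda)K_1K_2=K_2K_1R_{12}(\lambda)$ to $K=W_K$ shows that $R_{a,n}$ commutes with $W_{K,a}W_{K,n}$, hence $W_{K,a}^{-1}R_{a,n}W_{K,a}=W_{K,n}R_{a,n}W_{K,n}^{-1}$. Inserting $W_{K,a}W_{K,a}^{-1}$ around each Lax factor inside the auxiliary trace and moving the quantum-space factors out of it gives
\begin{equation}
T_1^{(K)}(\lambda)=\Gamma_W\,T_1^{(K_J)}(\lambda)\,\Gamma_W^{-1},\qquad T_2^{(K)}(\lambda)=\Gamma_W\,T_2^{(K_J)}(\lambda)\,\Gamma_W^{-1},
\end{equation}
the second identity following by the same manipulation, since $T_2^{(K)}$ is built from $M_a^{(K)}$ and the $GL_3$-equivariant antisymmetric projector $P^-$, which commutes with $W_{K,a}W_{K,b}W_{K,c}$. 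Writing $|t_0\rangle=\Gamma_W|\Omega\rangle$ with $|\Omega\rangle=\bigotimes_{a=1}^N(1,0,0)^t_a$, the claim reduces to showing that $|\Omega\rangle$ is a common eigenvector of $T_1^{(K_J)}(\lambda)$ and $T_2^{(K_J)}(\lambda)$ with the stated eigenvalues.

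Next I would run the usual pseudo-vacuum computation. On the local highest weight the Lax matrix $R_{a,n}(\lambda-\xi_n)=(\lambda-\xi_n)I+\eta\mathbb{P}_{a,n}$ is upper triangular in the auxiliary space with diagonal $(\lambda-\xi_n+\eta,\lambda-\xi_n,\lambda-\xi_n)$, so the ordered product $R_{a,N}(\lambda-\xi_N)\cdots R_{a,1}(\lambda-\xi_1)$ acts triangularly on $|\Omega\rangle$: its lower auxiliary entries $C_i(\lambda)$ annihilate $|\Omega\rangle$, while its diagonal entries obey $A_1(\lambda)|\Omega\rangle=\prod_n(\lambda-\xi_n+\eta)|\Omega\rangle$ and $A_2(\lambda)|\Omega\rangle=A_3(\lambda)|\Omega\rangle=\prod_n(\lambda-\xi_n)|\Omega\rangle$. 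Since $T_1^{(K_J)}(\lambda)=\text{tr}_a[(K_J)_a R_{a,N}(\lambda-\xi_N)\cdots R_{a,1}(\lambda-\xi_1)]=\mathsf{k}_0A_1+\mathsf{k}_1A_2+\mathsf{k}_2A_3+\mathsf{y}_1C_1+\mathsf{y}_2C_3$, the off-diagonal terms drop out and one obtains exactly $T_1^{(K_J)}(\lambda)|\Omega\rangle=t_{1,0}(\lambda)|\Omega\rangle$, hence $T_1^{(K)}(\lambda)|t_0\rangle=t_{1,0}(\lambda)|t_0\rangle$. For $T_2$ I would avoid evaluating the antisymmetric fusion directly and instead feed this into the interpolation formula of the Corollary: because $|t_0\rangle$ is an eigenvector of every $T_1^{(K)}(\mu)$, it is automatically an eigenvector of $T_2^{(K)}(\lambda)$, with eigenvalue equal to the function $t_2(\lambda)$ of \eqref{function t2-def} evaluated at $t_1=t_{1,0}$. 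That this coincides with the claimed product is a polynomial identity in degree $2N$, verified by matching the $N$ central zeros at the $\xi_a$, the fusion values $t_{1,0}(\xi_a-\eta)t_{1,0}(\xi_a)$ at the points $\xi_a-\eta$, and the leading coefficient $\tfrac{(\text{tr}K)^2-\text{tr}K^2}{2}=\mathsf{k}_0\mathsf{k}_1+\mathsf{k}_0\mathsf{k}_2+\mathsf{k}_1\mathsf{k}_2$.

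Finally, for the quantum spectral curve I would simply verify that the constant $\varphi_t\equiv1$ solves it with $\gamma_0=\mathsf{k}_0$. The choice $\gamma_0=\mathsf{k}_0$ annihilates the leading coefficient through \eqref{Ch-Eq-K}; at $\lambda=\xi_a-\eta$ all four coefficients vanish, so the equation is trivially satisfied; and at the remaining $3N$ points $\xi_a,\xi_a+\eta,\xi_a+2\eta$ it collapses, exactly as in the preceding theorem, to the discrete characterization \eqref{Discrete-Ch-Q}, which for a constant $\varphi_t$ reduces to $\gamma(\xi_a)=t_{1,0}(\xi_a)$. This holds at once, since the factor $\prod_b(\xi_a-\xi_b)$ multiplying $(\mathsf{k}_1+\mathsf{k}_2)$ in $t_{1,0}(\xi_a)$ vanishes, leaving $t_{1,0}(\xi_a)=\mathsf{k}_0\prod_b(\xi_a-\xi_b+\eta)=\gamma(\xi_a)$. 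The main obstacle I anticipate is the $T_2$ step: a fully direct evaluation of the antisymmetric fusion on $|\Omega\rangle$ through quantum minors is delicate, so routing it through the interpolation formula---together with the vanishing-at-coincident-points identity that makes the spectral-curve check collapse to a one-line verification---is the cleanest route.
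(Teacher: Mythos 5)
Your proposal is correct and in fact supplies the details that the paper compresses into ``the proof of this statement is standard''. The paper's own proof has the same skeleton: it reduces to the Jordan twist $K_J$ (leaving the conjugation $T_i^{(K)}(\lambda)=\Gamma_W T_i^{(K_J)}(\lambda)\Gamma_W^{-1}$ implicit, although the analogous identity is written out in the $gl_2$ section), quotes the triangularity relations $A_i^{(I)}(\lambda)|0\rangle=|0\rangle\prod_{a=1}^{N}(\lambda-\xi_a+\delta_{i,1}\eta)$ and $C_i^{(I)}(\lambda)|0\rangle=0$, and then checks the fusion relations and asymptotics of $t_{1,0}$, $t_{2,0}$ by direct computation. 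You differ in two worthwhile ways. First, you obtain the $T_2$ eigenvalue from the interpolation corollary expressing $T_2^{(K)}(\lambda)$ through $T_{2,\mathbf{h}=\mathbf{1}}^{(K,\infty )}(\lambda)$ and the products $T_1^{(K)}(\xi_a-\eta)T_1^{(K)}(\xi_a)$: this bypasses the antisymmetric fusion entirely, and incidentally makes your equivariance argument for $P^{-}$ unnecessary, since that corollary already gives $T_2^{(K)}$ as a function of the family $T_1^{(K)}$. Second, for the spectral curve the paper argues in the opposite direction: it invokes the preceding theorem to guarantee a unique $\varphi_t$ for the eigenvalue $t_{1,0}$, and shows it is constant because $t_{1,0}(\xi_a)=\gamma(\xi_a)$ forces $\varphi_t(\xi_a)=\varphi_t(\xi_a-\eta)$, the degree $N-1$ polynomial $\varphi_t(\lambda)-\varphi_t(\lambda-\eta)$ then vanishing at $N$ points. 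Your direct substitution of $\varphi_t\equiv 1$, with the leading coefficient killed by $\gamma_0=\mathsf{k}_0$ via \eqref{Ch-Eq-K} and vanishing checked at the $4N$ points $\xi_a+k\eta$, $k\in\{-1,0,1,2\}$, is equally valid and rests on the same collapse to $\gamma(\xi_a)=t_{1,0}(\xi_a)$.

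One of your steps, however, cannot close literally as you assert it: the degree-$2N$ matching of the interpolated eigenvalue against ``the claimed product'', because the formula for $t_{2,0}$ printed in the statement contains a misprint. Your own checklist detects it at the points $\xi_c-\eta$: the printed formula gives $t_{2,0}(\xi_c-\eta)=\mathsf{k}_0(\mathsf{k}_1+\mathsf{k}_2)\bigl[\prod_{a}(\xi_c-\xi_a-\eta)\bigr]^2$, whereas the fusion value is $t_{1,0}(\xi_c-\eta)\,t_{1,0}(\xi_c)=\mathsf{k}_0(\mathsf{k}_1+\mathsf{k}_2)\prod_{a}(\xi_c-\xi_a-\eta)\prod_{a}(\xi_c-\xi_a+\eta)$; already for $N=1$ these differ by a sign. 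The eigenvalue that your three conditions actually produce (zeros at the $\xi_a$, fusion values at the $\xi_a-\eta$, leading coefficient $\tfrac{(\tr K)^2-\tr K^2}{2}$, which fix a degree-$2N$ polynomial uniquely since the $2N$ interpolation points are pairwise distinct under the inhomogeneity condition) is
\begin{equation}
t_{2,0}(\lambda)=\prod_{a=1}^{N}(\lambda-\xi_a)\Bigl(\mathsf{k}_1\mathsf{k}_2\prod_{a=1}^{N}(\lambda-\xi_a+\eta)+\mathsf{k}_0(\mathsf{k}_1+\mathsf{k}_2)\prod_{a=1}^{N}(\lambda-\xi_a+2\eta)\Bigr),
\end{equation}
i.e.\ the last product must carry the shift $2\eta$. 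Only this corrected form satisfies the first fusion identity with the quantum determinant as printed in section 5.1, and only it satisfies the lemma's own spectral-curve claim (e.g.\ at $N=1$, $\lambda=\eta$ the printed form leaves a nonzero remainder $4\eta^3\mathsf{k}_0^2(\mathsf{k}_1+\mathsf{k}_2)$). So your method is sound --- indeed it is precisely the method that exposes the misprint --- but you should state the corrected $t_{2,0}$, or flag the discrepancy, rather than assert coincidence with the formula as printed.
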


\begin{proof}
The proof of the statement is done proving that for the transfer matrices $%
T_{1}^{\left( K_{J}\right) }(\lambda )$ and $T_{2}^{\left( K_{J}\right)
}(\lambda )$ the vector:%
\begin{equation}
|0\rangle =\bigotimes_{a=1}^{N}\left( 
\begin{array}{c}
1 \\ 
0 \\ 
0%
\end{array}%
\right)
\end{equation}%
is eigenvector with eigenvalues $t_{1,0}(\lambda )$ and $t_{2,0}(\lambda )$,
respectively. The proof of this statement is standard and done proving that
it holds:%
\begin{equation}
A_{i}^{(I)}(\lambda )|0\rangle =|0\rangle \prod_{a=1}^{N}(\lambda
-\xi _{a}+\delta _{i,1}\eta ),\text{ \ }C_{i}^{(I)}(\lambda )|0\rangle =0,%
\text{ \ }i\in \{1,2,3\}.
\end{equation}%
It is interesting to remark that it is simple to verify by direct
computation that the $t_{1,0}(\lambda )$ and $t_{2,0}(\lambda )$ satisfies
the fusion equations $(\ref{scalar-fusion-1})$ and that it holds:%
\begin{eqnarray}
t_{1,0} &\equiv &\lim_{\lambda \rightarrow \infty }\lambda ^{-N%
}t_{1,0}(\lambda )=\text{tr\thinspace }K, \\
t_{2,0} &\equiv &\lim_{\lambda \rightarrow \infty }\lambda ^{-2N%
}t_{2,0}(\lambda )=\frac{\left( \text{tr}K\right) ^{2}-\text{tr}K^{2}}{2},
\end{eqnarray}%
so that $t_{1,0}(\lambda )$ satisfies the SoV characterization of the
eigenvalues of $T_{1}^{\left( K_{J}\right) }(\lambda )$. Observing now that
it holds:%
\begin{equation}
t_{1,0}(\xi _{a})=\gamma (\xi _{a})\text{ \ \ for any }a\in \{1,...,\mathsf{N%
}\}
\end{equation}%
it follows that the associated $\varphi _{t}(\lambda )$ satisfies the
equations:%
\begin{equation}
\varphi _{t}(\xi _{a})=\varphi _{t}(\xi _{a}-\eta )\text{ \ \ for any }a\in
\{1,...,N\}
\end{equation}%
and so $\varphi _{t}(\lambda )$ is constant. Indeed, let us define:%
\begin{equation}
\bar{\varphi}_{t}(\lambda )=\varphi _{t}(\lambda )-\varphi _{t}(\lambda
-\eta )
\end{equation}%
this is a degree $N-1$ polynomial in $\lambda $ which is zero in $%
N$ different points so that it is identically zero.
\end{proof}

Let us now denote by $\mathbb{B}^{(K)}\left( \lambda \right) $ the one
parameter family of commuting operators characterized by:%
\begin{equation}
\langle h_{1},...,h_{N}|\mathbb{B}^{(K)}\left( \lambda \right) =%
\text{ }b_{h_{1},...,h_{N}}(\lambda )\langle h_{1},...,h_{N%
}|,
\end{equation}%
where we have defined:%
\begin{equation}
b_{h_{1},...,h_{N}}(\lambda )=\prod_{a=1}^{N}(\lambda -\xi
_{a})^{2-h_{a}}(\lambda -\xi _{a}+\eta )^{h_{a}},
\end{equation}%
then the following corollary holds.

\begin{corollary}
Let $t_{1}(\lambda )\in \Sigma _{T_{1}}$ then the associated eigenvector $%
|t\rangle $ admits the following Algebraic Bethe Ansatz type formulation:%
\begin{equation}
|t\rangle =\prod_{a=1}^{\mathsf{M}}\mathbb{B}^{(K)}(\lambda
_{a})|t_{0}\rangle \text{\ \ with }\mathsf{M}\leq N\text{ and }%
\lambda _{a}\neq \xi _{n}\text{ }\forall (a,n)\in \{1,...,\mathsf{M}\}\times
\{1,...,N\},
\end{equation}%
where the $\lambda _{a}$ are the roots of the polynomial $\varphi
_{t}(\lambda )$ which satisfies with $t_{1}(\lambda )$ the third order
Baxter's like functional equation.
\end{corollary}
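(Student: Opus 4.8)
The plan is to use the fact that $\mathbb{B}^{(K)}(\lambda)$ is by construction diagonal in the left SoV covector basis, and then to match the resulting components against the separate wave-function of $|t\rangle$ furnished by the quantum spectral curve theorem. First I would record the SoV components of the reference state $|t_0\rangle$. By the preceding Lemma its associated polynomial $\varphi_{t_0}(\lambda)$ is constant, say equal to some $c\neq 0$, so the general separate representation $\langle h_1,\dots,h_N|t\rangle=\prod_{n=1}^N\gamma^{h_n}(\xi_n)\varphi_t^{h_n}(\xi_n-\eta)\varphi_t^{2-h_n}(\xi_n)$ specializes to
\[
\langle h_1,\dots,h_N|t_0\rangle=c^2\prod_{n=1}^N\gamma^{h_n}(\xi_n),
\]
i.e. up to the overall nonzero scalar $c^2$ the components are simply $\prod_{n=1}^N\gamma^{h_n}(\xi_n)$.

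Next, since the $\mathbb{B}^{(K)}(\lambda_a)$ commute and act from the right on each $\langle h_1,\dots,h_N|$ by the scalar $b_{h_1,\dots,h_N}(\lambda_a)$, the covectors are common left eigenvectors of the whole family, whence
\[
\langle h_1,\dots,h_N|\prod_{a=1}^{\mathsf{M}}\mathbb{B}^{(K)}(\lambda_a)|t_0\rangle=\Big(\prod_{a=1}^{\mathsf{M}}b_{h_1,\dots,h_N}(\lambda_a)\Big)\langle h_1,\dots,h_N|t_0\rangle.
\]
The core computation is to rewrite the product of $b$'s. Inserting the definition of $b_{h_1,\dots,h_N}$ and interchanging the products over $a$ and $n$ gives
\[
\prod_{a=1}^{\mathsf{M}}b_{h_1,\dots,h_N}(\lambda_a)=\prod_{n=1}^N\Big(\prod_{a=1}^{\mathsf{M}}(\lambda_a-\xi_n)\Big)^{2-h_n}\Big(\prod_{a=1}^{\mathsf{M}}(\lambda_a-\xi_n+\eta)\Big)^{h_n}.
\]
Since $\varphi_t(\lambda)=\prod_{a=1}^{\mathsf{M}}(\lambda-\lambda_a)$, one has $\prod_{a}(\lambda_a-\xi_n)=(-1)^{\mathsf{M}}\varphi_t(\xi_n)$ and $\prod_{a}(\lambda_a-\xi_n+\eta)=(-1)^{\mathsf{M}}\varphi_t(\xi_n-\eta)$, and for each $n$ the signs collect as $(-1)^{\mathsf{M}(2-h_n)+\mathsf{M}h_n}=(-1)^{2\mathsf{M}}=1$, leaving
\[
\prod_{a=1}^{\mathsf{M}}b_{h_1,\dots,h_N}(\lambda_a)=\prod_{n=1}^N\varphi_t^{2-h_n}(\xi_n)\varphi_t^{h_n}(\xi_n-\eta).
\]

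Multiplying by the components of $|t_0\rangle$ then yields, up to the overall constant $c^2$,
\[
\langle h_1,\dots,h_N|\prod_{a=1}^{\mathsf{M}}\mathbb{B}^{(K)}(\lambda_a)|t_0\rangle\ \propto\ \prod_{n=1}^N\gamma^{h_n}(\xi_n)\varphi_t^{h_n}(\xi_n-\eta)\varphi_t^{2-h_n}(\xi_n),
\]
which is precisely the separate wave-function of $|t\rangle$ established in the quantum spectral curve theorem. Because the covectors $\langle h_1,\dots,h_N|$ form a basis of $\mathcal{H}^*$, two vectors with identical components in this basis coincide up to a scalar, so $\prod_{a=1}^{\mathsf{M}}\mathbb{B}^{(K)}(\lambda_a)|t_0\rangle$ is proportional to $|t\rangle$, which is the assertion. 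I do not expect a genuine obstacle here: everything reduces to the diagonal action of $\mathbb{B}^{(K)}$ together with the bookkeeping of the products and signs. The only points needing care are the constancy of $\varphi_{t_0}$, which is already supplied by the Lemma, and the fact that the common proportionality constant $c^2$ is nonzero, so that the resulting vector is indeed nonzero and the identification up to normalization is legitimate.
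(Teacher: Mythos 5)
Your proof is correct and follows essentially the same route as the paper's: both rest on the diagonal action of $\mathbb{B}^{(K)}(\lambda)$ on the SoV covectors, the identity $\langle h_{1},...,h_{N}|t_{0}\rangle=\prod_{a=1}^{N}\gamma^{h_{a}}(\xi_{a})$ (which the paper obtains from $t_{1,0}(\xi_{a})=\gamma(\xi_{a})$ and you obtain equivalently from the constancy of $\varphi_{t_{0}}$), and the rewriting $\prod_{j=1}^{\mathsf{M}}b_{h_{1},...,h_{N}}(\lambda_{j})=\prod_{n=1}^{N}\varphi_{t}^{2-h_{n}}(\xi_{n})\,\varphi_{t}^{h_{n}}(\xi_{n}-\eta)$, with the matching against the separate representation of $|t\rangle$ and the completeness of the covector basis closing the argument. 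Your explicit check that the signs $(-1)^{\mathsf{M}(2-h_{n})+\mathsf{M}h_{n}}=1$ cancel is a detail the paper leaves implicit, but it is the same computation.
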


\begin{proof}
Let us observe that we have:%
\begin{eqnarray}
\langle h_{1},...,h_{N}|\prod_{a=1}^{\mathsf{M}}\mathbb{B}%
^{(K)}(\lambda _{a})|t_{0}\rangle &=&\prod_{j=1}^{\mathsf{M}}b_{h_{1},...,h_{%
N}}(\lambda _{j})\text{ }\langle h_{1},...,h_{N%
}|t_{0}\rangle \\
&=&\prod_{j=1}^{\mathsf{M}}\prod_{a=1}^{N}(\lambda _{j}-\xi
_{a})^{2-h_{a}}(\lambda _{j}-\xi _{a}+\eta )^{h_{a}}\prod_{a=1}^{N%
}\gamma ^{h_{a}}(\xi _{a}) \\
&=&\prod_{a=1}^{N}\gamma ^{h_{a}}(\xi _{a})\varphi _{t}(\xi
_{a})^{2-h_{a}}\varphi _{t}(\xi _{a}-\eta )^{h_{a}},
\end{eqnarray}%
where we have used that:%
\begin{equation}
\langle h_{1},...,h_{N}|t_{0}\rangle =\prod_{a=1}^{N%
}\gamma ^{h_{a}}(\xi _{a})
\end{equation}%
which coincides with the last SoV characterization of the same transfer
matrix eigenstate.
\end{proof}

\section{Conclusions and perspectives}

We have shown that the construction of separate basis for transfer matrices of quantum integrable lattice models can be achieved using the new paradigm given by \eqref{sb1}. It sheds a completely new light on the notion of quantum integrability itself. Indeed, as soon as the transfer matrix possess the $w$-simplicity property, which seems to be a quite widely shared property, the construction of a separate basis along these lines can be performed. What remains however non trivial is the possibility to get close relations that characterize the transfer matrix eigenvalues in the form of a quantum spectral curve equation of minimal degree. In all the examples we have been able to look at so far, the needed information is provided by the fusion relations among the tower of transfer matrices. These fusion relations are themselves  direct consequences of the structure of the $R$-matrix governing the Yang-Baxter algebra and of the representation theory for it.  Looking at this feature from a different point of view should be fruitful for the future investigations of our new separation of variable method. What seems in turn to be of fundamental importance for obtaining constraints that characterize the full spectrum of the transfer matrices is to unravel the structure of the commutative (and associative) algebra of conserved charges, namely, to get the explicit structure constants of such an algebra. The fusion relations are just one way to get this kind of information but we believe that there should be more algebraic ways to consider this problem; for example starting directly from general properties of the associated quantum groups. It should be interesting for example to test these ideas in the case of the Gaudin models and to understand the relation with the geometrical construction of eigenvectors presented in \cite{MukTV09}, see also \cite{MukTV09a,MukTV09c,MukTV11}. It should also be interesting to make contact with the approach of \cite{MauO12}. Another important direction is to use the concepts and ideas developed in the present paper for quantum integrable field theories. Let us also make comments in a quite different direction interesting for statistical physics.  The separate basis \eqref{sb1} is in our opinion of great interest for the considerations of Quench dynamics, so-called generalized Gibbs ensembles and generalized hydrodynamics equations in integrable models, see e.g. \cite{WouDNBFRC14,IliDNWCEP15,BerCDNF16,CasADY16,DoyY17}, and references therein. The first remark is that the existence of a basis \eqref{sb1} should give the right criterion to determine which set of conserved charges should be considered. They should be related to the one necessary  to construct a basis like  \eqref{sb1} having the minimal dimension for each $h_j$, namely associated to the prime decomposition of the dimension $d$ of the space of states of the model one consider.  In the forthcoming articles we will first explain how our new scheme applies to important classes of integrable quantum models. These are the models associated to fundamental representations of the Yang-Baxter and reflection algebra for the Yangian $Y(gl_{n})$, the quantum group $U_{q}(gl_{n})$ and the t-J model. We have also applied our method successfully to models associated to cyclic and higher spin representations. The next step for these models, beyond the complete resolution of their spectrum concerns their dynamical properties, namely the computation of their form factors and correlation functions. This amounts first to compute scalar products of states within this new method. We plan to address these important issues in the near future.

\section*{Acknowledgements}
J. M. M. and G. N.  are supported by CNRS and ENS de Lyon.



\appendix
\section{Comparison with Sklyanin SoV for the $Y(gl_3)$ case}

In this appendix we compare our construction of the SoV basis with the one
proposed by Sklyanin for the quantum model associated to the fundamental representation of  $Y(gl_3)$. As this analysis  does not play any
direct role in our SoV construction and associated results, we restrict here
to chains with a small finite number of sites. More
precisely, the claims in the following on the $\mathcal{B}^{(K)}$%
-Sklyanin and $\mathcal{A}^{(K)}$-Sklyanin operators are based on direct
proofs, i.e. the statements are proven valid for general values of the
parameters (inhomogeneities and $K$-matrix entries) by direct verifications by
using Mathematica for chains up to 3 sites. It is natural to believe that
these claims should be true for chains of any size and this can be seen as
conjectures whose mathematical proof can be interesting and to which we will come back in the future.

Let us recall that Sklyanin has introduced the following two operators:%
\begin{eqnarray}
\mathcal{B}^{(K)}(\lambda ) &=&B_{3}^{(K)}(\lambda )\mathsf{C}%
_{2}^{(K)}(\lambda -\eta )-B_{2}^{(K)}(\lambda )\mathsf{C}_{3}^{(K)}(\lambda
-\eta ), \\
\mathcal{A}^{(K)}(\lambda ) &=&-\left[ B_{3}^{(K)}(\lambda -\eta )\right]
^{-1}\mathsf{C}_{2}^{(K)}(\lambda -\eta ),
\end{eqnarray}%
where we use the notations:%
\begin{equation}
U_{a}^{(K)}(\lambda )=\left( 
\begin{array}{ccc}
\mathsf{A}_{1}^{(K)}(\lambda ) & \mathsf{B}_{1}^{(K)}(\lambda ) & \mathsf{B}%
_{2}^{(K)}(\lambda ) \\ 
\mathsf{C}_{1}^{(K)}(\lambda ) & \mathsf{A}_{2}^{(K)}(\lambda ) & \mathsf{B}%
_{3}^{(K)}(\lambda ) \\ 
\mathsf{C}_{2}^{(K)}(\lambda ) & \mathsf{C}_{3}^{(K)}(\lambda ) & \mathsf{A}%
_{3}^{(K)}(\lambda )%
\end{array}%
\right) _{a},
\end{equation}%
which respectively should generate the separate variables for the transfer
matrices and the shift operators on the separate variables spectrum.
Moreover, the separate relations for the spectral problem of the transfer
matrix should be the quantum spectral curve analog computed along the
separate variables spectrum.

Indeed, Sklyanin has proven the following identities: 
\begin{equation}
(\lambda -\mu )\mathcal{A}^{(K)}(\lambda )\mathcal{B}^{(K)}(\mu )=(\lambda
-\mu -\eta )\mathcal{B}^{(K)}(\mu )\mathcal{A}^{(K)}(\lambda )+\mathcal{B}%
^{(K)}(\lambda )\Xi _{1}^{(K)}(\lambda ,\mu )  \label{Skly-shift}
\end{equation}%
for the shift operator and%
\begin{align}
& \mathcal{A}^{(K)}(\lambda )\mathcal{A}^{(K)}(\lambda -\eta )\mathcal{A}%
^{(K)}(\lambda -2\eta )-\mathcal{A}^{(K)}(\lambda )\mathcal{A}^{(K)}(\lambda
-\eta )T_{1}^{(K)}(\lambda -2\eta )+  \notag \\
& \mathcal{A}^{(K)}(\lambda )T_{2}^{(K)}(\lambda -2\eta )-q\text{-det}%
M^{(K)}(\lambda -2\eta )\left. =\right. \mathcal{B}^{(K)}(\lambda )\Xi
_{2}^{(K)}(\lambda )  \label{Skly-Q-Sp-Curve}
\end{align}%
for the quantum spectral curve, the operators $\Xi _{1}^{(K)}(\lambda ,\mu )$
and $\Xi _{1}^{(K)}(\lambda ,\mu )$ have the following explicit formulae:%
\begin{align}
\Xi _{1}^{(K)}(\lambda ,\mu )& =\eta \mathcal{A}^{(K)}(\mu )\left[
B_{3}^{(K)}(\lambda )B_{3}^{(K)}(\lambda -\eta )\right] ^{-1}B_{3}^{(K)}(\mu
-\eta )B_{3}^{(K)}(\mu ), \\
\Xi _{2}^{(K)}(\lambda )& =\left[ B_{3}^{(K)}(\lambda )B_{3}^{(K)}(\lambda
-\eta )B_{3}^{(K)}(\lambda -2\eta )\right] ^{-1} \\
& \times \left[ C_{1}^{(K)}(\lambda -2\eta )\mathsf{C}_{3}^{(K)}(\lambda
-2\eta )-B_{3}^{(K)}(\lambda -2\eta )\mathsf{B}_{1}^{(K)}(\lambda -2\eta )%
\right] .
\end{align}%
Here, the main problem is that independently from the choice of the matrix $K $ our observation is that some zeros of $\mathcal{B}^{(K)}(\lambda )$ coincide with 
singularities of the operators $\Xi _{1}^{(K)}(\lambda ,\mu )$ and $\Xi
_{2}^{(K)}(\lambda )$, so that the r.h.s of the equations $(\ref{Skly-shift}%
) $ and $(\ref{Skly-Q-Sp-Curve})$ are nonzero in some points of the spectrum of the
zeros of $\mathcal{B}^{(K)}(\lambda )$. Hence, they do not imply the desired
shift and spectral curve equations. So that for the representation under consideration the operator $\mathcal{A}^{(K)}$ of Sklyanin does not seem to produce the right shift operator on the $\mathcal{B}^{(K)}$ spectrum. Recently however, there appeared an interesting article \cite{DerV18} using the same structure for the $sl_3$ non-compact case. We do not know if a problem similar to the one we discuss here could also be present there, the question being if  in the non-compact case the spectrum of the denominators in the above equations have points in common with the $\mathcal{B}^{(K)}$ spectrum. It would be interesting to clarify this point and also to investigate the method we propose here in this non-compact situation. In the fundamental representation we consider here the following statements holds for chain up to $N=3$ sites, they have been verified by symbolic computations using Mathematica:

\begin{property}\label{Property1}
Let us consider a chain with $N\leq 3$ sites then the following statements holds. Let us write explicitly the matrix:%
\begin{equation}
K=\left( 
\begin{array}{ccc}
k_{1} & k_{2} & k_{3} \\ 
k_{4} & k_{5} & k_{6} \\ 
k_{7} & k_{8} & k_{9}%
\end{array}%
\right) ,
\end{equation}%
then under the condition:%
\begin{equation}\label{cond-B-Skly-diag}
\kappa _{K}\equiv
k_{1}k_{3}k_{6}-k_{3}k_{5}k_{6}-k_{3}^{2}k_{4}+k_{2}k_{6}^{2}\neq 0
\end{equation}
and the inhomogeneity condition \rf{Inhomog-cond-gl3}, the one-parameter family $\mathcal{B}^{(K)}(\lambda )$ of commuting operators is
diagonalizable with simple spectrum. The eigenvalues of $\mathcal{B}%
^{(K)}(\lambda )$ admit the following representations:%
\begin{equation}
b_{h_{1},...,h_{N}}^{(K)}(\lambda )=\kappa _{K}b_{0}(\lambda
)\prod_{a=1}^{N}(\lambda -\xi _{a})^{2-h_{a}}(\lambda -\xi
_{a}+\eta )^{h_{a}},  \label{Skly-B-eigenV}
\end{equation}%
where for any $i\in \left\{ 1,...,N\right\} $ and $h_{i}\in \left\{
0,1,2\right\} $ and we have defined:%
\begin{equation}
b_{0}(\lambda )=\prod_{a=1}^{N}(\lambda -\xi _{a}-\eta ).
\end{equation}%
The associated covector eigenbasis of $\mathcal{B}^{(K)}(\lambda )$:%
\begin{equation*}
\langle h_{1},...,h_{N}|\mathcal{B}^{(K)}(\lambda )=b_{h_{1},...,h_{%
N}}^{(K)}(\lambda )\langle h_{1},...,h_{N}|,
\end{equation*}%
coincides with our SoV basis once we fix the form of our basis by imposing:%
\begin{equation}
\langle h_{1},...,h_{N}|= \bra{L_{1}}\prod_{n=1}^{N}T_{2}^{\left(
K\right) \delta _{h_{n},0}}(\xi _{n}-2\eta )T_{1}^{\left( K\right) \delta
_{h_{n},2}}(\xi _{n})\text{\ }\forall \text{ }h_{n}\in \{0,1,2\},
\end{equation}%
with%
\begin{equation}
\langle h_{1}=1,...,h_{N}=1|=\bra{L_{1}}\equiv \bigotimes_{a=1}^{\mathsf{N%
}}(-k_{6},k_{3},0)_{a}.  \label{Left-fix-basis}
\end{equation}%
It is interesting to note that it holds:%
\begin{align}
\langle h_{1}\left. =\right. 0,...,h_{N}\left. =\right. 0|& =2\eta
^{2N}\bigotimes_{a=1}^{N%
}(k_{6}(k_{3}k_{7}+k_{6}k_{8})-k_{9}(k_{3}k_{4}+k_{5}k_{6}),k_{9}(k_{1}k_{3}+k_{2}k_{6}) \notag\\
&-k_{3}(k_{3}k_{7}+k_{6}k_{8}),-\kappa _{K})_{a} \label{EigenCovector-0}\\
& =2\eta ^{2N}\langle h_{1}\left. =\right. 1,...,h_{N%
}\left. =\right. 1|\bigotimes_{a=1}^{N}\tilde{K}_{a},\text{ \ with }%
\tilde{K}\text{ the adjoint of }K\text{.} \\
\langle h_{1}\left. =\right. 2,...,h_{N}\left. =\right. 2|& =\eta ^{%
N}\bigotimes_{a=1}^{N%
}(k_{1}k_{6}-k_{3}k_{4},k_{2}k_{6}-k_{3}k_{5},0)_{a}=\eta \langle
h_{1}\left. =\right. 1,...,h_{N}\left. =\right. 1|\bigotimes_{a=1}^{%
N}K_{a}.  \label{EigenCovector-2}
\end{align}%
Moreover, $\mathcal{B}^{(K)}(\lambda )$ has also the following two
eigenvectors with tensor product form:%
\begin{align}
|h_{1}& =0,...,h_{N}=0\rangle =\bigotimes_{a=1}^{N}\left(
0,0,1\right) _{a}^{t_{a}},  \label{EigenVector-0} \\
|h_{1}& =2,...,h_{N}=2\rangle =\eta ^{N}\bigotimes_{a=1}^{%
N}\left( k_{3},k_{6},k_{9}\right) _{a}^{t_{a}}=\eta ^{N%
}\bigotimes_{a=1}^{N}K_{a}R_{0}.  \label{EigenVector-2}
\end{align}%
Finally, by using the operator family $\mathcal{A}^{(K)}(\lambda )$ and $%
\mathcal{D}^{(K)}(\lambda )\equiv \left[ \mathcal{A}^{(K)}(\lambda )\right]
^{-1}$we get:%
\begin{align}
\lim_{\lambda \rightarrow \xi _{n}}\langle h_{1},...,h_{n} &=0,...,h_{%
N}|\mathcal{A}^{(K)}(\lambda )=c_{h_{1},...,h_{N}}^{(0)}%
\text{det}K\,\langle h_{1},...,h_{n}=1,...,h_{N}|, \\
\lim_{\lambda \rightarrow \xi _{n}}\langle h_{1},...,h_{n} &=2,...,h_{%
N}|\mathcal{D}^{(K)}(\lambda )=c_{h_{1},...,h_{N%
}}^{(2)}\langle h_{1},...,h_{n}=1,...,h_{N}|,
\end{align}%
for any $n\in \{1,...,N\}$, $h_{j}\in \{0,1,2\}$ for $j\in \{1,...,%
N\}\backslash n$ and with $c_{h_{1},...,h_{N}}^{(i)}$ some
nonzero finite constants. However, it holds:%
\begin{align}
\lim_{\lambda \rightarrow \xi _{n},\xi _{n}\pm \eta }\langle h_{1},...,h_{n}
&=1,...,h_{N}|\mathcal{A}^{(K)}(\lambda )\neq \bar{c}%
_{h_{1},...,h_{N}}^{(0)}\langle h_{1},...,h_{n}=2,...,h_{N%
}|, \\
\lim_{\lambda \rightarrow \xi _{n},\xi _{n}\pm \eta }\langle h_{1},...,h_{n}
&=1,...,h_{N}|\mathcal{D}^{(K)}(\lambda )\neq \bar{c}%
_{h_{1},...,h_{N}}^{(0)}\langle h_{1},...,h_{n}=0,...,h_{N%
}|,
\end{align}%
for any $n\in \{1,...,N\}$, $h_{j}\in \{0,1,2\}$ for $j\in \{1,...,%
N\}\backslash n$ and any $\bar{c}_{h_{1},...,h_{N}}^{(i)}$
nonzero finite constants.
\end{property}

The above claim about the diagonalizability of the $\mathcal{B}%
^{(K)}(\lambda )$ operator are restricted to the case $\kappa _{K}\neq 0$.

\begin{property}\label{Property2}
Let us consider a chain with $N\leq 3$ sites and let us assume that the inhomogeneity condition  \rf{Inhomog-cond-gl3} holds
and that the matrix $K$ is $w$-simple then, we have the following identification:%
\begin{equation}\label{Coincidence of B}
\mathbb{B}^{(K)}\left( \lambda \right) =\mathcal{B}^{(K)}(\lambda )/(\kappa
_{K}b_{0}(\lambda ))\text{\ \ if \ }\kappa _{K}\neq 0
\end{equation}%
once we have fixed our SoV basis imposing $(\ref{Left-fix-basis})$. While if 
$\kappa _{K}=0$, then for any $\hat{K}$ similar to $K$ such that $\kappa _{%
\hat{K}}\neq 0$, we have the identification:%
\begin{equation}
\mathbb{B}^{(K)}\left( \lambda \right) =\Gamma _{W\hat{W}^{-1}}\mathcal{B}^{(%
\hat{K})}(\lambda )/(\kappa _{\hat{K}}b_{0}(\lambda ))\Gamma _{W\hat{W}%
^{-1}}^{-1}
\end{equation}%
where:%
\begin{equation}
\Gamma _{W\hat{W}^{-1}}=\bigotimes_{a=1}^{N}W_{K,a}\hat{W}_{\hat{K}%
,a}^{-1}
\end{equation}%
once we have fixed our SoV basis imposing:%
\begin{equation}
\langle h_{1}=1,...,h_{N}=1|=\bigotimes_{a=1}^{N}(-\hat{k}%
_{6},\hat{k}_{3},0)_{a}\Gamma _{W\hat{W}^{-1}}^{-1}.
\end{equation}
\end{property}

It is a natural conjecture that the Properties \ref{Property1} and \ref{Property2} holds for chains of any number of sites.

Let us comment that similar statements about the diagonalizability of the Sklyanin's $B$-operator and of the form of its eigenvalues were previously verified in \cite{GroLMS17}, always by symbolic computations in Mathematica for chains of small size, for some special class of twist $K$ matrix satisfying the condition \rf{cond-B-Skly-diag}. In \cite{GroLMS17}, it was moreover done the conjecture that transfer matrix eigenvectors have the usual algebraic Bethe ansatz form in terms of this Sklyanin's $B$-operator. Such conjecture has been recently verified in \cite{LiaS18} algebraically, mainly relying on the use of the Yang-Baxter commutation relations. 
It is then worth mentioning that if the identity \rf{Coincidence of B} is proven to hold for chains of any size then the algebraic Bethe ansatz form of the transfer matrix eigenvectors is derived in just one line proof starting from the SoV representation of these eigenvectors, as described in section 5.4.









\end{document}